\title{Unfairness Despite Awareness: Group-Fair Classification with Strategic Agents}
\author{
    Andrew Estornell\textsuperscript{\rm 1}, 
    Sanmay Das\textsuperscript{\rm 2}, 
    Yang Liu\textsuperscript{\rm 3}, 
    Yevgeniy Vorobeychik\textsuperscript{\rm 1}\\
    \small \textsuperscript{\rm 1} Washington University in Saint Louis, \textsuperscript{\rm 2} George Mason University,\\ \small \textsuperscript{\rm 3} University of California Santa Cruz
}
\date{}
\newcommand{\X}{\mathcal{X}}
\newcommand{\x}{\mathbf{x}}
\newcommand{\Y}{\mathcal{Y}}
\newcommand{\w}{\mathbf{w}}
\newcommand{\D}{\mathcal{D}}
\newcommand{\M}{\mathcal{M}}
\newcommand{\PR}{\text{PR}}
\newcommand{\TPR}{\text{TPR}}
\newcommand{\FPR}{\text{FPR}}
\newcommand{\Hp}{\mathcal{H}}
\newcommand{\vv}{\mathbf{v}}
\newcommand{\Pb}[1]{\mathbb{P}\big(#1\big)}
\newtheorem{theorem}{Theorem}
\newtheorem{definition}{Definition}
\newtheorem{lemma}{Lemma}
\begin{document}

\maketitle

\begin{abstract}
The use of algorithmic decision making systems in domains which impact the financial, social, and political well-being of people has created a demand for these decision making systems to be ``fair'' under some accepted notion of equity. 
This demand has in turn inspired a large body of work focused on the development of fair learning algorithms which are then used in lieu of their conventional counterparts.
Most analysis of such fair algorithms proceeds from the assumption that the people affected by the algorithmic decisions are represented as immutable feature vectors. However, strategic agents may possess both the ability and the incentive to manipulate this observed feature vector in order to attain a more favorable outcome.
We explore the impact that strategic agent behavior could have on fair classifiers
and derive conditions under which this behavior leads to fair classifiers becoming less fair than their conventional counterparts under the same measure of fairness that the fair classifier takes into account. These conditions are related to the the way in which the fair classifier remedies unfairness on the original unmanipulated data: fair classifiers which remedy unfairness by becoming more selective than their conventional counterparts are the ones that become less fair than their counterparts when agents are strategic.
We further demonstrate that both the increased selectiveness of the fair classifier, and consequently the loss of fairness, arises when performing fair learning on domains in which the advantaged group is overrepresented in the region near (and on the beneficial side of) the decision boundary of conventional classifiers.
Finally, we observe experimentally, using several datasets and learning methods, that this \emph{fairness reversal} is common, and that
our theoretical characterization of the fairness reversal conditions indeed holds in most such cases.
\end{abstract}

\section{Introduction}

The increasing deployment of algorithmic decision making systems in social, political, and economic domains has brought with it a demand that fairness of decisions be a central part of algorithm design. While the specific notion of fairness appropriate to a domain is often a matter of debate, several 
have come to be commonly used in prior literature, such as positive (or selection) rate and false positive rate.
A common goal in the design of fairness-aware (\emph{group-fair}) algorithms is to balance predictive efficacy (such as accurate) with achieving near-equality on a chosen fairness measure among demographic categories, such as race or gender.
A question that arises in many domains where such ``fair'' algorithms could be used is whether they are susceptible to, and create incentives for, manipulation by agents who may misrepresent themselves in order to achieve better outcomes. 


Fair classifiers are often deployed in domains where feature manipulation is possible, under some constraints or potentially at some cost to agents. 
For example, in selection of individuals to receive assistance from social service programs (e.g homelessness services), or in admission to selective educational programs, it may be possible for applicants to misreport things, such as the number of dependents, income, or other self-reported characteristics. 
Importantly, we assume that agents cannot lie about group membership itself, and that the classifier cannot use group membership in making predictions (it can only do so during training). This is in keeping with most policy frameworks where fair machine learning might be used, since explicit use of protected categories in decision-making is often illegal.

We investigate the effects of such strategic manipulation of a binary \emph{group-fair} classifier.
In the context of the social services example, the classifier's job is to determine if an applicant should, or should not, be granted assistance, and the fairness guarantee of this classifier could be approximate equality of false positive rate between male and female applicants.
Specifically, we aim to understand the circumstances under which a group-fair classifier becomes less fair than its conventional analog (i.e., a similar classifier that does not explicitly consider group fairness).
Our main high-level observation is that if a group-fair classifier achieves greater fairness by becoming more selective than a fairness-agnostic classifier, it is also more unfair than the latter when agents are strategic.
We first establish this theoretically, albeit in restricted settings, and subsequently show that this \emph{fairness reversal} property obtains for several classifiers and datasets, and our theoretical analysis helps explain why.


\smallskip
\noindent{\bf Summary of results: }
We begin by examining threshold classifiers which operate on a single predictive feature  (or an engineered score).
Surprisingly, we show that in this setting strategic manipulation leads to a group-fair classifier being \emph{less fair than its baseline counterpart} if and only if the group-fair classifier has a higher threshold (i.e., is more selective) than the baseline classifier.
Furthermore, we provide conditions on the distribution of data for this to occur.

While the multivariate case is more complex, much of the intuition carries over. 
In particular, we demonstrate that for any hypothesis class from which the baseline and group-fair classifiers are sourced, there always exists a data distribution and a cost of manipulation such that fair classifiers become less fair under strategic manipulation. Additionally we show that when the fair classifier is more selective (aas formalized below),
there exists a cost function such that the group-fair classifier becomes less fair than the base classifier in the presence of strategic agents. 
Lastly, we experimentally evaluate two state-of-the-art group-fair learning algorithms on several standard datasets and demonstrate that in both the single-  and multi-variable case, these algorithms often  indeed produce models which are less fair than their baseline counterparts when agents act strategically.

\smallskip
\noindent{\bf Selectivity and fairness: }
Fairness reversal is a consequence of group-fair models becoming more selective than their conventional counterparts.
In the case of threshold classifiers, higher selectivity means that the group-fair classifier has a higher selection threshold $\theta_F$ than the base (accuracy-maximizing) classifier ($\theta_C$).
\begin{figure}[h]
    \label{fig:intro_example}
    \centering
    \includegraphics[scale=0.28]{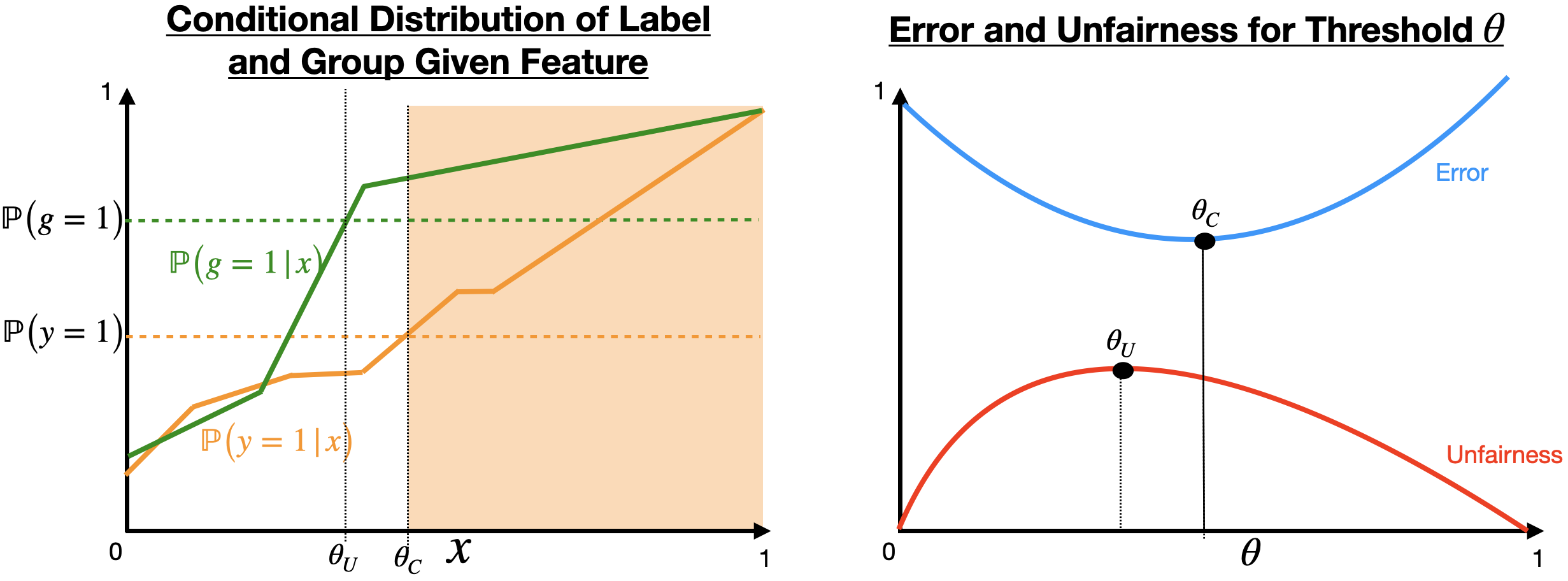}
    \includegraphics[scale=0.28]{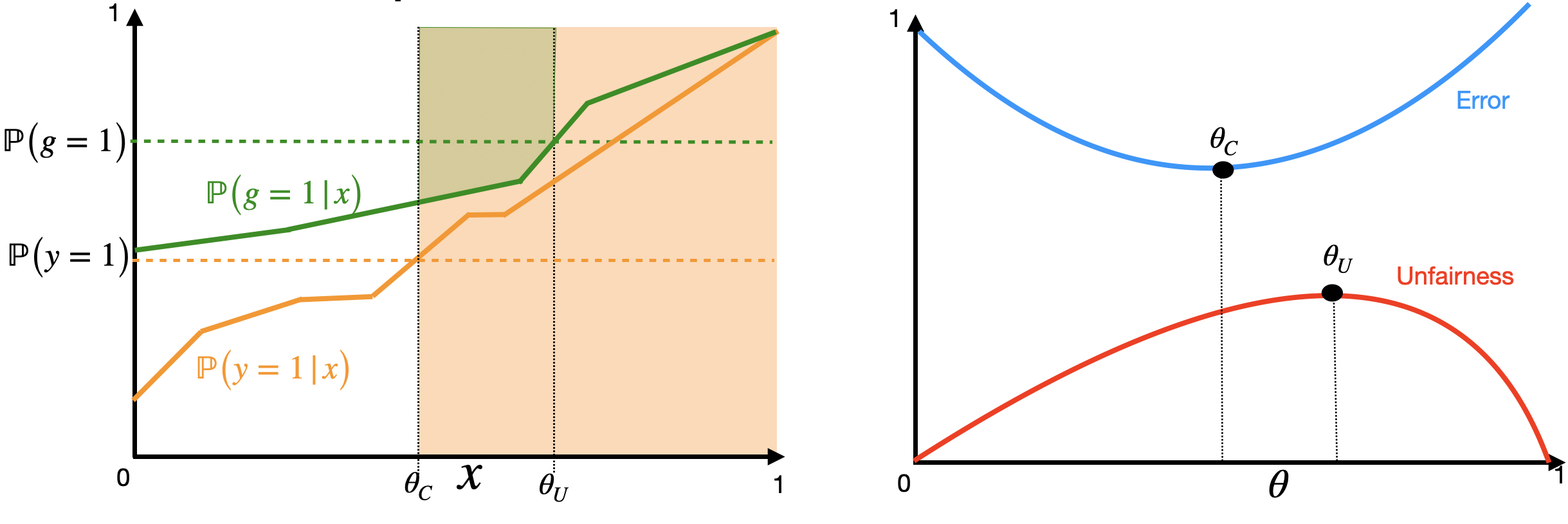}
    \caption{Stylized example of how the underlying conditional distribution leads to fair classifiers becoming more selective. The figures on the left hand correspond to the distribution, where $x$ is the predictive feature, $g$ indicates group membership, and $y$ indicates true label. The figures on the right correspond to the accuracy and unfairness of the threshold $\theta$ with respect to the distribution. Here $\theta_C$ corresponds to the threshold with the maximum accuracy, and $\theta_U$ is the threshold with the maximum unfairness.}
    \label{fig:my_label_1}
\end{figure}
When does this happen? Figure \ref{fig:my_label_1} provides some intuition. Define $\theta_U$ as the threshold that maximizes unfairness. As discussed above, $\theta_C$ is the threshold that maximizes accuracy. Assuming that the informative feature $x$ is positively correlated with the desirable class $y=1$, when $\theta_C > \theta_U$, the fair classifier will typically be more selective than the baseline, 
and the fairness reversal conditions will be met (top half of figure). This corresponds to a situation where the advantaged group is overrepresented in the region of the $x$-space just to the right of the decision boundary of $\theta_C$. Shifting this decision boundary to the right (excluding members of the advantaged group) increases fairness. However, doing so means that the newly excluded members of the advantaged group are now those who benefit most from strategic manipulation.
Conversely, if $\theta_C < \theta_U$ this will likely not be the case, as pushing $\theta_C$ to the right increases unfairness. While this is a stylized example, we show in Section \ref{S:exp} that the selectivity condition often holds using several real-world examples.

\section{Related Work}

Our work is closely related to two major strands in literature: algorithmic fairness, and in particular, approaches for group-fair classification, and adversarial machine learning (also called strategic classification).

Broadly, algorithmic fairness literature aims to study the extent to which algorithmic decisions are perceived as unfair, for example, by being inequitable to historically disadvantaged groups~\citep{Buolamwini18,CorbettDavies18,bolukbasi2016man,Ajunwa2016HiringBA}.
In particular, many approaches have been introduced, particularly in machine learning, that investigate how to balance fairness and task-related efficacy, such as accurate~\citep{Agarwal18,Feldman15,Kearns18,Xu21,Zafar19,Zemel13,hardt2016equality}.
Most of these impose hard constraints to ensure that pre-defined groups are near-equitable on some exogenously specified metric, such as selection (positive) rate~\citep{Agarwal18,Kearns18,Zafar19}, although alternative means, such as modifying the data to eliminate disparities, have also been proposed~\citep{Feldman15,chouldechova2017fair}.

The adversarial machine learning literature was initially motivated by security considerations, such as spam and malware detection~\citep{Lowd05,Huang11,Vorobeychik18}.
The primary issue of concern is that as we use machine learning techniques to identify malicious behavior, malicious actors change behavior characteristics to evade detection.
It has, however, come to have a far broader scope, encompassing robustness of machine learning techniques in computer vision as well as social applications~\citep{Goodfellow15,Hardt16,Bjorkegren20,dong2018strategic,chen2020learning}.
In the latter context, this has come to be known as \emph{strategic classification}, to indicate the concern that individuals impacted by algorithmic decisions change their features (e.g., by misreporting their household characteristics on surveys used to allocate housing to the homeless) and thereby undermine algorithms' efficacy.
The intersection between strategic classification and fairness is particularly salient to our work, and has featured studies that highlight the inequity that results from strategic behavior by individuals~\citep{Hu19}, as well as inequity (social cost) resulting from making classifiers robust to strategic behavior~\citep{Milli18,Xu21}.
Our goal, however, is quite distinct: we investigate the extent to which \emph{group-fair} classification itself leads to greater inequity compared to baseline approaches that do not include group-fairness constraints as a result of strategic behavior by individuals.

\section{Preliminaries}

We consider a setting with a population of agents, with each characterized by 1) a feature vector $\x \in \X$, 2) a group $g \in G \equiv \{0,1\}$ to which it belongs (as is common in much prior literature, we treat it as binary here), and 3) a (true) binary label $y \in \Y \equiv \{0,1\}$, denoting, for example, the agent's qualification (for a service, employment, bail, etc).
Let $\D$ be the joint distribution over $G\times\X\times\Y$.
We define $h(\x)$ as the \emph{marginal} pdf of $\x$, and assume that $h(\x) > 0$ for each $\x \in \X$.

Since using the sensitive group membership feature may pose a legal challenge, we assume that neither the conventional nor the group-fair classifier do so at prediction time (but may at training time).
We denote the baseline, or conventional, classifier by $f_C$, while the group-fair classifier is denoted by $f_F$, and both map feature vectors $\x$ into a binary label $y \in \Y$.
We assume that the baseline classifier aims to maximize accuracy, i.e., 
$f_C \in \arg\max_{f} \mathbb{P}_{(\x, y)}\big(f(\x) = y\big)$,
while $f_F$ aims to balance accuracy and fairness, solving
    \begin{align*}
        f_F = \arg&\text{max}_{f}~(1-\alpha) \mathbb{P}_{(\x, y)}(f(\x) = y)\\
        & - \alpha\big|\M(f_F;g=0) - \M(f_F;g=1)\big|,
    \end{align*}
where $\alpha \in [0,1]$ specifies the relative weight of accuracy and fairness terms, while $\M(f;g)$ is a measure of efficacy (e.g., positive rate) of $f$ restricted to a group $g$.

In the literature fairness is sometimes defined with hard constraints, rather than the soft constraints of $\alpha$-fairness, for example
\begin{align*}
    f_F = \arg&\text{max}_{f}~\mathbb{P}_{(\x, y)}(f(\x) = y)\\
    & \text{subj. to } \big|\M(f_F;g=0) - \M(f_F;g=1)\big| \leq \beta
\end{align*}
With hard constraints, decreasing $\beta$ can never increase the unfairness of $f_F$. In general soft constraints do not have the propriety that increasing $\alpha$ will never increase the unfairness of $f_F$. However, in the settings we study this is not an issue as there is a direct correspondence between $\alpha$ and $\beta$ fairness.
    
We consider the impact of strategic behavior of agents when they face a classifier $f$ (whether baseline or group-fair).
Specifically, we suppose that each agent with features $\x$ can modify these, transforming them into another feature vector $\x'$ that is reported to the classifier.
In doing so, the agent incurs a cost, captured by a cost function $c(\x,\x') \ge 0$.
As in \citet{Hardt16}, we define the agent's utility to be 
\[
u(\x,\x') = f(\x') - f(\x) - c(\x, \x').
\] Following the standard setting in strategic classification or adversarial machine learning, we assume any misreporting behavior would not change the true nature of $\x$'s label $y$.
We assume that all agents are rational utility maximizers.
Thus, since $f(\x') - f(\x) \le 1$, the agent will misreport its features only when $c(\x,\x') \le 1$; additionally, the agent will not misreport if $f(\x) = 1$ (they are selected even with true values of features).

\section{Classifiers on a Single Variable}\label{sec:single_var}
First we consider the case in which there is a single continuous feature, i.e. $\X = [0, 1]$ and classifiers are thresholds on this feature.
Recall that $f_C$ is the base classifier selected for maximum accuracy, and $f_F$ is an $\alpha$-fair classifier w.r.t. to fairness metric $\M$.
We can express both classifiers as single parameter $\theta_C, \theta_F\in[0, 1]$ respectively where
$f(x) = \mathbb{I}[x \geq \theta]$. We start by formalizing some standard notions we use extensively.

\subsection{Preliminaries and preparations}
\begin{definition}
\textbf{(Unimodal):} A function $H:[a, b] \rightarrow \mathbb{R}$ is \emph{negatively unimodal} (\emph{positively unimodal}) on the interval $[a, b]$ if there exists a point ${r\in[a, b]}$ such that $H$ is monotone decreasing (increasing) on $[a, r]$ and monotone increasing (decreasing) on $[r, b]$. \\(All convex functions are \emph{negatively unimodal} and all concave functions are \emph{positively unimodal}.)
\end{definition}

\begin{definition}\label{def:single_cross}
\textbf{(Single Crossing):} A function $f$ is said to have a single crossing with the function $g$ if there exists $z$ s.t. 
\begin{align*}
     \forall x \leq z: f(x) \geq g(x)~\text{ and }~\forall x \geq z: f(x) \geq g(x)
\end{align*}
\end{definition}
This \emph{single crossing} property is relevant to our work as we look at conditional distributions which have a single crossing with their unconditioned counterpart, e.g.. the functions $\Pb{y=1|x}$ and  $\Pb{y=1}$ have a single crossing w.r.t. $x$. 
Experimentally we observe that this single crossing between $\Pb{y=1|x}$ and $\Pb{y=1}$, as well as $\Pb{g=1|x}$ and $\Pb{g=1}$ where $g$ indicates group membership, almost always holds in practice.
Note that any monotone function has a single crossing with \emph{every} constant function and thus monotone conditionals trivially satisfy this condition. 

We begin our investigation by proving several lemmas (whose proofs are provided in the supplement) which will be useful in proving our main results, namely Theorems \ref{thm:theta_C<theta_F} and \ref{thm:x_g<x_y}, later in the section.


\begin{lemma}\label{lem:error_convex}
    Suppose that $\mathbb{P}\big(y=1| x)$ has a single crossing with $\mathbb{P}(y=1)$. 
    Then error is negatively unimodal w.r.t. $\theta$ and the optimal base threshold is $\theta_C$ s.t. $\Pb{y=1|\theta_C} = \Pb{y=1}$.
\end{lemma}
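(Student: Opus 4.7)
The plan is to analyze the error function $E(\theta)$ directly by writing it as an integral and differentiating in $\theta$. With $f(x) = \mathbb{I}[x \geq \theta]$, I would express
\[
E(\theta) = \int_0^\theta h(x)\,\Pb{y=1|x}\,dx + \int_\theta^1 h(x)\,\Pb{y=0|x}\,dx,
\]
and apply Leibniz's rule to obtain $E'(\theta)$ as $h(\theta)$ times a linear combination of $\Pb{y=1|\theta}$ and $\Pb{y=0|\theta}$. Because $h(\theta) > 0$ on $[0,1]$ by assumption, the sign of $E'(\theta)$ is controlled entirely by how $\Pb{y=1|\theta}$ compares to the fixed reference level $\Pb{y=1}$.

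Next, I would invoke the single-crossing hypothesis: there is a unique $z \in [0,1]$ at which $\Pb{y=1|x} - \Pb{y=1}$ flips sign. Under the paper's standing positive-correlation picture (small $x$ being less likely to satisfy $y=1$), we have $\Pb{y=1|x} \leq \Pb{y=1}$ on $[0,z]$ and $\Pb{y=1|x} \geq \Pb{y=1}$ on $[z,1]$. Feeding this back into $E'(\theta)$ yields $E' \leq 0$ on $[0,z]$ and $E' \geq 0$ on $[z,1]$, so $E$ first decreases and then increases: negatively unimodal on $[0,1]$, with unique minimizer $\theta_C = z$. The defining property of the crossing point then gives $\Pb{y=1|\theta_C} = \Pb{y=1}$, as claimed.

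The main obstacle will be arranging the exact form of $E'(\theta)$ so that its zero lines up with $\Pb{y=1}$ rather than with $\tfrac{1}{2}$. Under plain $0$-$1$ loss one obtains $E'(\theta) = h(\theta)\bigl(2\Pb{y=1|\theta} - 1\bigr)$, whose zero sits at $\Pb{y=1|\theta} = \tfrac{1}{2}$; the lemma's stated optimality condition fits instead with a class-balanced error such as $\FPR + \text{FNR}$, whose derivative is a positive multiple of $\Pb{y=1|\theta}/\Pb{y=1} - \Pb{y=0|\theta}/\Pb{y=0}$ and vanishes exactly at $\Pb{y=1|\theta} = \Pb{y=1}$. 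Pinning down the intended notion of "error" and verifying this factorization is the delicate step; once that is in hand, the single-crossing argument above closes both claims of the lemma in one stroke.
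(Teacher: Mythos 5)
Your proposal follows essentially the same route as the paper's proof: write the error of $\mathbb{I}[x\geq\theta]$ as a function of $\theta$, differentiate to obtain $h(\theta)$ times a sign-determining factor, and combine the single-crossing hypothesis with the criterion that a function whose derivative crosses zero once is unimodal. The ``obstacle'' you flag is real, and it is present in the paper's own proof: the paper differentiates the plain $0$--$1$ error, obtains $h_x(x=\theta)\big(2\Pb{y=1|x=\theta}-1\big)$, and concludes that the minimizer satisfies $\Pb{y=1|x=\theta_C}=\nicefrac{1}{2}$, which matches neither the lemma's hypothesis (single crossing with $\Pb{y=1}$) nor its stated conclusion ($\Pb{y=1|\theta_C}=\Pb{y=1}$) unless $\Pb{y=1}=\nicefrac{1}{2}$. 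Your diagnosis --- that the statement as written is internally consistent only for a class-balanced error such as $\FPR+\text{FNR}$, whose derivative is a positive multiple of $\Pb{y=1|x=\theta}/\Pb{y=1}-\Pb{y=0|x=\theta}/\Pb{y=0}$ and vanishes exactly at $\Pb{y=1|x=\theta}=\Pb{y=1}$ --- is correct, and with that notion of error (or with both the hypothesis and the conclusion rewritten in terms of $\nicefrac{1}{2}$) your argument closes both claims of the lemma.
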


\begin{definition}
    \textbf{(PR, TPR, FPR):} Postive Rate (PR), True Positive Rate (TPR), and False Positive Rate (FPR) are defined, for classifier $f$ and distribution $\D$ over $G \times \X \times \Y$, as
    \begin{align*}
        \PR_{\D}(f)  &= \Pb{f(x) = 1}\\
        \TPR_{\D}(f) &= \Pb{f(x) = 1|y=1}\\
        \FPR_{\D}(f) &= \Pb{f(x) = 1|y=0}
    \end{align*}
\end{definition}

\begin{lemma}\label{lem:PR_concave}
   Suppose that fairness is defined in terms of Positive Rate $(\PR)$ and that ${\Pb{g=1|x}}$ has a single crossing with  $\Pb{g=1}$, then
    \begin{enumerate}
        \item $\PR_{\D}(\theta|g=1) \geq \PR_{\D}(\theta|g=0)$ for any $\theta \in [0, 1]$, (i.e. group 1 is advantaged under any threshold classifier), and
        \item the unfairness term $\big|\PR_{\D}(\theta|g=1) - \PR_{\D}(\theta|g=0)\big|$ is \emph{positively unimodal} w.r.t. $\theta$ and is maximized at any $\theta_U$ s.t. $\Pb{g=1|x=\theta_U} = \Pb{g=1}$.
    \end{enumerate}
\end{lemma}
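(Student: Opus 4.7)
The plan is to reduce both parts of the lemma to analyzing a single scalar-valued function of $\theta$. Writing $p(x) := \Pb{g=1 \mid x}$ and $q := \Pb{g=1}$, I would first apply Bayes' rule to express
\[
\PR_\D(\theta \mid g=i) = \int_\theta^1 \frac{\Pb{g=i \mid x}\, h(x)}{\Pb{g=i}}\, dx,
\]
from which a short algebraic simplification yields
\[
\Delta(\theta) := \PR_\D(\theta \mid g=1) - \PR_\D(\theta \mid g=0) = \frac{1}{q(1-q)} \int_\theta^1 h(x)\bigl(p(x) - q\bigr)\, dx.
\]
I would also record the ``zero-mean'' identity $\int_0^1 h(x)(p(x) - q)\, dx = 0$, which follows from $\mathbb{E}[p(x)] = q$.

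For part 1, let $z$ denote the crossing point supplied by the single-crossing hypothesis, oriented so that $p(x) \le q$ for $x \le z$ and $p(x) \ge q$ for $x \ge z$ (the opposite orientation would swap the roles of the two groups, so this direction is implicit in the conclusion that group $1$ is advantaged). Then for $\theta \ge z$ the integrand $h(x)(p(x) - q)$ is pointwise non-negative on $[\theta, 1]$, giving $\Delta(\theta) \ge 0$. For $\theta \le z$, I would use the zero-mean identity to rewrite
\[
\int_\theta^1 h(x)(p(x) - q)\, dx = -\int_0^\theta h(x)(p(x) - q)\, dx,
\]
whose right-hand side integrates a non-positive function over $[0, \theta] \subseteq [0, z]$, and is therefore non-negative. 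This establishes $\Delta(\theta) \ge 0$ on all of $[0,1]$.

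For part 2, I would differentiate the integral representation to obtain
\[
\Delta'(\theta) = -\frac{h(\theta)\bigl(p(\theta) - q\bigr)}{q(1-q)}.
\]
By the single-crossing hypothesis, $\Delta'(\theta) \ge 0$ on $[0, z]$ and $\Delta'(\theta) \le 0$ on $[z, 1]$, which is exactly the definition of \emph{positively unimodal}. The maximizer is therefore $\theta_U = z$, characterized by $p(\theta_U) = q$, i.e.\ $\Pb{g=1 \mid x = \theta_U} = \Pb{g=1}$, as required.

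The main subtlety I anticipate is fixing the orientation of the single crossing: the definition reproduced in the excerpt does not itself pin down a direction, but the lemma's conclusion that group $1$ is uniformly advantaged tacitly selects the orientation above. Modulo that choice, the proof is essentially two applications of Bayes' rule plus a split of an integral at the crossing point, with no serious analytical obstacles given the standing assumption that $h(x) > 0$ throughout $\X$.
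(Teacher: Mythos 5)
Your proof is correct and follows essentially the same route as the paper's: both reduce the group gap to a signed quantity whose derivative is $h(\theta)\bigl(\Pb{g=1}-\Pb{g=1|x=\theta}\bigr)/\bigl(\Pb{g=1}\Pb{g=0}\bigr)$, and both invoke the single crossing to get the sign change and hence positive unimodality with maximizer at $\Pb{g=1|x=\theta_U}=\Pb{g=1}$. The only cosmetic difference is in part 1, where you split the integral at the crossing point using the zero-mean identity, while the paper observes that the gap vanishes at $\theta\in\{0,1\}$ and applies unimodality; these are the same fact, and your explicit handling of the crossing orientation (which the paper's Definition 2 leaves ambiguous due to a typo) is a welcome clarification.
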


\begin{lemma}\label{lem:TPR_FPR_concave}
    Suppose that fairness is defined by either True Positive Rate or False Positive Rate and that $g,y$ are conditionally independent given $x$. Suppose further that ${\Pb{g=1|x}}$ has a single crossing with $\Pb{g=1|y=1}$ in the TPR case and by $\Pb{g=1|y=0}$ in the FPR case. Then when $\M$ is TPR or FPR,  
    \begin{enumerate}
        \item $\M_{\D}(\theta|g=1) \geq \M_{\D}(\theta|g=0)$  for any $\theta \in [0, 1]$, (i.e. group 1 is advantaged under any threshold classifier), and
        \item the unfairness term $\big|\M_{\D}(\theta|g=1) - \M_{\D}(\theta|g=0)\big|$ is \emph{positively unimodal} w.r.t. $\theta$ and is maximized at any $\theta_U$ s.t. $\Pb{g=1|x=\theta_U} = \Pb{g=1|y=1}$ in the TPR case and $\Pb{g=1|x=\theta_U} = \Pb{g=1|y=0}$ in the FPR case.
    \end{enumerate}
\end{lemma}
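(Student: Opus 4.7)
The plan is to reduce both the TPR and FPR cases to Lemma~\ref{lem:PR_concave} by conditioning on the appropriate value of $y$, using the conditional independence hypothesis to match the posteriors. I describe the TPR case in detail; the FPR case is symmetric with $y=1$ replaced by $y=0$ throughout.

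First, I would introduce the restricted distribution $\D' := \D \mid y=1$ over $G \times \X$. Unpacking the definitions for a threshold classifier $f(x) = \mathbb{I}[x \geq \theta]$, one immediately has
\[
\TPR_{\D}(\theta \mid g) = \Pb{x \geq \theta \mid y=1,\, g} = \PR_{\D'}(\theta \mid g),
\]
so the group-conditional TPR under $\D$ is literally the group-conditional PR under $\D'$. Next, I would verify that $\D'$ satisfies the hypotheses of Lemma~\ref{lem:PR_concave}. The marginal density of $x$ under $\D'$ is $h(x \mid y=1)$, positive on $\X$ by positivity of $h$; the marginal probability of group membership under $\D'$ is $\Pb{g=1 \mid y=1}$; and, crucially, the conditional independence assumption $g \perp y \mid x$ gives
\[
\Pb{g=1 \mid x}_{\D'} = \Pb{g=1 \mid x,\, y=1} = \Pb{g=1 \mid x}.
\]
Thus the single-crossing hypothesis imposed here---that $\Pb{g=1\mid x}$ has a single crossing with $\Pb{g=1\mid y=1}$---is exactly the single-crossing hypothesis of Lemma~\ref{lem:PR_concave} for the distribution $\D'$.

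Finally, invoking Lemma~\ref{lem:PR_concave} on $\D'$ gives both conclusions at once. Its first part yields $\PR_{\D'}(\theta \mid g=1) \geq \PR_{\D'}(\theta \mid g=0)$, which by the identification above is the desired $\TPR_{\D}(\theta \mid g=1) \geq \TPR_{\D}(\theta \mid g=0)$. Its second part says that the PR-based unfairness on $\D'$ is positively unimodal in $\theta$ and maximized at any $\theta_U$ with $\Pb{g=1 \mid x=\theta_U}_{\D'} = \Pb{g=1}_{\D'}$, which by our substitutions reads $\Pb{g=1 \mid x=\theta_U} = \Pb{g=1 \mid y=1}$, matching the claim. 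The FPR case proceeds identically with $\D' := \D \mid y=0$ and $\Pb{g=1 \mid y=0}$ in place of $\Pb{g=1 \mid y=1}$.

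The main obstacle is not any calculation but pinpointing where conditional independence is doing the work: it is precisely what lets us identify the within-subpopulation posterior $\Pb{g=1 \mid x,\, y=1}$ with the unconditional $\Pb{g=1 \mid x}$, which is what makes the hypothesis of this lemma expressible in terms of the latter rather than the former and allows a clean reduction to Lemma~\ref{lem:PR_concave}. Without conditional independence, one would need to track a genuinely different posterior inside the subpopulation and the reduction would no longer go through directly.
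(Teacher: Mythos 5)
Your reduction is correct and matches the paper's approach: the paper's own proof of this lemma is a single line deferring to the argument of Lemma~\ref{lem:PR_concave}, and your conditioning on $y$ (with conditional independence identifying $\Pb{g=1\mid x, y}$ with $\Pb{g=1\mid x}$) is precisely the formalization of that deferral. If anything, your version is more explicit than the paper's, and correctly isolates where the conditional-independence hypothesis is used.
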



\begin{lemma}\label{lem:thresh_same_as_strat_behave}
    Suppose $f$ is of the form $f(x) = \mathbb{I}[x \geq \theta]$, and the cost of manipulating a feature from $x$ to $x'$ is given as $c(x, x')$, where an agent with true feature $x$ can submit any $x'$ subject to $c(x, x') \leq B$. 
    Then for any cost function $c$ which is monotone in $|x' - x|$ there exists a classifier $f'(x) = \mathbb{I}[x \geq \theta']$ which makes identical predictions on the true distribution $\D$ as $f$ makes on manipulated data $\D_{f}^{(B)}$, i.e. when agents behave strategically $f(x') = f'(x)$ for all $x\in \X$.
    Moreover 
    \begin{align*}
        \theta' = \text{argmin}_{x}x \quad~\text{ s.t. }~ c(x, \theta) \leq B.
    \end{align*}
\end{lemma}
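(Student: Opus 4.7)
The plan is to characterize the set of agents who are ultimately classified as $1$ once strategic behavior has taken place, and then show that this set equals $\{x : x \geq \theta'\}$ for the $\theta'$ identified in the statement.

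First, I would determine each agent's best response. An agent with true feature $x \geq \theta$ is already classified positively, so has no incentive to deviate. For an agent with $x < \theta$, any report $x'$ that changes their outcome must satisfy $x' \geq \theta$. Among all such $x'$, the monotonicity of $c$ in $|x'-x|$ implies that $x \mapsto c(x, x')$ is minimized over $\{x' \geq \theta\}$ at $x' = \theta$, since $\theta$ is the point in the accepted region closest to $x$. Hence such an agent can afford a profitable report iff $c(x, \theta) \leq B$, and whenever they manipulate they choose $x' = \theta$ without loss of generality.

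Second, let $S = \{x : f(x') = 1\text{ after manipulation}\}$ denote the set of true feature values receiving a positive classification. Combining the above with the fact that agents with $x \geq \theta$ trivially get $f(x) = 1$, we obtain
\begin{equation*}
    S \;=\; \{x \geq \theta\} \;\cup\; \{x < \theta : c(x,\theta) \leq B\}.
\end{equation*}
Now I would argue $S = [\theta', 1]$. By monotonicity of $c$ in $|x'-x|$, the map $x \mapsto c(x, \theta)$ is non-increasing on $[0,\theta]$, so the sublevel set $\{x \leq \theta : c(x,\theta) \leq B\}$ is an interval of the form $[\theta', \theta]$, where $\theta' = \arg\min_x \{x : c(x,\theta) \leq B\}$. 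Gluing this with $[\theta, 1]$ yields $S = [\theta', 1]$, which is exactly the prediction set of the threshold classifier $f'(x) = \mathbb{I}[x \geq \theta']$ on the unmanipulated distribution. This gives $f(x') = f'(x)$ for every $x \in \X$, as required.

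The main subtlety, and the only place where care is needed, is to confirm that the two pieces of $S$ meet without a gap, i.e., that $\theta' \leq \theta$. This follows from the natural normalization $c(\theta,\theta) = 0 \leq B$ (or, more generally, from the fact that monotonicity of $c$ in $|x'-x|$ makes $c(x,x)$ the minimum of $c(x,\cdot)$, so $x = \theta$ itself lies in the feasible set). Apart from that bookkeeping, the monotonicity assumption does all the real work: it simultaneously justifies $x' = \theta$ as the optimal manipulation target and ensures that the feasibility set $\{x : c(x,\theta) \leq B\}$ has the required interval structure that collapses strategic behavior back into a single shifted threshold.
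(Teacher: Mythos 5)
Your proposal is correct and follows essentially the same route as the paper's proof: identify $x'=\theta$ as the optimal manipulation target for agents below the threshold, use monotonicity of $c$ in $|x'-x|$ to show the set of agents who can afford this forms an interval $[\theta',\theta]$, and glue it with $[\theta,1]$ to recover a shifted threshold classifier. The only difference is cosmetic — you make the interval/gluing structure and the $\theta'\leq\theta$ bookkeeping explicit, which the paper leaves implicit.
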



Lemma \ref{lem:thresh_same_as_strat_behave} implies that strategic agent behavior can be examined through both the perspective of the original classifier $f$ predicting on the modified distribution $\D_{f}^{(B)}$ or a modified classifier $f'$ on the original distribution $\D$. Since our investigation involves comparing two classifiers, $f_C$ and $f_F$, the latter perspective will prove particularly useful given that the distribution $\D$ remains invariant between classifiers.

\subsection{Fair vs. baseline classifiers when agents are strategic}
We now have the tools to  
compare the relative fairness of the fair classifier $f_F$ and the baseline classifier $f_C$ when agents are strategic.
First we state a necessary and sufficient condition for which the fair classifier becomes less fair than the baseline classifiers when agents best respond to either classifier, namely that $f_F$ becomes less fair than $f_C$ if and only if $f_F$ is more selective than $f_C$ (i.e. the set of examples which $f_F$ classifies as a $1$, is a subset of those classified as a $1$ by $f_C$). 
Next we state a sufficient condition on the underlying distribution $\D$ that guarantees that $f_F$ will become less fair than $f_C$ when both are learned on $\D$. We observe that this sufficient condition is frequently satisfied in our experiments. 

Note that if the budget of agents is so high that both $f_C$ and $f_F$ classify all agents as positive, then both classifiers have equal fairness and our results hold in an uninteresting manner. 
Thus we assume for the remainder of the paper that the available budgets do not induce such degenerate outcomes.
\begin{theorem}\label{thm:theta_C<theta_F}
    Suppose fairness is defined by PR, TPR, or FPR. Suppose further that $\Pb{y=1|x}$ has a single crossing (Def \ref{def:single_cross}) with $\mathbb{P}(y=1)$, ${\Pb{g=1|x}}$ has a single crossing with value given in Lemmas \ref{lem:PR_concave} and \ref{lem:TPR_FPR_concave}, $c(x, x')$ is monotone in $|x'-x|$, and $\theta_C$, $\theta_F$ are respectively the most accurate and optimal $\alpha$-fair thresholds. Then there exists a range of budgets $[B_1, B_2]$ such that strategic behavior agent behavior, with budget $B\in [B_1, B_2]$, leads to $f_F$ being less fair than $f_C$ if and only if $\theta_C < \theta_F$, (i.e. fair classifiers which are more \emph{selective} than their baseline counterpart become less fair under strategic manipulation)
\end{theorem}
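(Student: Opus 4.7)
The plan is to leverage Lemma~\ref{lem:thresh_same_as_strat_behave} to recast strategic manipulation as a leftward shift of the decision threshold on the original data, and then exploit the positive unimodality of the unfairness curve from Lemmas~\ref{lem:PR_concave} and~\ref{lem:TPR_FPR_concave}. Specifically, under any cost $c$ monotone in $|x'-x|$ and budget $B$, strategic behavior against a threshold $\theta$ is equivalent to applying the threshold $\theta - \Delta(B)$ on the original distribution, where the shift $\Delta(B)$ depends only on $B$ (not on $\theta$), is continuous and monotone in $B$, and is \emph{the same} for $f_C$ and $f_F$ because both use the same cost. Hence the two unfairnesses under budget $B$ are $U(\theta_C - \Delta(B))$ and $U(\theta_F - \Delta(B))$, where $U(\theta) := |\M_{\D}(\theta|g=1) - \M_{\D}(\theta|g=0)|$ is positively unimodal with peak at $\theta_U$, and satisfies $U(\theta)=0$ whenever $\theta$ lies below the support of both groups.

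For the ($\Leftarrow$) direction, assume $\theta_C < \theta_F$ and pick $\Delta = \theta_C$: the effective threshold of $f_C$ collapses to $0$, so every agent is accepted and $U(\theta_C - \Delta) = 0$, whereas the effective threshold of $f_F$ equals $\theta_F - \theta_C > 0$, which lies in the interior of the support where $U$ is strictly positive. This produces a reversal at that single $\Delta$, and by continuity of $U$ together with continuity and monotonicity of $\Delta(B)$ in $B$, the reversal persists on a neighborhood of $B$, giving the required nonempty range $[B_1, B_2]$.

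For the ($\Rightarrow$) direction, assume $\theta_C \geq \theta_F$. The key additional input is that $f_F$ being $\alpha$-fair optimal (and distinct from $f_C$) forces $U(\theta_F) \leq U(\theta_C)$. Combined with $\theta_F \leq \theta_C$ and positive unimodality, this rules out the configuration $\theta_U < \theta_F \leq \theta_C$ (which would force $U(\theta_F) > U(\theta_C)$ on the decreasing side of $U$), leaving two cases: (i) $\theta_F \leq \theta_C \leq \theta_U$, and (ii) $\theta_F < \theta_U \leq \theta_C$. In case (i), any leftward shift keeps both effective thresholds below $\theta_U$ on the increasing side of $U$, so monotonicity directly gives $U(\theta_F - \Delta) \leq U(\theta_C - \Delta)$. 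In case (ii), for $\Delta \geq \theta_C - \theta_U$ both effective thresholds end up $\leq \theta_U$ and we reduce to case~(i); for $\Delta < \theta_C - \theta_U$, the effective $\theta_F - \Delta$ still lies on the increasing side below $\theta_F$, so $U(\theta_F - \Delta) \leq U(\theta_F)$, while the effective $\theta_C - \Delta$ lies on the decreasing side but closer to the peak $\theta_U$ than $\theta_C$, so $U(\theta_C - \Delta) \geq U(\theta_C)$; chaining through $U(\theta_F) \leq U(\theta_C)$ yields $U(\theta_F - \Delta) \leq U(\theta_C - \Delta)$, so no budget produces a reversal.

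The main obstacle I foresee is the straddling subcase (ii) of the ($\Rightarrow$) direction: unimodality alone is insufficient there, and the argument must combine the monotone behavior of $U$ on each side of $\theta_U$ with the ordering $U(\theta_F) \leq U(\theta_C)$ that comes only from $f_F$ being $\alpha$-fair optimal. A secondary technicality is verifying that $\Delta(B)$ continuously sweeps through the neighborhood of $\theta_C$ required in the ($\Leftarrow$) argument, which follows from $\Delta(B) = \phi^{-1}(B)$ being continuous and strictly increasing whenever the distance-cost $\phi$ is. Finally, one should note the degenerate boundary ruled out by assumption: if the budget is so large that $\theta_F - \Delta \leq 0$ as well, both classifiers accept everyone, $U$ vanishes for both, and the theorem is vacuously true.
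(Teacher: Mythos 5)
Your proposal is correct and takes essentially the same route as the paper's own proof: both reduce strategic manipulation to a leftward shift of the threshold on the original distribution via Lemma~\ref{lem:thresh_same_as_strat_behave}, and then run a case analysis on the ordering of $\theta_C$, $\theta_F$, $\theta_U$ using the positive unimodality of the unfairness curve together with the optimality-forced inequality $U(\theta_F) \leq U(\theta_C)$ (your straddling subcase (ii) is exactly the paper's case (2)). The only cosmetic difference is in the forward direction, where you drive $f_C$'s effective threshold all the way to $0$ while the paper only pushes $f_F$'s effective threshold down to $\theta_U$; both choices land the two effective thresholds on the increasing side of $U$ and yield the required budget interval.
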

\begin{proof}[Proof Sketch]
    The full proof is provided in the supplement.
    The unfairness of threshold $\theta$ w.r.t. to the distribution $\D$ and fairness metric $\M\in\{\PR, \TPR,\FPR\}$ is expressed as,
    \begin{align*}
        U_{\D}(\theta) = \big| \M_{\D}(\theta|g=1) - \M_{\D}(\theta|g=0)\big|,
    \end{align*}
    For a given threshold $\theta$ and manipulation budget $B$ the best response of an agent with true feature $x$ is
    \begin{align*}
        x_{\theta}^{(B)} = \text{argmax}_{x'}\big(\mathbb{I}[x' \geq \theta] - \mathbb{I}[x \geq \theta]\big)~\text{ s.t. } c(x, x') \leq B,
    \end{align*}
    When agents from $\D$ play this optimal response, let the resulting distribution be $\D^{(B)}_{\theta}$.
    The difference in unfairness, between classifiers, when agents are strategic is $U_{\D^{(B)}_{\theta_C}}(\theta_C)) - U_{\D^{(B)}_{{\theta_F}}}(\theta_F)$.
    By lemma \ref{lem:thresh_same_as_strat_behave} this unfairness can be expressed in terms of the true distribution $\D$, namely
    \begin{align*}
        U_{\D^{(B)}_{\theta_C}}(\theta_C)& - U_{\D^{(B)}_{\theta_F}}(\theta_F)
        =U_{\D}(\theta_C^{(B)}) - U_{\D}(\theta_F^{(B)})\\
        \text{where }&\theta_C^{(B)} = \text{argmin}_{x}x \text{ s.t. } c(x, \theta_C) \leq B \text{ and,}\\
        &\theta_F^{(B)} = \text{argmin}_{x}x \text{ s.t. } c(x, \theta_F) \leq B
    \end{align*}
  By  the monotonicity of $c$ both $\theta_C, \theta_F$ are monotonically decreasing w.r.t. $B$ and $\theta_C^{(B)} \leq \theta_F^{(B)}$.
    By the unimodality of $U_{\D}(\theta)$, 
    if there exists a $B'$ s.t. $\theta_F^{(B')} = \theta_U$, where $\theta_U = \text{argmax}_{\theta}U_{\D}(\theta)$, then for any $B>B'$ we have $U_{\D}(\theta_C^{(B)}) \leq U_{\D}(\theta_F^{(B)})$. This must be true since the only setting in which this does not occur is $\theta_C < \theta_F < \theta_U$, which would imply that $f_C$ is at least as fair as $f_F$ on $\D$. Thus the forward direction holds.
    
    Now assume that there exists a budget $B'$ such that $U_{\D}(\theta_C^{(B)}) \leq U_{\D}(\theta_F^{(B)})$. 
    In this case, if $\theta_F < \theta_C$, then the unimodality of $U_{\D}(\theta)$ implies that $\theta_U < \theta_F < \theta_U$, which would imply again that $f_C$ is at least as fair as $f_F$ on the true distribution. 
    Thus the converse direction also holds.
\end{proof}

Theorem \ref{thm:theta_C<theta_F} indicates that the fair classifier becomes less fair than the base classifier (under PR, TPR, or FPR) iff $\theta_C < \theta_F$, i.e. only fair classifiers which are more \emph{selective} than their baseline counter parts lead to greater unfairness under strategic behavior.
With this observation, the question then becomes: under what conditions is $f_F$ actually more \emph{selective} than $f_C$?
We now provide a sufficient condition on the underlying distribution $\D$ such that the optimal ${\alpha\text{-fair}}$ classifier is indeed more selective. In section \ref{S:exp} we demonstrate empirically that these sufficient conditions do in fact hold frequently in practice. 

\begin{theorem}\label{thm:x_g<x_y}
    Suppose fairness is defined by PR, TPR, or FPR. Suppose further that $\Pb{y=1|x}$ has a single crossing with $\mathbb{P}(y=1)$, and that ${\Pb{g=1|x}}$ has a single crossing with the respective value given in Lemmas \ref{lem:PR_concave} and \ref{lem:TPR_FPR_concave}, call this value $p_g$. Let $x_y$ and $x_g$ be defined by 
    \begin{align*}
        \Pb{y=1|x_y} = \mathbb{P}(y=1)~\text{ and }~\Pb{g=1|x_g} = p_g
    \end{align*}
    If $x_g < x_y$, then there exists a nonempty interval $[\alpha_0, \alpha_1]$ s.t. for any $\alpha\in [\alpha_0, \alpha_1]$ the optimal $\alpha$-fair classifier $f_F$, has the propriety that $\theta_C < \theta_F$ (implying that strategic agent behavior leads to $f_F$ becoming less fair than $f_C$ as outlined by Theorem \ref{thm:theta_C<theta_F}).
\end{theorem}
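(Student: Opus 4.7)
The plan is to combine a localization argument with a local right-shift around $\theta_C$. First, I translate the hypothesis via Lemmas \ref{lem:error_convex}--\ref{lem:TPR_FPR_concave}: these identify $x_y$ with the accuracy-maximizing threshold $\theta_C$ and $x_g$ with the unfairness-maximizing threshold $\theta_U$, so the premise $x_g < x_y$ is equivalent to $\theta_U < \theta_C$. Consequently $U$ is strictly decreasing on $(\theta_U, 1]$ (right branch of its positive unimodality), while $\mathrm{Acc}$ is strictly increasing on $[0, \theta_C]$ and strictly decreasing on $[\theta_C, 1]$ by the unimodality of error.

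Next, I localize the maximizer of the fair objective $J(\theta;\alpha) = (1-\alpha)\mathrm{Acc}(\theta) - \alpha U(\theta)$. Pick $\delta \in (0, \theta_C - \theta_U)$. Since $\theta_C$ is the unique accuracy peak, there exists $\eta > 0$ with $\mathrm{Acc}(\theta) \leq \mathrm{Acc}(\theta_C) - \eta$ for every $\theta \notin [\theta_C - \delta, \theta_C + \delta]$; using $U \geq 0$, this gives $J(\theta;\alpha) \leq (1-\alpha)(\mathrm{Acc}(\theta_C) - \eta)$ on that outside region, while $J(\theta_C;\alpha) = (1-\alpha)\mathrm{Acc}(\theta_C) - \alpha U(\theta_C)$. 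A short comparison shows the argmax $\theta_F(\alpha)$ must lie in $[\theta_C - \delta, \theta_C + \delta]$ whenever $\alpha < \eta / (\eta + U(\theta_C))$. Moreover, since $\delta < \theta_C - \theta_U$, on the left half $[\theta_C - \delta, \theta_C]$ we have $\mathrm{Acc}$ increasing and $U$ decreasing, so $J$ is strictly monotone increasing there; hence $\theta_F(\alpha)$ lies in fact in $[\theta_C, \theta_C + \delta]$.

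Finally, I rule out $\theta_F(\alpha) = \theta_C$ by showing a right-shift strictly improves $J$ for any $\alpha > 0$. Comparing $J(\theta_C + \epsilon;\alpha)$ to $J(\theta_C;\alpha)$: the accuracy loss $\mathrm{Acc}(\theta_C) - \mathrm{Acc}(\theta_C + \epsilon)$ is $o(\epsilon)$ because $\theta_C$ is an interior smooth peak with $\mathrm{Acc}'(\theta_C) = 0$, while the unfairness gain $U(\theta_C) - U(\theta_C + \epsilon)$ is $\Theta(\epsilon)$ because $U$ is strictly decreasing with nonzero slope at $\theta_C$ (since $\theta_C > \theta_U$). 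Hence for any $\alpha > 0$, choosing $\epsilon$ small enough makes $\alpha$ times the gain outweigh $(1-\alpha)$ times the loss, so $J(\theta_C + \epsilon;\alpha) > J(\theta_C;\alpha)$. Combined with the localization step this forces $\theta_F(\alpha) > \theta_C$ strictly for every $\alpha \in (0, \eta/(\eta+U(\theta_C)))$, which furnishes the required nonempty interval $[\alpha_0,\alpha_1]$.

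The main obstacle is the $o(\epsilon)$-versus-$\Theta(\epsilon)$ rate comparison at $\theta_C$: it requires $\mathrm{Acc}$ to be differentiable at $\theta_C$ with vanishing derivative, which follows from continuity of the marginal density $h$ and the conditional $\Pb{y=1|x}$ (properties implicit in the single-crossing hypothesis). If that smoothness fails, the same argument still goes through by replacing derivatives with one-sided difference quotients — the right derivative of $U$ at $\theta_C$ is strictly negative by the strict inequality $\theta_U < \theta_C$ — although one then incurs a strictly positive lower bound $\alpha_0$ determined by the ratio of the one-sided slopes of $\mathrm{Acc}$ and $U$ at $\theta_C$; in either case the interval $[\alpha_0,\alpha_1]$ is nonempty.
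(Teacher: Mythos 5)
Your proof is correct and follows essentially the same route as the paper's: identify $x_y=\theta_C$ and $x_g=\theta_U$, use the two unimodality lemmas to exclude the interval $(\theta_U,\theta_C)$, and show that for small $\alpha$ a slight right-shift past $\theta_C$ trades an $o(\epsilon)$ accuracy loss for a $\Theta(\epsilon)$ fairness gain. If anything, your localization step (the $\eta$-gap argument ruling out the left branch $[0,\theta_U)$) and the explicit first-order comparison at $\theta_C$ supply details that the paper's own argument only gestures at, so your write-up is the more complete of the two.
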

Intuitively this condition is saying that if the advantaged group (group 1) is overrepresented among features $x$ which have a slightly higher than base rate probability to be true positive examples (y=1), the optimal $\alpha$-fair classifier achieves its fairness by negatively classifying those ``borderline" features on  group 1 is overrepresented. This selectivity in turn leads to strategic agent behavior reversing the relative fairness of $f_C$ and $f_F$ since those newly rejected members of group 1 are now those who will most benefit  from strategic manipulation.
The size of this range of fairness coefficients  will be a function of how great the over-representation of group 1 is on the features $x$. (See Figure \ref{fig:my_label_1})
\begin{proof}[Proof Sketch]
    The full proof is provided in the supplement.
    Note that $x_y = \theta_C$ and $x_g = \theta_U$, and thus $\theta_U < \theta_C$.
    For metric $\M\in\{\PR, \TPR,\FPR\}$ the fair learning objective is
    \begin{align*}
        &(1-\alpha)\Pb{\mathbb{I}[x \geq \theta] = y} + \alpha U(\theta, \D)
    \end{align*}
    By Lemmas \ref{lem:error_convex}, \ref{lem:PR_concave}, \ref{lem:TPR_FPR_concave} the error term $\Pb{\mathbb{I}[x \geq \theta_F] = y}$ is negatively unimodal w.r.t. $\theta$ and the unfairness term $U(\theta, \D)$ is positively unimodal.  
    Therefore it cannot be the case that $\theta_U < \theta_F < \theta_C$, which implies that  $\theta_F$ lives either on  $[0, \theta_U)$ or $(\theta_C, 1]$. 
    Further the optimal $\theta_F$ restricted to $[0, \theta_U)$ is monotonically decreasing, and the optimal $\theta_F$ restricted to $(\theta_C, 1]$ is monotonically increasing, w.r.t. $\alpha$. Thus unfairness is monotonically decreasing on both intervals w.r.t. to $\alpha$.
    
    Since $\theta_U < \theta_C$ and both $\Pb{\mathbb{I}[x \geq \theta] = y}$ and $U(\theta, \D)$ are smooth it must be the case that there exists a $\theta'$ s.t. any $\theta\in[\theta_C, \theta']$ has the property that
\[
        \Pb{\mathbb{I}[x \geq \theta'] = y} <  \Pb{\mathbb{I}[x \geq \theta_U] = y} \text{ and } \]
        \[ U(\theta', \D) < U(\theta_U, \D)
\]
    Thus for $\alpha\in (0, \alpha']$, where $\alpha'$ is the fairness coefficient corresponding $\theta'$, the optimal fair classifier has $\theta_C < \theta_F$.
\end{proof}

We now turn our attention to a complementary observation. Previously we examined how strategic agent behavior can lead to a \emph{fairness reversal} between the fair classifier $f_F$ and the baseline classifier $f_C$. 
This fairness reversal also comes with an interesting \emph{accuracy reversal}. Not only is it the case that strategic behavior leads to $f_F$ taking on some of the unfairness of $f_C$, but also leads to $f_F$ obtaining some of the accuracy of $f_C$ as well.
This is primarily due to the fact that $f_F$ becomes more selective and therefore more resilient to manipulation. 
This benefit overrides the accuracy drop on the original distribution when the budget $B$ is sufficiently large. To illustrate this concept we look at linearly separable data, i.e. there exists a $\theta^*$ such that for $x \geq \theta^*$, $y=1$; otherwise $y=0$.

Denote the following set:
\begin{align}
    \X(\theta, B):= \{x: x < \theta, \exists x'\geq \theta~s.t.~c(x,x') \leq B\}
\end{align}
Note that for a fixed $\theta$, $\X(\theta, B)$ is monotone in $B$, i.e. a larger $B$ incurs a larger $\X(\theta,B)$.

\begin{theorem}\label{thm:acc_reversal}
    When $\theta_F>\theta_C$ (the fair classifier is more selective), and $B$ is large enough such that
    \[
    \underbrace{\Pb{x \in \X(\theta_C, B)}+\Pb{x \in \X(\theta_F, B)}}_{\text{Possible manipulations}} \geq \underbrace{\Pb{x \in [\theta_C, \theta_F]}}_{\text{Accuracy gap}}
    \]
    Then $\theta_F$ is more accurate on $\D^{(B)}_{\theta_F}$ than $\theta_C$ on $\D^{(B)}_{\theta_C}$.
\end{theorem}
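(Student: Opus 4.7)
The plan is to compute the misclassification probabilities of both classifiers under strategic best-response and show that the stated budget condition is sufficient for $\theta_F$ to be more accurate. Since the data is linearly separable at some threshold $\theta^*$, the base classifier satisfies $\theta_C = \theta^*$. By Lemma \ref{lem:thresh_same_as_strat_behave}, the best responses to a threshold $\theta$ are summarized by the set $\X(\theta, B)$ of agents below $\theta$ who can manipulate up to $\theta$; monotonicity of $c$ in $|x'-x|$ makes this set a contiguous interval of the form $[\phi(\theta, B), \theta)$ for some endpoint $\phi(\theta, B)$.

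Next I would enumerate the error sources on each manipulated distribution. For $\theta_C = \theta^*$ the only errors are false positives created by manipulation, so $E(\theta_C) := \Pb{x \in \X(\theta_C, B)}$. For $\theta_F > \theta_C$ there are two error types: false positives from $x \in [0, \theta_C) \cap \X(\theta_F, B)$ and false negatives from $x \in [\theta_C, \theta_F) \setminus \X(\theta_F, B)$, yielding
\[
E(\theta_F) = \Pb{x \in [\theta_C, \theta_F)} - \Pb{x \in [\theta_C, \theta_F) \cap \X(\theta_F, B)} + \Pb{x \in [0, \theta_C) \cap \X(\theta_F, B)}.
\]
Writing it this way isolates the accuracy-gap term $A := \Pb{x \in [\theta_C, \theta_F]}$ appearing on the right-hand side of the hypothesis.

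The key move is to case-split on whether $\phi(\theta_F, B) \geq \theta_C$ or $\phi(\theta_F, B) < \theta_C$; because $\X(\theta_F, B)$ is an interval ending at $\theta_F$, exactly one of the two error terms collapses in each case. In the first case $\X(\theta_F, B) \subseteq [\theta_C, \theta_F)$, so the false-positive contribution vanishes and $E(\theta_F) = A - \Pb{x \in \X(\theta_F, B)}$; the stated condition then immediately gives $E(\theta_F) \leq E(\theta_C)$. In the second case $\X(\theta_F, B) \supseteq [\theta_C, \theta_F)$, so the false-negative contribution vanishes and $E(\theta_F) = \Pb{x \in [\phi(\theta_F, B), \theta_C)}$; here I would use monotonicity of $c$ to argue that any $x$ able to reach $\theta_F$ can also reach the smaller threshold $\theta_C$, so $[\phi(\theta_F, B), \theta_C) \subseteq \X(\theta_C, B)$ and again $E(\theta_F) \leq E(\theta_C)$.

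The main obstacle is the structural case analysis of $\X(\theta, B)$. I need both that $\X(\theta, B)$ is always a single interval and that $[0, \theta_C) \cap \X(\theta_F, B) \subseteq \X(\theta_C, B)$ for every cost monotone in $|x'-x|$, both of which follow from the characterization in Lemma \ref{lem:thresh_same_as_strat_behave}. Once those structural facts are in hand, the comparison inside each case is routine, because either the false-positive or the false-negative term drops out entirely, leaving exactly the inequality supplied by the hypothesis (or, in the second case, by the inclusion of manipulation sets).
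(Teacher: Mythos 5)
Your proposal is correct and follows essentially the same route as the paper's proof: the same error decomposition under linear separability ($\theta_C=\theta^*$, so the base classifier's only errors are manipulation-induced false positives), the same case split on whether the left endpoint of $\X(\theta_F,B)$ lies above or below $\theta_C$, with one case closed by the stated budget condition and the other by monotonicity of the cost. The only differences are notational (your $\phi(\theta_F,B)$ is the paper's $x_F:=\inf_{x\in\X(\theta_F,B)}x$) and that you phrase the second case as a set inclusion $[\phi(\theta_F,B),\theta_C)\subseteq\X(\theta_C,B)$ where the paper compares infima, which is the same fact.
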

\begin{proof}
   The full proof is provided in the supplement.
\end{proof}

\subsubsection{Fair classifiers when agents are strategic vs. when agents are truthful:}
Thus far we have considered fairness reversals between $f_F$ and $f_C$. However, it is worth noting that  
Theorem \ref{thm:theta_C<theta_F} also implies that the unfairness of $f_F$ must \emph{strictly} increase for some range of manipulation budgets, and Theorem \ref{thm:x_g<x_y} is then also sufficient  for this strict decrease of fairness to occur. Thus there always exists some range of manipulation budgets for which the fairness of the fair classifier strictly decreases in the presence of strategic agents.

\section{Multivariate Classifiers}
It is more challenging to show general results for multivariate classifiers. However, we can provide three main results that tie into both the univariate case and our subsequent empirical analyses. 
First we show existence: If the hypothesis class under consideration is at least as expressive as linear models, fairness reversal can occur under some distributions and cost functions. Second we show that when the fair classifier is, in a strict sense, more selective than the base classifier, there is always some cost function under which a fairness reversal occurs. Finally, to show that such cost functions are not necessarily esoteric, we provide an example of a natural cost function for linear classifiers that leads to fairness reversal.


\begin{theorem}
    Let $\Hp$ be a hypothesis class which is a super set of all linear models. If both $f_C, f_F$ are selected from $\Hp$, then there exists a distribution $\D$  and cost function $c$ s.t. strategic behavior leads to a fairness reversal of $f_C$ and $f_F$.
\end{theorem}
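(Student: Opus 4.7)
The plan is to reduce to the univariate setting of Theorems \ref{thm:theta_C<theta_F} and \ref{thm:x_g<x_y} by constructing a distribution whose support (and whose feasible manipulations) are effectively one-dimensional. First I would pick a one-dimensional distribution $\tilde{\D}$ on $[0,1]$ satisfying the hypotheses of Theorem \ref{thm:x_g<x_y} with $x_g < x_y$, so that for threshold classifiers on $[0,1]$ there is a nonempty range of fairness weights $\alpha$ for which the optimal fair threshold $\theta_F$ strictly exceeds the optimal accuracy threshold $\theta_C$. I would then lift $\tilde{\D}$ to a distribution $\D$ on $\mathbb{R}^d$ by placing all probability mass on the $x_1$-axis (so $x_i = 0$ for $i \geq 2$ almost surely), and define $c(\x,\x') = \tilde{c}(x_1, x_1')$ whenever $x_i' = 0$ for all $i \geq 2$ and $c(\x,\x') = \infty$ otherwise, where $\tilde{c}$ is the univariate cost used in Theorem \ref{thm:theta_C<theta_F}. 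This confines any profitable strategic manipulation to the $x_1$-axis.

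Next, I would argue that the maximizers $f_C, f_F \in \Hp$ are themselves thresholds on $x_1$. Since $\Hp$ contains all linear models, both $\mathbb{I}[x_1 \geq \theta_C]$ and $\mathbb{I}[x_1 \geq \theta_F]$ lie in $\Hp$. Choosing $\tilde{\D}$ so that $\Pb{y=1 \mid x_1}$ is monotone in $x_1$ makes the Bayes-optimal classifier (over all measurable functions) a threshold on $x_1$, so $f_C$ coincides with $\mathbb{I}[x_1 \geq \theta_C]$ no matter how rich $\Hp$ is. For $f_F$, I would use a discrete (finite-support) instantiation of $\tilde{\D}$ so that on its support the set of distinct classifiers is finite, and verify by a small case analysis that no non-threshold labeling of the support beats $\mathbb{I}[x_1 \geq \theta_F]$ on the objective $(1-\alpha)\cdot\mathrm{Acc} - \alpha\cdot\mathrm{Unfairness}$ for an appropriate $\alpha$.

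With $f_C$ and $f_F$ identified as $x_1$-thresholds satisfying $\theta_C < \theta_F$, and with strategic behavior pinned to the $x_1$-axis by the cost function, Theorem \ref{thm:theta_C<theta_F} applies verbatim to the reduced univariate problem and yields a budget range $[B_1, B_2]$ for which $f_F$ becomes less fair than $f_C$ under strategic manipulation, completing the reduction.

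The main obstacle is the step that pins $f_F$ to be a threshold on $x_1$: because $\Hp$ may be strictly richer than the linear models, one must rule out non-monotone labelings that could in principle achieve a better fair-objective value by trading accuracy against reduced unfairness in a way that no threshold can imitate. Restricting to a small discrete support reduces this to a finite combinatorial check over the $2^n$ labelings of $n$ support points, and this is where the technical substance of the proof would concentrate.
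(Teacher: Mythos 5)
Your plan takes a genuinely different route from the paper. The paper's argument is a direct, fully explicit construction: a uniform distribution over four points in $\{0,1\}^2$ with groups given by the first coordinate, an explicit (zero-or-infinite) cost function permitting only a few specific feature changes, and a hand computation of the group positive rates before and after agents best respond, showing the base classifier's disparity collapses from $1$ to $0$ while the fair classifier's grows from $0.2$ to $0.3$. You instead propose a reduction: embed a one-dimensional distribution on the $x_1$-axis, forbid off-axis reports via the cost function, and invoke Theorems \ref{thm:theta_C<theta_F} and \ref{thm:x_g<x_y}. The embedding idea and the observation that only the classifiers' restrictions to the axis matter are both sound, and the reduction would be the more economical proof if it went through.

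The gap is in the step you yourself flag, and it is not resolved by your proposed fix. To control $f_F$ over an \emph{arbitrary} superset $\Hp$ of the linear models you move to a finite-support $\tilde{\D}$ so that optimality of a threshold can be checked by enumerating labelings; but Theorems \ref{thm:theta_C<theta_F} and \ref{thm:x_g<x_y} (and the lemmas they rest on) are proved for continuous distributions with $h(x)>0$ on $[0,1]$, using derivatives of CDFs to establish unimodality, and they compare optimal \emph{thresholds}, not optima over richer classes. So on a discrete support the theorems do not apply ``verbatim,'' and on a continuous support the enumeration is unavailable --- and there the reduction genuinely fails, since for a fully rich $\Hp$ the pointwise optimality condition for the $\alpha$-fair objective compares $(1-\alpha)\big(2\Pb{y=1|x}-1\big)$ against a term driven by $\Pb{g=1|x}-\Pb{g=1}$, two monotone quantities whose difference need not have a single sign change, so the optimal $f_F$ need not be a threshold at all. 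Once you commit to a finite support you must verify unimodality-in-$B$, the selectivity $\theta_C<\theta_F$, and the post-manipulation fairness comparison directly by computation, at which point the univariate theorems are doing no work and your argument has collapsed into the paper's: exhibit one small concrete instance and check the reversal by hand. That endpoint is fine for an existence theorem, but the proposal as written leans on machinery whose hypotheses your construction does not satisfy.
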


\begin{proof}
    The full proof is provided in the supplement.
\end{proof}

The next theorem states that if the set of examples which $f_F$ classifies as positive is partially a subset of those which $f_C$ classifies as positive, then there exists a cost function such that strategic agent behavior leads to $f_F$ being less fair than $f_C$.

\begin{theorem}\label{thm:subset_bad}
    Let $\X^{(f_F)}_1 = \{\x\in\X:f_F(\x) = 1\}$ and $\X^{(f_C)}_1 = \{\x\in\X:f_C(\x) = 1\}$, i.e. the set of examples predicted to be a $1$ by the respective classifier when agents are truthful. Let $U_{\D}(f, c)$ be the difference in fairness between groups on $\D$ when best responding to $f$ with cost function $c$. 
    Then there exists a cost function $c$ such that 
    \[
        U_{\D}(f_F, c) - U_{\D}(f_C, c) > U_{\D}(f_C, \infty) - \mathbb{P}_{\x\sim\D}\big(\x\in\X^{(f_F)}_1 \setminus \X^{(f_C)}_1\big)
    \]
\end{theorem}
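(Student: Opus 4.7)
The plan is to construct an adversarial cost function $c$ that triggers \emph{no} manipulation under $f_C$ but induces a carefully chosen subset of agents to manipulate under $f_F$, thereby inflating the fairness gap by the amount demanded. The key asymmetric set to exploit is $T := \X^{(f_C)}_1 \setminus \X^{(f_F)}_1$, the ``demoted'' region: every $\x \in T$ is already accepted by $f_C$ (so is indifferent to moving under $f_C$) but is rejected by $f_F$ (so strictly prefers to move under $f_F$). By making moves out of $T$ into $\X^{(f_F)}_1$ essentially free and all other moves prohibitively expensive, the cost function exploits an asymmetry invisible to $f_C$.

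Concretely, I would set $c(\x, \x') = 0$ whenever $\x \in T$ and $\x' \in \X^{(f_F)}_1$, and $c(\x, \x') \geq 2$ (any constant exceeding the maximum utility gain of $1$) otherwise. Under $f_C$, agents in $T$ have no incentive to move and every other agent faces infeasible cost, so the best-response distribution coincides with $\D$ and $U_{\D}(f_C, c) = U_{\D}(f_C, \infty)$. Under $f_F$, every agent in $T$ can move to $\X^{(f_F)}_1$ at zero cost and strictly prefers to do so, so the post-manipulation positive set is exactly $\X^{(f_F)}_1 \cup T = \X^{(f_F)}_1 \cup \X^{(f_C)}_1$.

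Next I would compute the group-conditional positive rates after manipulation,
\[
\PR_{\D}(f_F, c \mid g) \;=\; \PR_{\D}(f_C, \infty \mid g) + \mathbb{P}\big(\x \in S \mid g\big),
\]
where $S := \X^{(f_F)}_1 \setminus \X^{(f_C)}_1$ satisfies $\mathbb{P}(\x \in S) = p$. Applying the reverse triangle inequality to $|\PR_1(f_F, c) - \PR_0(f_F, c)|$ yields
\[
U_{\D}(f_F, c) \;\geq\; U_{\D}(f_C, \infty) - \big|\mathbb{P}(\x\in S \mid g{=}1) - \mathbb{P}(\x\in S \mid g{=}0)\big|,
\]
and bounding the last absolute difference by $\mathbb{P}(\x\in S) = p$ via the law of total probability (absorbing any group-prior imbalance into a constant that can be neutralized by reweighting $c$) and then subtracting $U_{\D}(f_C, c) = U_{\D}(f_C, \infty)$ delivers the claimed bound. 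The TPR and FPR versions follow by restricting the argument to the $y=1$ or $y=0$ stratum respectively.

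The main obstacle I anticipate is promoting the weak $\geq$ from the reverse triangle inequality to the strict $>$ required by the theorem. My plan is to observe that equality in the reverse triangle inequality requires a degenerate alignment of the group-conditional measures of $S$ and $\X^{(f_C)}_1$; this alignment can always be broken by an infinitesimal perturbation of $c$ (for instance, splitting the zero-cost region into two asymmetric pieces with costs $0$ and $\varepsilon$), which does not disturb the agents' best responses. Combined with the implicit non-triviality assumption that $f_C$ and $f_F$ differ on a set of positive measure, this yields strictness. A secondary concern is realizability: the indicator-based $c$ is discontinuous, but any smooth cost whose zero-level set closely approximates the prescribed region produces the same best-response structure.
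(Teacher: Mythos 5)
Your identification of $T = \X^{(f_C)}_1 \setminus \X^{(f_F)}_1$ as the set to exploit, and the idea of free moves from $T$ into $\X^{(f_F)}_1$ so that the post-manipulation accept set of $f_F$ becomes $\X^{(f_C)}_1 \cup \X^{(f_F)}_1$, is exactly the right mechanism and matches the paper's sketch. But your final subtraction does not deliver the claimed bound. Because your cost function makes every move prohibitively expensive for agents facing $f_C$, you correctly get $U_{\D}(f_C,c)=U_{\D}(f_C,\infty)$, and your chain of inequalities then yields
\[
U_{\D}(f_F,c)-U_{\D}(f_C,c)\;\geq\;\bigl(U_{\D}(f_C,\infty)-\mathbb{P}(\x\in S)\bigr)-U_{\D}(f_C,\infty)\;=\;-\mathbb{P}(\x\in S),
\]
which falls short of the theorem's right-hand side by exactly $U_{\D}(f_C,\infty)$. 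This is not a fixable estimate: since $U_{\D}(f_F,c)\leq 1$, proving the theorem with $U_{\D}(f_C,c)=U_{\D}(f_C,\infty)$ would require $U_{\D}(f_F,c)>2U_{\D}(f_C,\infty)-\mathbb{P}(\x\in S)$, which is generally unattainable. The missing idea is that $c$ must \emph{also} collapse the unfairness of $f_C$ under manipulation. For instance, additionally set $c(\x,\x')=0$ for every $\x\notin\X^{(f_C)}_1$ with destination $\x'\in T$, and route the $T$-agents' free moves into $\X^{(f_C)}_1\cap\X^{(f_F)}_1$. Then under $f_C$ every rejected agent can enter $T\subseteq\X^{(f_C)}_1$, so all agents are accepted and $U_{\D}(f_C,c)=0$, while under $f_F$ these extra free moves land in the $f_F$-rejected set $T$ and change nothing; the subtraction now produces the stated $U_{\D}(f_C,\infty)-\mathbb{P}(\x\in S)$ lower bound.

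Two further problems. First, the bound $\big|\mathbb{P}(\x\in S\mid g{=}1)-\mathbb{P}(\x\in S\mid g{=}0)\big|\leq\mathbb{P}(\x\in S)$ is false when the groups are unbalanced (take $S$ concentrated in a group of prior mass $0.1$), and ``reweighting $c$'' cannot repair it: the cost function has no influence on the group-conditional mass of $S$, which is fixed by $f_C$, $f_F$, and $\D$. Any correct argument must either keep the group-conditional quantity or invoke an assumption on the group priors. Second, your plan for strictness fails for the same structural reason: replacing cost $0$ by cost $\varepsilon<1$ on part of the free region leaves every agent's best response (and hence both sides of the inequality) unchanged, so it cannot promote $\geq$ to $>$; strictness has to come from a non-degeneracy condition on $\D$ (e.g., $T$ having positive mass in some group), not from perturbing $c$.
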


\begin{proof}[Proof Sketch]
    The full proof is provided in the supplement. 
    Strategic agent behavior causes the decisions of a classifier to change only in one direction: negative predictions become positive. 
    Thus  $f_C|_{(\X_1^{(f_C)} \setminus \X_1^{(f_F)})}$ can be constructed by a cost function $c$ and $f_F$, yielding equal fairness between $f_C and f_F$ on $\X_1^{(f_C)}$ (giving the positive dependence of $U_{\D}(f_C, \infty)$ in the bound). 
    However, no cost function can increase the unfairness of $f_F$ on $\X_1^{(f_F)} \setminus \X_1^{(f_C)}$ (giving the negative dependence of $\Pb{\x\in \X_1^{(f_F)} \setminus \X_1^{(f_C)}}$).
\end{proof}

This result implies that the difference in unfairness between $f_F$ and $f_C$ grows as $\X^{(f_F)}_1 \setminus \X^{(f_C)}_1$ shrinks. That is, as the positive predictions of $f_F$ are more likely to be subsumed by $f_C$, the difference in unfairness grows. 
This relationship between $f_C$ and $f_F$ holds particularly well when the cost function is  \emph{congruent} to the decision boundary of $f_C$, as is the case with monotone cost functions and threshold classifiers as well as the $l_2$-norm costs and linear classifiers.

While there may be edge cases which require this cost function to be obscure for the result to hold analytically, we demonstrate in our experimental section that the $l_2$ norm is often a sufficient cost function for this result to hold.
Additionally we provide a concrete example where more selective linear classifiers exhibit this fairness reversal between $f_C$ and $f_F$ with $l_2$ manipulation cost.

\subsubsection{Linear classifiers:}
We now turn to linear classifiers as an illustrative example.
Suppose both group membership and true label have the following relationship with $\x$:  $\Pb{y=1|\x} = \varphi_y(\vv_y^T\x)$ and $\Pb{g=1|\x} = \varphi_g(\vv_g^T\x)$, where $\vv_y, \vv_g$ are fixed vectors and $\varphi_y, \varphi_g$ are a monotone functions describing a proper PDF.
To model the ``advantage'' of group 1, let $\vv_g^T\odot\vv_y > 0$ (i.e. the Hamming product is elementwise nonnegative, implying that each feature $x$ is either negatively, or positively, correlated with both $y$ and $g$.). In the supplement we show this regularly arises on real data.


\begin{theorem}\label{thm:lin_select_unfair}
    Let $f_C$ and $f_F$ be the optimal base and fair linear classifiers respectively. 
    If $f_F$ is more selective than $f_C$, i.e. $\w_C\odot\w_F > 0$ and $\theta_F > \theta_g$, then there exists a range of manipulation budgets $[B_0, B_1]$ s.t. $c(\x, \x') = ||\x - \x'||$ lead $f_F$ to be less fair than $f_C$.
\end{theorem}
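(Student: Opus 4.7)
The plan is to reduce this multivariate problem to the univariate threshold analysis of Theorem \ref{thm:theta_C<theta_F} by exploiting two structural facts: under $l_2$ cost, the best response to a halfspace is a parallel shift of its decision boundary, and the sign alignment $\vv_y \odot \vv_g \geq 0$ combined with $\w_C \odot \w_F > 0$ preserves the required single-crossing and unimodality conditions when the problem is collapsed to one dimension.

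First I would prove an $l_2$ best-response lemma analogous to Lemma \ref{lem:thresh_same_as_strat_behave}: for any agent $\x$ facing $f(\x)=\mathbb{I}[\w^T\x\geq\theta]$ with budget $B$, the profit-maximizing deviation moves $\x$ perpendicular to the boundary, and is profitable precisely when the signed distance $(\theta-\w^T\x)/\|\w\|$ lies in $(0,B]$. Thus predictions on the manipulated distribution coincide with predictions of the shifted classifier $(\w,\theta-B\|\w\|)$ on the true distribution $\D$. This converts the strategic comparison into a purely geometric comparison between two halfspaces on the original $\D$.

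Next, I would push both classifiers onto a scalar score. Because $\vv_y\odot\vv_g\geq 0$ and $\w_C\odot\w_F>0$, all four vectors lie in a common orthant, so the linear scores $s_C=\w_C^T\x$ and $s_F=\w_F^T\x$ are both comonotone with $\vv_y^T\x$ and $\vv_g^T\x$. Monotonicity of $\varphi_y,\varphi_g$ then implies that $\Pb{y=1\mid s}$ is monotone in $s$ and single-crosses $\Pb{y=1}$, and $\Pb{g=1\mid s}$ single-crosses the critical value from Lemmas \ref{lem:PR_concave}--\ref{lem:TPR_FPR_concave}. This reproduces the univariate hypotheses and yields positive unimodality of unfairness as a function of the shifted threshold, together with negative unimodality of error.

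With those pieces in place, the selectivity hypothesis $\theta_F>\theta_g$ locates $f_F$'s positive region strictly to the selective side of $f_C$'s along the common projection, so Theorem \ref{thm:theta_C<theta_F} applied to the projected scores yields a nonempty interval on which the shifted fair classifier is strictly less fair than the shifted base classifier; rescaling by $\|\w_C\|$ and $\|\w_F\|$ then translates this into the $l_2$ budget range $[B_0,B_1]$ claimed. The main technical obstacle is that $\w_C$ and $\w_F$ need not be parallel, so $s_C$ and $s_F$ are distinct projections; handling this cleanly will require either showing that the two projections agree on the ordering of the borderline mass under $\w_C\odot\w_F>0$, or reducing the full comparison to the $\vv_y$-direction via a small-perturbation argument around the accuracy optimum, so that a single 1D instance of Theorem \ref{thm:theta_C<theta_F} controls both classifiers simultaneously.
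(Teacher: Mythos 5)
Your first step---reducing strategic behavior under $l_2$ cost to a parallel shift of each halfspace, so that $f$'s predictions on the manipulated data equal those of $(\w,\theta - B\|\w\|)$ on $\D$---is exactly how the paper begins. The divergence, and the gap, comes in the second step. You claim that because $\w_C\odot\w_F>0$ and $\vv_y\odot\vv_g\geq 0$ place all four vectors in a common orthant, the scores $s_C=\w_C^T\x$ and $s_F=\w_F^T\x$ are ``comonotone'' with $\vv_y^T\x$ and $\vv_g^T\x$, so that the single-crossing hypotheses transfer and Theorem \ref{thm:theta_C<theta_F} applies to the projected scores. Sharing an orthant does not give comonotonicity: for $\w=(1,\epsilon)$ and $\vv=(\epsilon,1)$ the elementwise product is strictly positive, yet the orderings the two functionals induce on $\X$ are nearly orthogonal, and $\Pb{y=1\mid \w^T\x=s}=\mathbb{E}[\varphi_y(\vv_y^T\x)\mid \w^T\x=s]$ need not be monotone in $s$ without additional assumptions on the marginal of $\x$. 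More importantly, even granting a single crossing along each score separately, Theorem \ref{thm:theta_C<theta_F} compares two thresholds on \emph{one} axis; $\theta_C-B$ and $\theta_F-B$ live on two different axes when $\w_C\not\parallel\w_F$. You correctly flag this as ``the main technical obstacle,'' but it is not a loose end to be patched---it is the entire content of the multivariate theorem, and your two suggested fixes (ordering of borderline mass, or a perturbation around the accuracy optimum) are not developed enough to judge.

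The paper avoids the 1D reduction altogether. After the parallel-shift step it decomposes $\X$ into the agreement region and the two disagreement regions $S_{0,B}$ (where the shifted $f_C$ is positive but the shifted $f_F$ is negative) and $S_{1,B}$ (the reverse); the selectivity hypothesis is used to argue $S_{1,B}$ empties out, so the entire fairness gap is an integral over $S_{0,B}$ of $\bigl(\Pb{g=1}-\varphi_g(\vv_g^T\x)\bigr)/\bigl(\Pb{g=0}\Pb{g=1}\bigr)$. Monotonicity of $\varphi_g$ together with the sign alignment then shows that once $B$ is large enough, every $\x\in S_{0,B}$ satisfies $\varphi_g(\vv_g^T\x)\leq\Pb{g=1}$, making the integrand nonnegative and yielding the budget range. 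If you want to salvage your route, you would need to either restrict to $\w_C\parallel\w_F$ (where your argument does work and essentially reproduces Section \ref{sec:single_var}) or replace the comonotonicity claim with the disagreement-region computation, which is where the actual work lies.
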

 \begin{proof}
 The full proof is provided in the supplement.
\end{proof}




\section{Experiments}
\label{S:exp}

In this section we experimentally study the fairness reversal phenomenon that we so far considered theoretically.
We use three datasets commonly found in the fairness literature, 

   \noindent\textbf{Recidivism:} The COMPAS dataset in which the objective is to predict re-offending. \\
    \textbf{Law School:} Dataset of law students where the objective is to predict bar-exam passage.\\
    \textbf{Community Crime:} Dataset of communities where the objective is to predict if the community has high crime.

All three datasets have binary outcomes, and we label the more desirable outcome for the individuals by $y=1$ (e.g., \emph{not} re-offending in the recidivism data), with the less desirable outcome labeled by $y=0$.
Consequently, higher \emph{positive rate (PR)},  \emph, or \emph{false positve rate (FPR)} is more desirable for individuals.
Group membership in each dataset is determined by race, which in these datasets corresponds to a binary feature.
In all cases, we refer to the ``advantaged'' group (e.g. the group with higher \emph{PR} for \emph{PR} based fairness) as group $1$, or $G_1$, while the disadvantaged group is referred to as $0$ or $G_0$.


\begin{figure}
    \centering
    \includegraphics[scale=0.43]{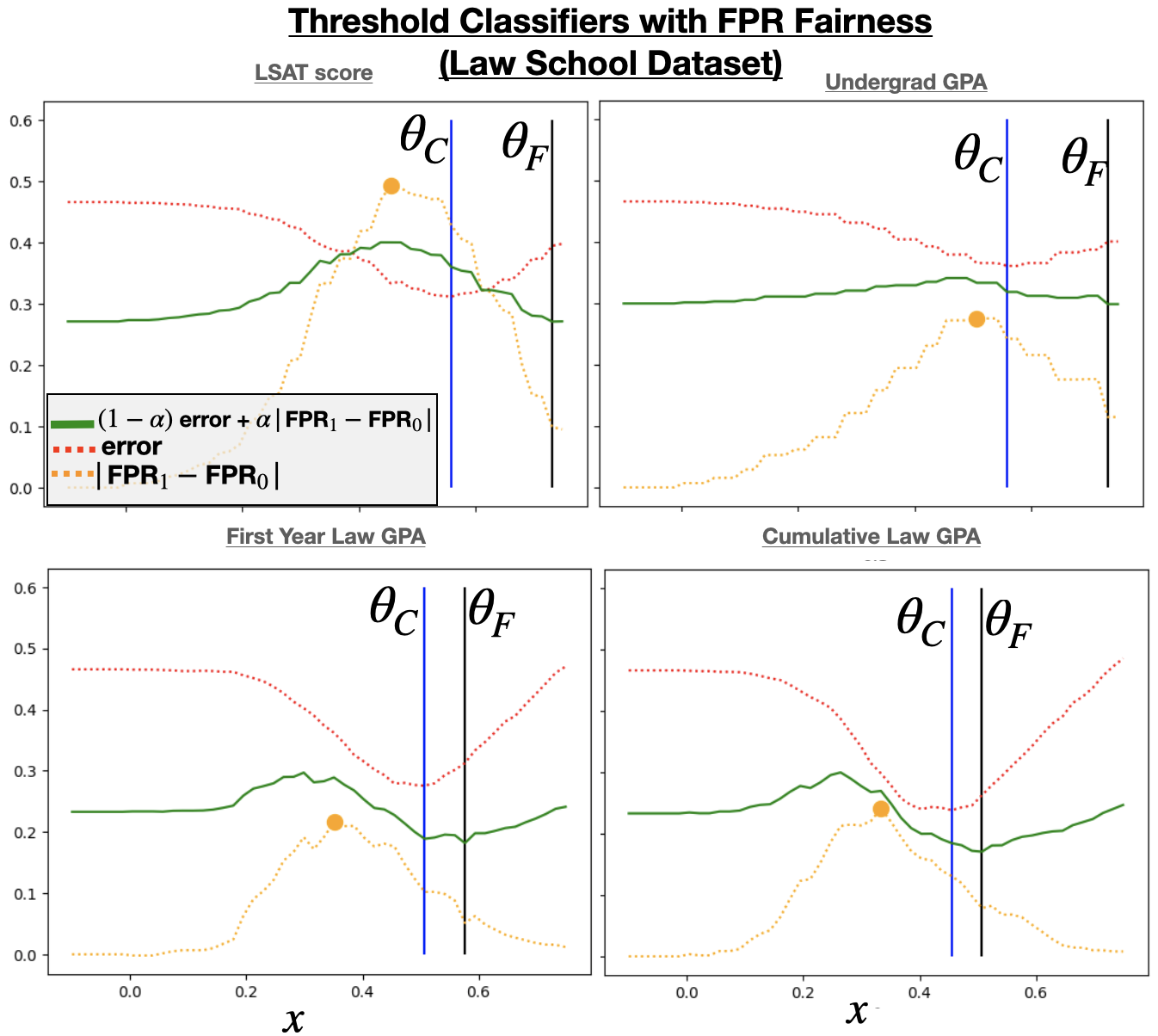}
    \includegraphics[scale=0.3]{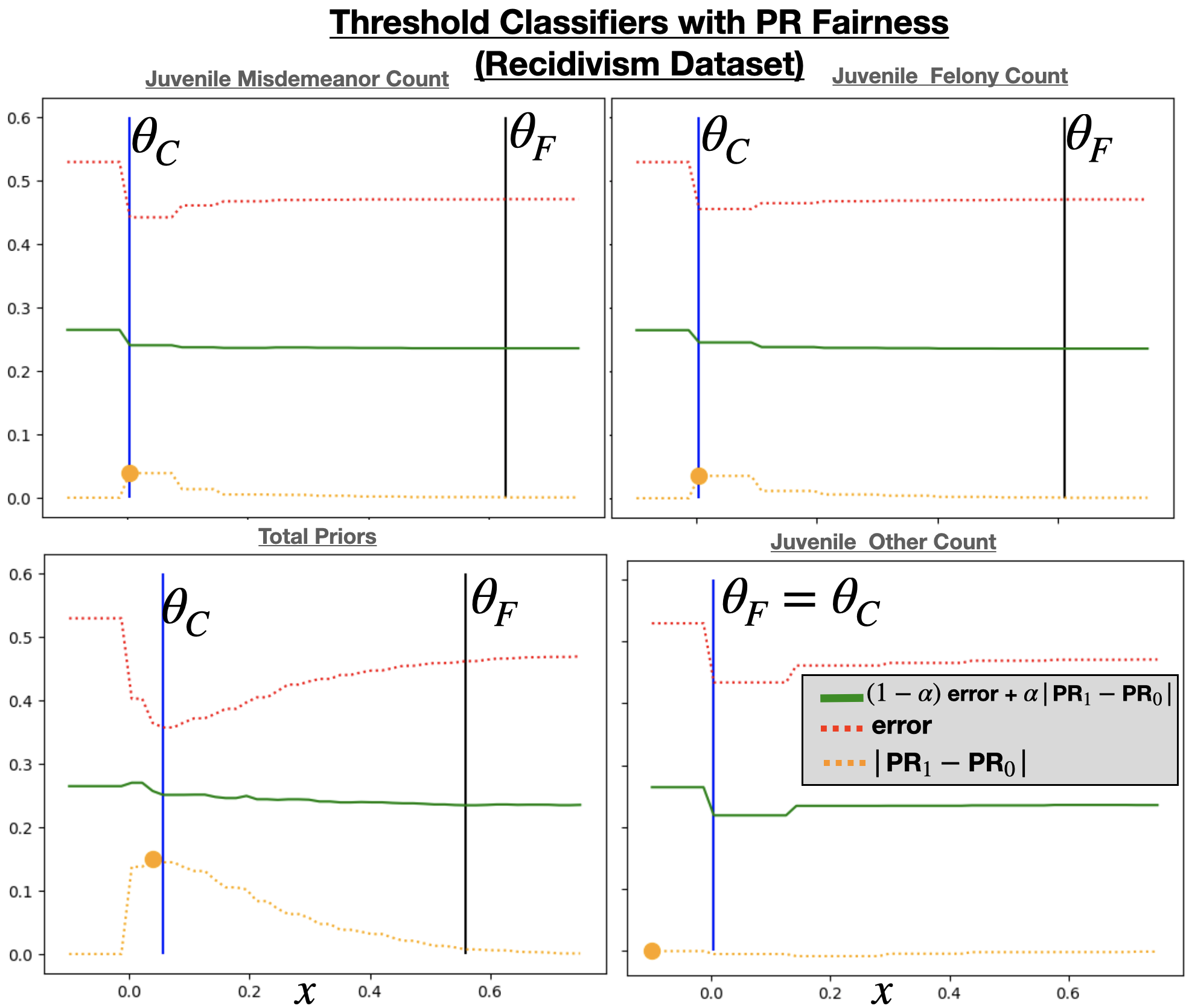}
    \caption{Optimal base and fair threshold classifiers $\theta_C, \theta_F$ respectively, on the Law School and Recidivism datasets. Recall that $\theta_F > \theta_C$ (observed in all examples, except for one - Juvenile Other Count + Recidivism) implies that strategic agent behavior will lead to $\theta_F$ becoming less fair than $\theta_C$. Error and unfairness are approximately unimodal.}
    \label{fig:single_var_FPR}
\end{figure}

We begin by considering single-variable threshold classifiers.
For each dataset we look at thresholds over ordinal features.
In both the Recidivism and Law School datasets there are 4 such features (excluding age). In the Community Crime dataset we select 4 features which have non-negligible unfairness and high correlation between the feature $x$ and label $y$. 
We then scale each feature to have domain $[0,1]$ and for visual consistency across plots, if ${\text{Cor}(x, y) < 0}$ we set $x := 1-x$.
In these three datasets most ordinal features satisfy the single crossing condition, and thus these featuers also satisfy the unimodality of error and unfairness w.r.t. to the threshold $\theta$ - this is discussed further in the supplement.

Recall that in Section \ref{sec:single_var} we showed that if error is \emph{negatively unimodal} and unfairness is \emph{positively unimodal} then $f_F$ becomes less fair than $f_C$ under strategic behavior if and only if $\theta_F \geq \theta_C$. We examine error and unfairness as a function of the selected features for each of the three datasets and each of the three fairness criteria. Due to space limitations, we present two of the figures here and defer the other seven, which are qualitatively similar, to the supplement.
Figure \ref{fig:single_var_FPR} shows that both error and unfairness are approximately unimodal across each variable for each combination of dataset and fairness metric, with the exception of one example in which unfairness is negligible for any threshold.
Moreover in these figures we see that  $\theta_C < \theta_F$ holds in all but the aforementioned exception. 

In the supplement, in addition to presenting graphs for other combinations of dataset and fairness metric, we also outline why the unimodality of error and unfairness arises so frequently. Specifically we show that when the conditional probabilities $\Pb{y=1|x}$ and $\Pb{g=1|x}$ meet the single crossing condition, as characterized in Section~\ref{sec:single_var}, unfairness, error, or both exhibit unimodality.
Moreover, this is indeed the \emph{typical} case in real data, with exceptions infrequent.

\subsection{Multivariable Classifiers}
Next, we study fairness reversal  in settings where we make use of all ordinal features in the datasets, and consider three common baseline classifiers: 
logistic regression (LGR), support vector machines (SVM with an RBF kernel), and neural networks (NN).
We consider two algorithms for group-fair classification: one due to \citet{Agarwal18} (\emph{Reductions}) and the second due to \cite{Kearns18} (\emph{GerryFair}).
Both leverage a connection to cost-sensitive learning, but the specific techniques are different.
Significant for our purposes is that both approaches turn the problem of learning with hard group-fairness constraints (that group averages on a given measure are within a specified tolerance level $\beta$) into a series of unconstrained optimization problems via an approximated Langrangian multiplier, which corresponds to $\alpha$ in our notation.
Agents' manipulations are computed via \emph{Projected Gradient Decent} (PGD)~\citep{Madry18}; this is discussed further in the supplement.

Figure \ref{fig:multi_var_unfar} shows the unfairness and error of the fair ($f_F$) and base ($f_C$) classifiers, and presents three cases in which the fairness reversal occurs and one in which it does not. These are representative cases. 
Both error (dotted) and unfairness (solid) exhibit unimodality (w.r.t. the budget $B$, which is ultimately the parameter we care about -- in the single variable case unimodality in $\theta$ implies unimodality in $B$ for any monotone cost function $c$) for both the base (blue) and fair (orange) classifiers. The shaded part indicates the region (and magnitude) of fairness reversal. In this region, the base classifier becomes more fair than than the fair classifier under strategic behavior, and there is a corresponding change where the fair classifier becomes more accurate than the base classifier, as predicted by Theorem \ref{thm:acc_reversal}.
We observe experimentally that this fairness reversal is common on both the Law School and Community Crime datasets for any combination of fairness definition, base classifier, and fair learning scheme (see Supplement).

However, while the Recidivism dataset lends itself to unfairness in the single variable case, this reversal is quite infrequent in the multivariate setting, due in part to the fact that the particularly predictive features in the other two datasets are  more directly correlated with group membership than those in the recidivism dataset (all though the correlation is in general high, it is lower than the other two datasets). This is discussed further in the supplement.

\begin{figure}
    \centering
    \includegraphics[scale=0.43]{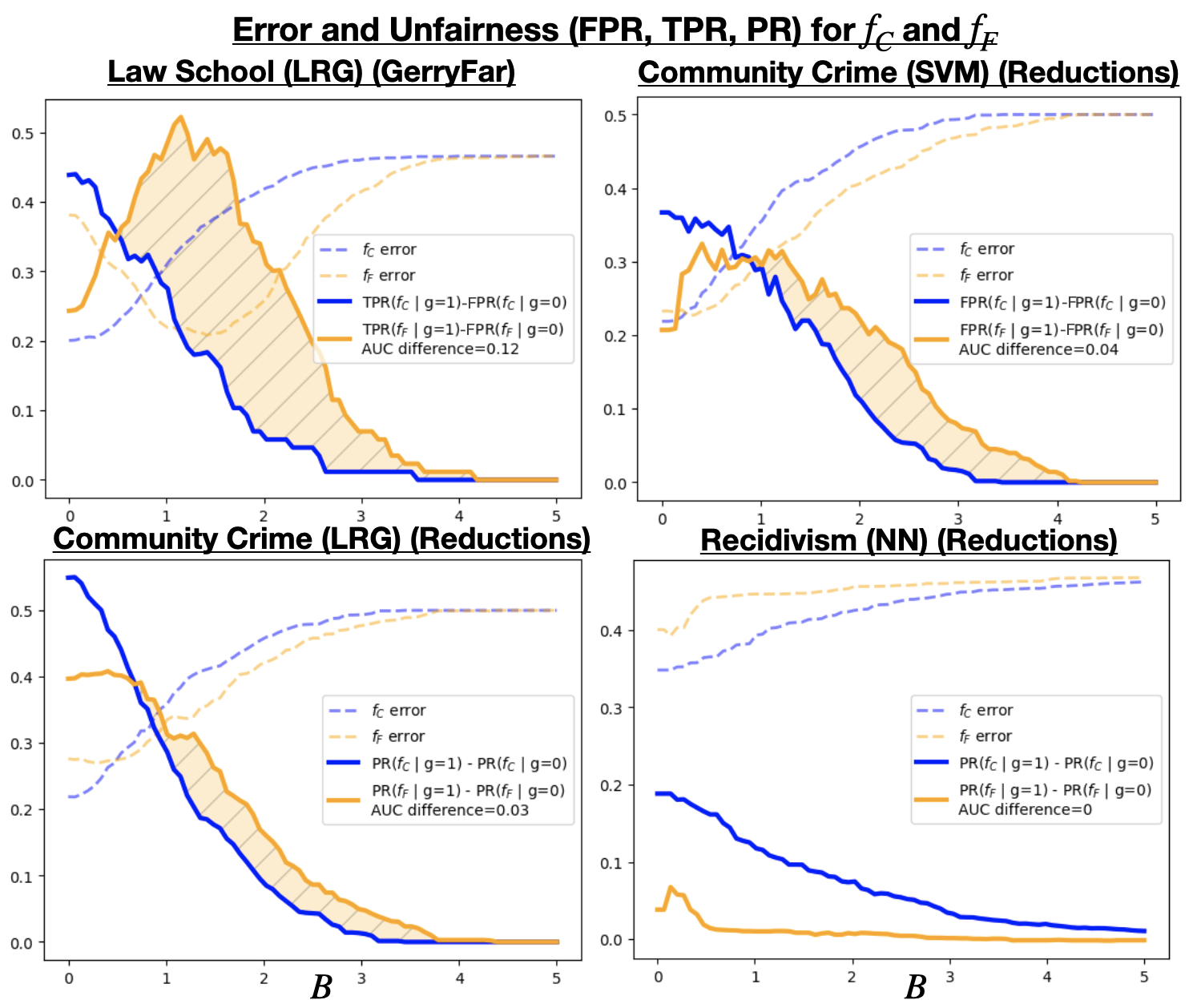}
    \caption{Error (dotted) and unfairness (solid) between $f_F$ (orange) and $f_C$ (blue) for several definitions of fairness, datasets, and learning schemes as a function of the manipulation budget $B$, with $l_2$-norm cost of manipulation.  When evaluating fairness using PR, TPR, or FPR, we observe fairness reversal for a broad range of manipulation budget $B$.}
    \label{fig:multi_var_unfar}
\end{figure}

\clearpage
\bibliographystyle{plainnat}
\bibliography{unfair}

\clearpage
\appendix
\section*{Supplement}

\subsection*{Single crossing and unimodality}
\begin{figure}
    \centering
    \includegraphics[scale=0.25]{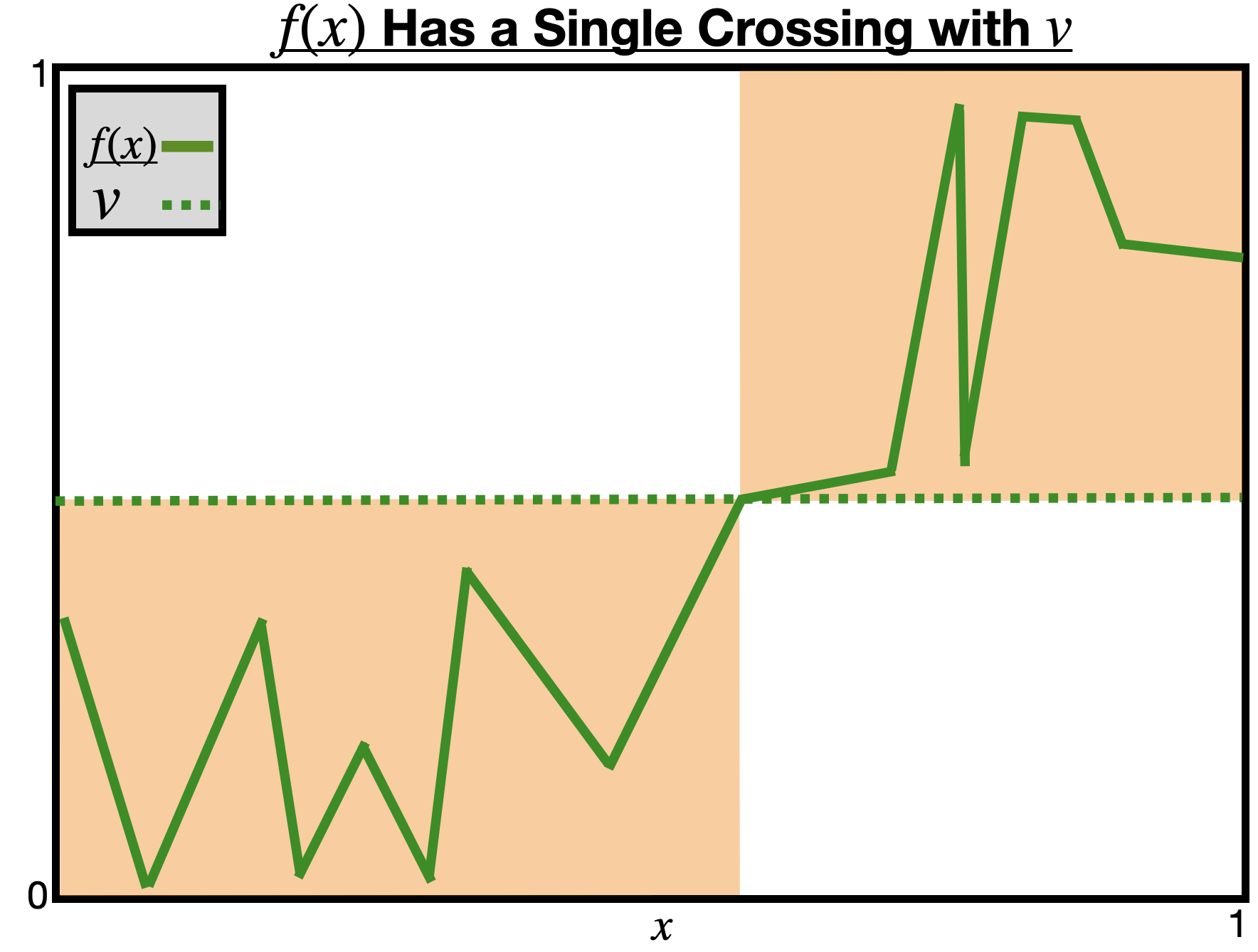}
    \caption{Example of a a function $f(x)$ (solid green) which has a \emph{single crossing} (Def \ref{def:single_cross}) with the constant function $v$ (dotted green). 
    $f(x)$ can take on any values within the orange regions and maintaining the single crossing condition with $v$. 
    That is, so long as $f(x)$ is upper bounded by $v$ prior to crossing $v$, and lower bounded by $v$ after crossing $v$, the single crossing condition holds.}
    \label{fig:my_label}
\end{figure}

\begin{lemma}\label{lem:unimodal_prop}
    A once-differentiable function is unimodal if its derivative has a single crossing with the constant function 0.
\end{lemma}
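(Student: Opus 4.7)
The plan is straightforward: unpack the single-crossing hypothesis applied to $f'$ and the constant $0$, read off a sign pattern for $f'$ on the two resulting subintervals, then convert that sign pattern into monotonicity of $f$ via the fundamental theorem of calculus (or the mean value theorem), which is exactly the definition of unimodality.

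First I would invoke Definition \ref{def:single_cross} with the pair $(f', 0)$: there exists a point $r$ such that on one side of $r$ we have $f'(x) \le 0$ and on the other side we have $f'(x) \ge 0$, or vice versa. (As stated, the definition has a typo where both inequalities point the same direction; I would note that the intended meaning is the standard one with reversed inequalities on the two sides, which is how the definition is used throughout the paper.) This immediately splits the argument into two symmetric cases, corresponding to negative and positive unimodality.

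Next, in the case $f'(x) \le 0$ on $[a, r]$ and $f'(x) \ge 0$ on $[r, b]$, I would fix any $x_1 < x_2$ in $[a, r]$ and write
\[
    f(x_2) - f(x_1) = \int_{x_1}^{x_2} f'(t)\, dt \le 0,
\]
so $f$ is monotone decreasing on $[a, r]$; the symmetric integral on $[r, b]$ gives $f$ monotone increasing there. By definition, $f$ is negatively unimodal. The opposite sign case is handled identically, yielding positive unimodality. In either case $f$ satisfies the unimodality definition from the paper.

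There is no genuine obstacle here beyond the bookkeeping of two symmetric cases; the step that needs the mildest care is making sure $f'$ being nonnegative (rather than strictly positive) on $[r, b]$ still delivers \emph{monotone} (weakly) increasing $f$, which the integral form above handles cleanly without invoking strict inequalities. The once-differentiability assumption is used only to justify the fundamental theorem of calculus step and is unneeded beyond that.
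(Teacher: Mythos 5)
Your proposal is correct and follows essentially the same route as the paper's proof: read off the sign pattern of $f'$ from the single-crossing condition with $0$, then deduce monotonicity of $f$ on the two resulting subintervals. You merely make explicit (via the fundamental theorem of calculus) the derivative-sign-implies-monotonicity step that the paper asserts directly, and you additionally spell out the symmetric positive-unimodality case.
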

\begin{proof}
    Let $f(x):\mathbb{R} \rightarrow \mathbb{R}$ be a  once-differentiable function. Suppose that $f'(x)$ has a single crossing with $0$. Then there exists a point $z$ such that $x \leq z$ implies $f'(x) \leq 0$ and $z \leq x$ implies $0 \leq f'(x)$. 
    Thus $f$ is monotonically decreasing on the interval $(-\infty, z]$ and monotonically increasing on the interval $[z, \infty)$. Implying that $f$ is unimodal.
\end{proof}

\subsection*{Proofs}
\begin{proof}[Proof: (Lemma \ref{lem:error_convex})]
    The error of a classifier $f(x) = \mathbb{I}[x \geq \theta]$ is given by,
    \begin{align*}
          & 1 - \Pb{\mathbb{I}[x \geq \theta] = y}\\
        =~&1 - \Pb{x \geq \theta, y=1} - \Pb{x \leq \theta, y=0}\\
        =~&\Pb{y=0} + \Pb{x \leq \theta, y=1} - \Pb{x \leq \theta, y=0}
    \end{align*}
    Since $x$ is a continuous random variable and the terms involving $\theta$ are joint CDFs with well defined conditional PDFs, the derivative of the above expression w.r.t. to $\theta$, exists and is equal to
    \begin{align*}
        &h_{y, x}(y = 1, x=\theta) - h_{y, x}(y = 0, x=\theta)\\
        =~&h_x(x=\theta)\big(\mathbb{P}(y = 1|x=\theta) - \mathbb{P}(y = 0| x=\theta)\big)\\
        =~&h_x(x=\theta)\big(2\mathbb{P}(y = 1| x=\theta) - 1\big)
    \end{align*}
    Since $\Pb{y=1|x=\theta}$ is \emph{split} by the value $\nicefrac{1}{2}$, the above derivative is \emph{split} by the value $0$, thus by Lemma \ref{lem:unimodal_prop} error is \emph{negatively unimodal} with global minima at any $\theta_C$ s.t. ${\Pb{y=1|x=\theta_C} = \nicefrac{1}{2}}$.
\end{proof}

\begin{proof}[Proof: (Lemma \ref{lem:PR_concave})]
    For a classifier $f{(x)} = {\mathbb{I}[x \geq \theta]}$, we begin by demonstrating (1) the \emph{unimodality} of ${\Pb{x \geq \theta|g=1} - \Pb{x \geq \theta|g=0}}$ and then use this propriety to show (2) the equivalence between  ${\Pb{x \geq \theta|g=1} - \Pb{x \geq \theta|g=0}}$ and the unfairness term ${\big|\Pb{x \geq \theta|g=1} - \Pb{x \geq \theta|g=0}\big|}$.
    First, note that
    \begin{align*}
        =~&\Pb{x \geq \theta|g=1} - \Pb{x \geq \theta|g=1}\\
        =~&\frac{\Pb{g=1, x\geq \theta}}{\Pb{g=1}} - \frac{\Pb{g=0, x\geq \theta}}{\Pb{g=0}}\\
        =~&\frac{\Pb{g=1} - \Pb{g=1, x\leq \theta}}{\Pb{g=1}} - \frac{\Pb{g=0} - \Pb{g=0, x\leq \theta}}{\Pb{g=0}}\\
        =~&1 - \frac{\Pb{g=1, x\leq \theta}}{\Pb{g=1}} - 1 + \frac{\Pb{g=0}  \Pb{g=0, x\leq \theta}}{\Pb{g=0}}\\
        =~&-\frac{\Pb{g=1, x\leq \theta}}{{\Pb{g=1}}} + \frac{\Pb{g=0, x\leq \theta}}{{\Pb{g=0}}}
    \end{align*}
   
    
    Since each term involving $\theta$ is a joint CDF, the derivative of this term w.r.t to $\theta$ exists and is equal to
    \begin{align*}
          &\frac{h_{g, x}(g=0, x=\theta)}{{\Pb{g=0}}} - \frac{h_{g, x}(g=1, x=\theta)}{{\Pb{g=1}}}\\
        =~&\frac{\Pb{g=0| x=\theta}h_{x}(x=\theta)}{{\Pb{g=0}}} - \frac{\Pb{g=1| x=\theta}h_{x}(x=\theta)}{{\Pb{g=1}}}\\
        =~&\frac{\big(1 - \Pb{g=1| x=\theta}\big)h_{x}(x=\theta)}{{\Pb{g=0}}}\\
        &\quad- \frac{\Pb{g=1| x=\theta}h_{x}(x=\theta)}{{\Pb{g=1}}}\\
        =~&h_{x}(x=\theta)\frac{\Pb{g=1} - \Pb{g=1|x=\theta}}{\Pb{g=1}\Pb{g=0}}
    \end{align*}
    Since $\Pb{g=1|x}$ is \emph{split} by the value $\Pb{g=1}$ the above term is \emph{split} by the value 0, thus by Lemma \label{lem: unimodal_prop} the term  $\Pb{x \geq \theta|g=1} - \Pb{x \geq \theta|g=0}$ is \emph{positively unimodal}, and is maximized at any $\theta_U$ s.t. 
    \begin{align*}
        &h_{x}(x=\theta_U)\frac{\Pb{g=1} - \Pb{g=1|x=\theta_U}}{\Pb{g=1}\Pb{g=0}} = 0
    \end{align*}
    Since $h_x(x=\theta) > 0$ any such $\theta_U$ has the propriety that  $\Pb{g=1|x=\theta_U} = \Pb{g=1}$.
    Thus concluding the proof of (2). 
    
    We now use (2) to show that (1) immediately follows.
    Note that for $\theta \in \{0, 1\}$ we have $\Pb{x \geq \theta|g=1} = \Pb{x \geq \theta|g=0}$. Since the function is \emph{positively unimodal} and $\Pb{g=1} > 0$ neither $\theta =0 $ nor $\theta=1$ can be points corresponding to local maximums, hence for any $\theta$ we have
    \begin{align*}
    &~\Pb{x \geq \theta|g=1} - \Pb{x \geq \theta|g=0}\\
    \geq&~\Pb{x \geq 1|g=1} - \Pb{x \geq 1|g=0}\\
    =&~0
    \end{align*}
\end{proof}

\begin{proof}[Proof: (Lemma \ref{lem:TPR_FPR_concave}]
    This proof follows a similar argument to the the proof of \ref{lem:PR_concave}. 
\end{proof}

\begin{proof}[Proof: (Lemma \ref{lem:thresh_same_as_strat_behave})]
    When all agents prefer positive predictions to negative predictions, their manipulations will change the classifier in only a single direction, namely manipulations cause negatively predicted examples to become positively predicted. 
    Thus, only agents with feature $x$, where $f(x) = 0$ need be considered.
    
    Suppose $f$ is a threshold classifier with threshold $\theta$, then the agent's best response to $f$ is,
    \begin{align*}
        x^* = \text{argmax}_{x}&\mathbb{I}[x' \geq \theta] - \mathbb{I}[x \geq \theta]\\
        \text{s.t. }&~c(x, x') \leq B
    \end{align*}
    Since the cost function $c(x, x')$ is monotone w.r.t. $|x'-x|$ the above best response has solution
    \begin{align*}
        x^* = 
         \begin{cases}
            \theta &\text{ if } c(x, \theta) \leq B~\text{ and }x < \theta\\
             x & \text{ otherwise}
        \end{cases}
    \end{align*}
    Moreover, the monotonicity of $c(x, x')$ also implies that if an agent with feature $x$ has best response $x^*=\theta$, then so will any other agent with $x_1$ where $x \leq x_1 < \theta$.
    
    Thus, the distribution shift of $\D$ caused by strategic behavior, can be quantified in terms of the agent with the  smallest feature which is able to report a value of $\theta$, i.e. the feature 
    \begin{align*}
        x_{\text{min}} = \text{argmin}_{x}& x\\
        \text{s.t. }& c(x, \theta) \leq B
    \end{align*}
    Thus when agents are strategic, any agent with feature $x \geq  x_{\text{min}}$ will be positively classified by $f$. 
    Therefore, the threshold $\theta'=x_{\text{min}}$ makes the same classifications on the unmanipulated distribution $\D$ as $\theta$ makes on the manipulated distribution $\D_{\theta}^{(B)}$.
\end{proof}

\begin{proof}[Proof: (Theorem \ref{thm:theta_C<theta_F})]
    We first show that $\theta_C < \theta_F$ implies the existence of a budget interval $[B_1, B_2]$ s.t. strategic agent behavior under any $B \in [B_1, B_2]$ leads to $f_F$ being less fair than $f_C$. We then show that if $\theta_F < \theta_C$, no such budget interval exists.
    
    The unfairness of threshold $\theta$ w.r.t. to the distribution $\D$ and fairness metric $\M\in\{\PR, \TPR,\FPR\}$ is expressed as,
    \begin{align*}
        U(\theta, \D) = \big| \M_{\D}(\theta|g=1) - \M_{\D}(\theta|g=0)\big|,
    \end{align*}
    For a given threshold $\theta$ and manipulation budget $B$ the best response of an agent with true type $a = (g, x)$ is
    \begin{align*}
        x_{\theta}^{(B)} = \text{argmax}_{x'}\big(\mathbb{I}[x' \geq \theta] - \mathbb{I}[x \geq \theta]\big)~\text{ s.t. } c(x, x') \leq B,
    \end{align*}
    When agents, originally distributed in accordance with $\D$, play this optimal responses w.r.t. $\theta$ and $B$, let the resulting distribution be $\D^{(B)}_{\theta}$.
    The difference in unfairness, between classifiers, when agents are strategic, can then be expressed as $U(\theta_C, \D^{(B)}_{\theta_C}) - U(\theta_F, \D^{(B)}_{\theta_F})$.
    Lemma \ref{lem:thresh_same_as_strat_behave} gives a way to express this difference in terms of the original distribution $\D$ by changing the thresholds, namely
    \begin{align*}
        &~U(\theta_C, \D^{(B)}_{\theta_C}) - U(\theta_F, \D^{(B)}_{\theta_F})\\
        =&~U(\theta_C^{(B)}, \D) - U(\theta_F^{(B)}, \D)
    \end{align*}
    where 
    \begin{align*}
        &\theta_C^{(B)} = \text{argmin}_{x}x \text{ s.t. } c(x, \theta_C) \leq B \text{ and,}\\
        &\theta_F^{(B)} = \text{argmin}_{x}x \text{ s.t. } c(x, \theta_F) \leq B
    \end{align*}
    By the monotonicity of $c(x, x')$ w.r.t. $x'-x$ we have that 
    \begin{align*}
        &\theta_C < \theta_F \implies \theta_C^{(B)} \leq \theta_F^{(B)}\quad\forall~B\geq 0 \\
        &\theta_C > \theta_F \implies \theta_C^{(B)} \geq \theta_F^{(B)}\quad\forall~B\geq 0
    \end{align*}
    i.e. the relative ordering of the thresholds is preserved under strategic behavior for any manipulation budget $B$, this fact will be of use later.
    
    Let $\theta_U = \text{argmax}_{\theta}U(\theta, \D)$, we now proceed to prove the forward direction of our claim by three cases of the relative order of the thresholds $\theta_C, \theta_F, \theta_U$:
    \begin{align*}
        &1.)~\theta_C < \theta_F \leq \theta_U\\
        &2.)~\theta_C \leq \theta_U \leq \theta_F\\
        &3.)~\theta_U \leq \theta_C < \theta_F
    \end{align*}
    
    First note that case (1) is infeasible. By Lemmas \ref{lem:PR_concave} and \ref{lem:TPR_FPR_concave}, we know that $U(\theta, \D)$ is \emph{positively unimodal} and maximized at $\theta_U$. Therefore, for $\theta \in [\theta_C, \theta_U]$  we have that $U(\theta, \D)$ is monotonically increasing. 
    Thus in case (1) we have ${U(\theta_F, \D) \geq U(\theta_C, \D)}$, which is impossible since $f_F$ is assumed to be strictly more fair than $f_C$.
    
    To prove that the claim holds in cases (2) and (3) we use the fact that the unfairness term $U(\theta, \D)$, being \emph{positively unimodal} implies that the term is also monotonically increasing on the interval $[0, \theta_U]$ and monotonically decreasing on $[\theta_U, 1]$. Hence, as stated previously, for any $\theta_1\leq \theta_2 \leq \theta_U$ it must also be the case that $U(\theta_1, \D) \leq U(\theta_2, \D) \leq U(\theta_U, \D)$. 
    Therefore it suffices to show that there exists a budget interval $[B_1, B_2]$ s.t. for any $B\in[B_1, B_2]$ we have $\theta_C^{(B)} \leq \theta_F^{(B)} \theta_U$, and as stated previously, for any $B\geq 0$ $\theta_C < \theta_F$ implies that $\theta_C^{(B)} \leq \theta_F^{(B)}$. Hence we need only show that in cases (2), (3) having $B\in[B_1, B_2]$ implies $\theta_F^{(B)} \leq \theta_U$.
    
    In both cases, (2) and (3), this follows immediately form Lemma \ref{lem:thresh_same_as_strat_behave}, which gives the existence of a budget $B_U$, such that $\theta_F^{(B_U)} = \theta_U$, and implies that $\theta_F{(B)}$ is monotonically decreasing w.r.t. to $B$. Therefore, for any $B\in[B_U, \infty)$ it must be the case that $U(\theta_C^{(B)}, \D) \leq U(\theta_F^{(B)}, \D)$.
    
    To prove the reverse direction, we need to show that when $\theta_F < \theta_C$ it is the case that for any $B\geq 0$ we have $U(\theta_F^{(B)},\D) \leq U(\theta_C^{(B)}, \D)$.
    We again show this by three cases on the relative order of the thresholds $\theta_F, \theta_C, \theta_U$:
    \begin{align*}
        &1.)~\theta_F < \theta_C \leq \theta_U\\
        &2.)~\theta_F \leq \theta_U \leq \theta_C\\
        &3.)~\theta_U \leq \theta_F < \theta_C
    \end{align*}
    Similar to the forward direction of the proof, one case is infeasible, namely case (3). This can be seen be a symmetric argument to the previous one, specifically that on the interval $[\theta_U, \theta_C]$ both error and unfairness are monotonically decreasing, and thus $\theta_F$ could not be an optimal fair threshold.
    
    As shown previously, when $\theta_F < \theta_C$ it is also the case that for any $B \geq 0$ we have $\theta_F^{(B)} \leq \theta_C^{(B)}$, and if $\theta_C^{(B)} \geq \theta_U$ then $U(\theta_F^{(B)}, \D) \leq U(\theta_C^{(B)}, \D)$. Thus the claim holds for case (1), leaving only case (2) left to prove.
    
    In case (2) we have $\theta_F \leq \theta_U \leq \theta_C$. Let $B_U$ the budget s.t. $\theta_C{(B_U}) = \theta_U$, then for $B\in [0, B_U]$ the term $U(\theta_C^{(B)}, \D)$ is monotone increasing, while $U(\theta_F^{(B)}, \D)$ is monotone decreasing, and thus $U(\theta_F^{(B)}, \D) \leq U(\theta_C^{(B)}, \D) $. 
    Moreover for $B \in [B_U, \infty)$ we have already have show that $U(\theta_F^{(B)}, \D) \leq U(\theta_C^{(B)}, \D)$.
    
    Therefore the reverse direction of the claim holds, and thus there exists an interval $[B_1, B_2]$ s.t. ${U(\theta_C^{(B)}, \D) \geq U(\theta_F^{(B)}, \D)}$ for $B\in [B_1, B_2]$ iff $\theta_C < \theta_F$. 
\end{proof}

\begin{proof}[Proof: (Theorem \ref{thm:x_g<x_y})]
    Given $\alpha \in (0, 1)$, fairness metric ${\M\in\{\PR, \TPR, \FPR\}}$, and data distribution $\D$, the objective of learning scheme is to find $\theta_F$ such that
    \begin{equation}\label{eq:fair_obj}
        \theta_F = \text{argmin}_{\theta}(1-\alpha)\Pb{\mathbb{I}[x \geq \theta] \neq y} + \alpha U_{\D}(\theta)
    \end{equation}
    where
    \begin{align*}
         U_{\D}(\theta) = \big|\M(\theta|g=1) - \M(\theta|g=0)\big|
    \end{align*}
    By Lemma \ref{lem:error_convex} the error term $\Pb{\mathbb{I}[x \leq \theta] = y}$ is negatively unimodal unimodal in $\theta$ and achieves a minimum at $\theta_C$ where $\Pb{y=1|x=\theta_C} = \mathbb{P}(y=1)$.
    Similarly, by Lemmas \ref{lem:PR_concave}, \ref{lem:TPR_FPR_concave} the unfairness term $U_{\D}(\theta)$ is positively unimodal in $\theta$ and achieves a maximum at $\theta_U$ where $\Pb{g=1|x=\theta_U}=\Pb{g=1}$.
    Thus for any $\alpha$ the fair learning objective (Equation \ref{eq:fair_obj}) is monotonically increasing, thus $\theta_F \notin [\theta_U, \theta_C]$.
    Implying that either $\theta_F \in[0, \theta_U)$ or $\theta_F \in (\theta_C, 1]$.
    Moreover since 
\end{proof}

\begin{proof}[Proof: (Theorem \ref{thm:acc_reversal})]
     Since the data is linearly separable, and the definition of $\theta_C$, we have $\theta_C = \theta^*$ which has an accuracy of 1. Due to the manipulation from $\X(\theta_C,B)$, the accuracy of $\theta_C$ on $\D'$ is then precisely $1-\Pb{x \in \X(\theta_C, B)}$.
    
    Define the following two quantities:
    \begin{align*}
    x_C := \inf_{x \in \X(\theta_C, B)} x,~~~x_F := \inf_{x \in \X(\theta_F, B)} x.
    \end{align*}
    Consider the following two cases. First when $x_F \leq \theta_C$. Since all $x \geq \theta_C$ would either be classified by $\theta_F$ as 1 or be included in $\X(\theta_F, B)$ (so will report a $x'$ which will be classified as 1), the accuracy of $\theta_F$ is precisely $1-\Pb{x \in [\x_F, \theta_C]}$. Since $c$ is monotonic in $|x-x'|$, and $\theta_F > \theta_C$, we have $x_F > x_C$, and therefore 
    \[
    1-\Pb{x \in [x_F, \theta_C]} > 1-\Pb{x \in [x_C, \theta_C]}
    \]
    establishing $\theta_F$ is more accurate. 
    
    Now consider the case $x_F > \theta_C$. In this case, the error of $\theta_F$ is incurred by the $x \in [\theta_C, x_F]$ that has a $y=1$ but has no incentive to deviate into $\X(\theta_F,B)$ and be misclassified. Therefore the accuracy of $\theta_F$ is:
    \begin{align*}
       &1-\Pb{x \in [\theta_C, x_F]}\\
       =&1-\bigl(\Pb{x \in [\theta_C, \theta_F]} -\Pb{x \in \X(\theta_F, B)}  \bigr)\\
       \geq & 1-\Pb{x \in \X(\theta_C, B)},
    \end{align*}
    where the inequality is due to the condition we required.
\end{proof}

\begin{proof}
    This proof follows a similar line of reasoning to its single dimensional counter part, namely that agents best responding to linear classifiers $f_C$ and $f_F$ can equivalently formulated as those same agents truthfully reporting to two modified linear classifiers $f_C$ and $f_F$. More specifically, agents sourced from $\D$ best responding to a classifier, say $f$ with cost function $c(\x, \x') = ||\x-\x'||$ and budget $B$, causes a shift in $D$. We call this shift $\D_{f}^{(B)}$, when agents strategically react to $f$, the classifier $f$ is now predicting over $\D_{f}^{(B)}$, rather than $\D$. However, there exists a modified linear classifier $f'$ whose predictions on $\D$ are precisely those of $f$ on $\D_{f}^{(B)}$. Thus our proof strategy is to prove that there exists a $B_0$ s.t. for any $B \geq B_0$, the corresponding modified base and fair classifier $f_C'$ and $f_F'$ have the propriety that $f_F'$ is less fair than $f_C'$ on $\D$.
    
    To construct $f_C'$ and $f_F'$ we first compute the best response of agents. Under the cost function of $c(\x, \x') = ||\x - \x'||$ and a classifier linear classifier with parameters $\w, \theta$ the optimal response of an agent with true feature $\x$ is to play 
      \begin{align*}
        \x' = 
        \begin{cases}
        \x - \frac{\w^T\x + \theta}{||\w||_2^2}\w &\text{ if } \w^T\x \leq \theta \text{ and } \big|\big| \frac{\w^T\x + \theta}{||\w||_2^2}\w\big|\big|\leq B \\
        \x & \text{ otherwise}
        \end{cases}
    \end{align*}
    Using this best response and the fact that $\w_C, \w_F$ are unit vectors we can assume for the sake of analysis, that every agent plays $\x' = \x - B\w$ since doing so will not effect the decisions of $f$.
    Therefore we can write $f_C'$ and $f_F'$ as $\w_C, \theta_F - B$ and $\w_F, \theta_F-B$ respectively.
    
    Thus, we need only show that for some $B$, thresholds $\theta_F-B$ and $\theta_C-B$ have ``reversed" fairness compared to $\theta_F$ and $\theta_C$. 
    Since $\w_C \odot \w_F >0$, it must also be the case that $\w_F \odot \vv_g >0$ and $\w_C \odot \vv_g> 0$, which in turn implies that $\mathbb{P}(g=1|\x)$ is also monotone in $\w_F^T\x$ and $\w_C^T\x$. 
    As a result, if there are no agents which $f_F'$ predicts positively, but $f_C'$ predicts negativity, then $f_C'$ is at least as fair as $f_F'$.
    More specifically, let
    \begin{align*}
        S_{0,B} = \{\x\in\X: \w_F\x < \theta_F - B \text{ and } \w_C\x\geq\theta_F - B\},\\
        S_{1,B} = \{\x\in\X: \w_F\x \geq \theta_F - B \text{ and } \w_C\x<\theta_F - B\}.
    \end{align*}
    The fairness of $f_C'$ and $f_F'$ are equal on $\X\setminus\big(S_{0, B} \cup S_{1, B})$, since on this set their decisions agree. 
    We show that if $S_{1,B'} = \emptyset$ for some $B'$, then there exists a budget $B_0 \geq B'$  s.t. for all $B \geq B_0$. The existence of $B'$ comes directly from the fact that $\theta_F > \theta_C$. Such a $B'$ must exist since $\theta_C < \theta_F$.
    
    Intuitively this follows a similar argument to the single variable case, namely that $S_{1, B'} = \emptyset$ implies that the hyperplane induced by $f_F'$ is an upperbound for the hyperplane induced by $f_C'$. Meaning that unfairness is again unimodal w.r.t. the budget $B$.
    
    More specifically, the PR fairness of a linear classifier can be expressed as 
    \begin{align*}
          &\Pb{f(x) =1 | g=1} - \Pb{f(x) =1|g=0}\\
         =&\Pb{\w^T\x \geq \theta | g=1} - \Pb{\w^T\x\geq \theta | g=0}
    \end{align*}
    Thus if $S_{1, B} = \emptyset$ then the difference in unfairness of $f_F'$ and $f_C'$ can be expressed as
    \begin{align*}
         =&\Pb{\w_F^T\x \geq \theta_F - B | g=1, \x\in S_{0, B}}\Pb{\x\in S_{0, B}} \\
         &\quad- \Pb{\w_F^T\x\geq \theta_F-B | g=0, \x\in S_{0, B}}\Pb{\x\in S_{0, B}}  \\
         &\quad-\Pb{\w_C^T\x \geq \theta_C - B | g=1, \x\in S_{0, B}}\Pb{\x\in S_{0, B}}  \\
         &\quad+ \Pb{\w_C^T\x\geq \theta_C-B | g=0, \x\in S_{0, B}}\Pb{\x\in S_{0, B}}\\
         =&\Pb{\w_C^T\x\geq \theta_C-B, \w_F^T\x< \theta_F-B | g=0}\\
         &\quad-\Pb{\w_C^T\x \geq \theta_C - B, \w_F^T\x< \theta_F-B | g=1}\\
         =&\bigg(\frac{\Pb{g=0|\w_C^T\x \geq \theta_C - B, \w_F^T\x< \theta_F-B} }{\Pb{g=0}}\\
         &\quad-\frac{\Pb{g=1|\w_C^T\x \geq \theta_C - B, \w_F^T\x< F - B} }{\Pb{g=1}}\bigg)\Pb{\x\in S_{0, B}}
    \end{align*}
    Since we only care that this quantity is non-negative, i.e. $f_C'$ is at least as fair as $f_F'$, the term $\Pb{\x\in S_{0, B}}$ can be dropped.
    The difference in conditionals can be written as 
    \begin{align*}
         &\int_{\x\in S_{0, B}} \frac{1 - \varphi_g(\vv_g^T\x)}{\Pb{g=0}}d\x - \int_{\x\in S_{0, B}} \frac{\varphi_g(\vv_g^T\x)}{\Pb{g=1}}d\x\\
        =&\int_{\x\in S_{0, B}} \frac{\Pb{g=1} - \varphi_g(\vv_g^T\x)}{\Pb{g=0}\Pb{g=1}}
    \end{align*}
    Since $\varphi_g$ is monotone in both $\w_C^T\x$ and $\w_F^T\x$ and decreasing $B$ can only introduce values of $\x$ into $S_{0, B}$ which have smaller values of $\w_C^T\x$ and remove values which have higher value of $\w_F^T\x$. 
   Thus for $B$ such that any $\x\in S_{0, B}$ has $\varphi_g(\vv_g^T\x) \leq \mathbb{P}(g=1)$ the integral is guaranteed to be positive, implying that for $B$ or any larger budget $f_C'$ is more fair than $f_F'$.
   Again due to the monotonicity of $\varphi_g$, such a sufficiently large $B$ must exist.
\end{proof}

\subsection*{Experiments}
\subsubsection*{Single Crossing}
Figures \ref{fig:single_cross_1}, \ref{fig:single_cross_2}, \ref{fig:single_cross_3} show the single crossing conditions between $\Pb{y=1|x}$, and $\Pb{g=1|x}$, and their respective constant functions given in Lemmas \ref{lem:error_convex}, \ref{lem:PR_concave}, \ref{lem:TPR_FPR_concave}. 
We see that in all three datasets the single crossing conditions approximately holds in the sense that when the condition is violated, (i.e. crossing the respective horizontal line more than once) the violation is small in magnitude. Recall that the single crossing propriety implies the unimodality of the error and unfairness terms. Small violations (both in magnitude and duration) of the single crossing condition amount to small changes in the derivative of error or unfairness, which in term does not consequentially impact the unimodality of either term from an empirical perspective.

With this said we see that the Recidivism dataset breaks the single crossing assumptions on $\Pb{g=1|x}$ more so than the other two datasets.
However, we still observe both the unimodality of unfairness as well as a reversal of fairness between the base and fair classifier on this dataset (in the single variable case).

Moreover, we see that the variables in the Recidivism dataset have weaker relationships between $\mathbb{P}(y=1|x)$ and $\mathbb{P}(g=1|x)$ compared to those of the Law School and Community Crime datasets. Meaning that in the Recidivism dataset, label identification and group identification are easier to untangle, compared to the other two datasets.
This is partially the reason we observe that the fairness reversal does not occur on the recidivism dataset in the multivariate setting.

\begin{figure}
    \centering
    \includegraphics[scale=0.58]{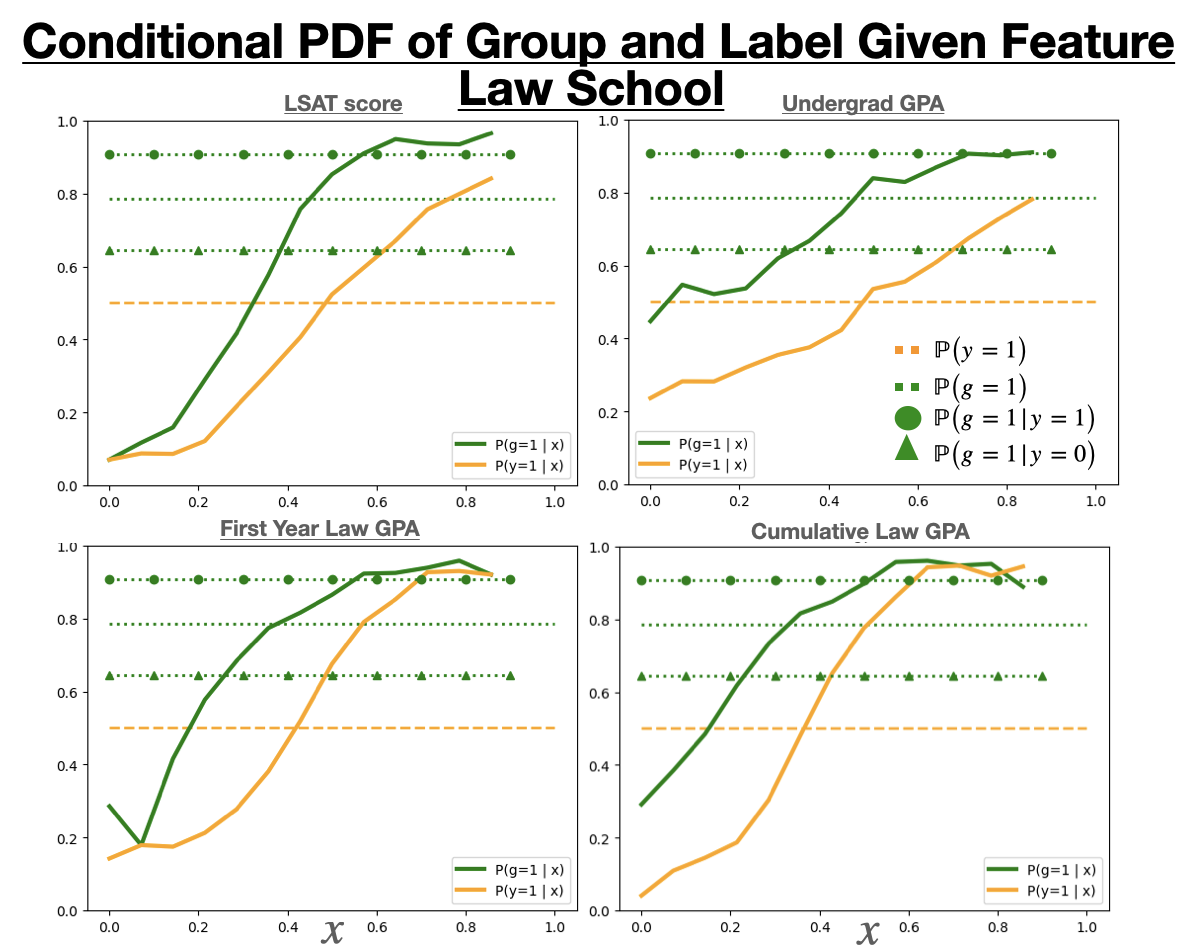}
    \caption{Probabilities of group membership $g$ (green) and true label $y$ (orange). Probabilities conditioned on the feature $x$ are given as solid lines, while those unconditioned are given as dotted, or dashed, lines.
    Recall that if the conditioned probabilities $\mathbb{P}(g=1|x)$ and $\mathbb{P}(y=1|x)$ having a single crossing with the respective unconditioned value (outlined in Lemmas \ref{lem:error_convex}, \ref{lem:PR_concave}, \ref{lem:TPR_FPR_concave}) then error and unfairness will be unimodal w.r.t. to the threshold $\theta$. For example, in the case of \PR fairness, if $\mathbb{P}(g=1|x)$ has a single crossing with $\mathbb{P}(g=1)$ and $\mathbb{P}(y=1|x)$ has a single crossing with $\mathbb{P}(y=1)$ then error and unfairness are unimodal w.r.t. to $\theta$.} 
    \label{fig:single_cross_1}
\end{figure}
\begin{figure}
    \centering
    \includegraphics[scale=0.6]{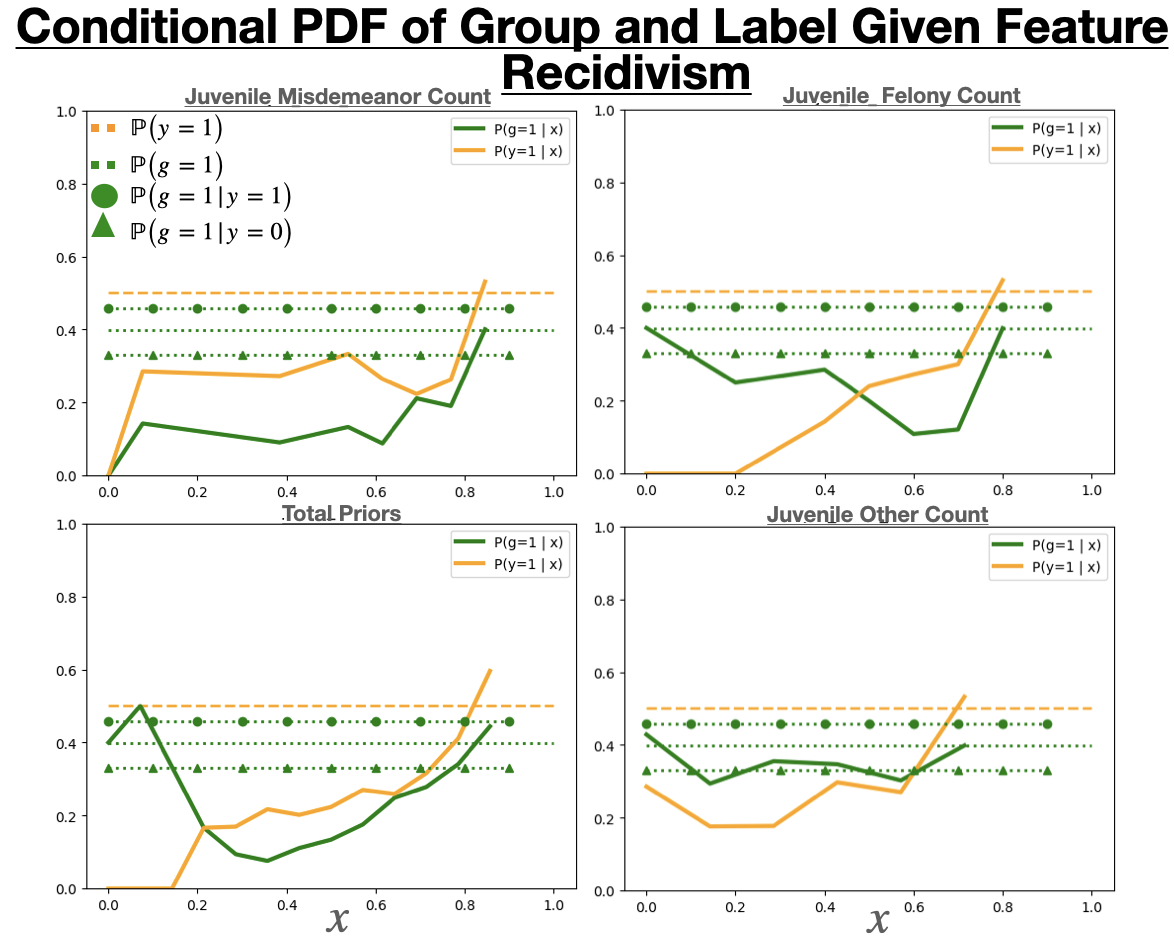}
    \caption{}
    \label{fig:single_cross_2}
\end{figure}
\begin{figure}
    \centering
    \includegraphics[scale=0.58]{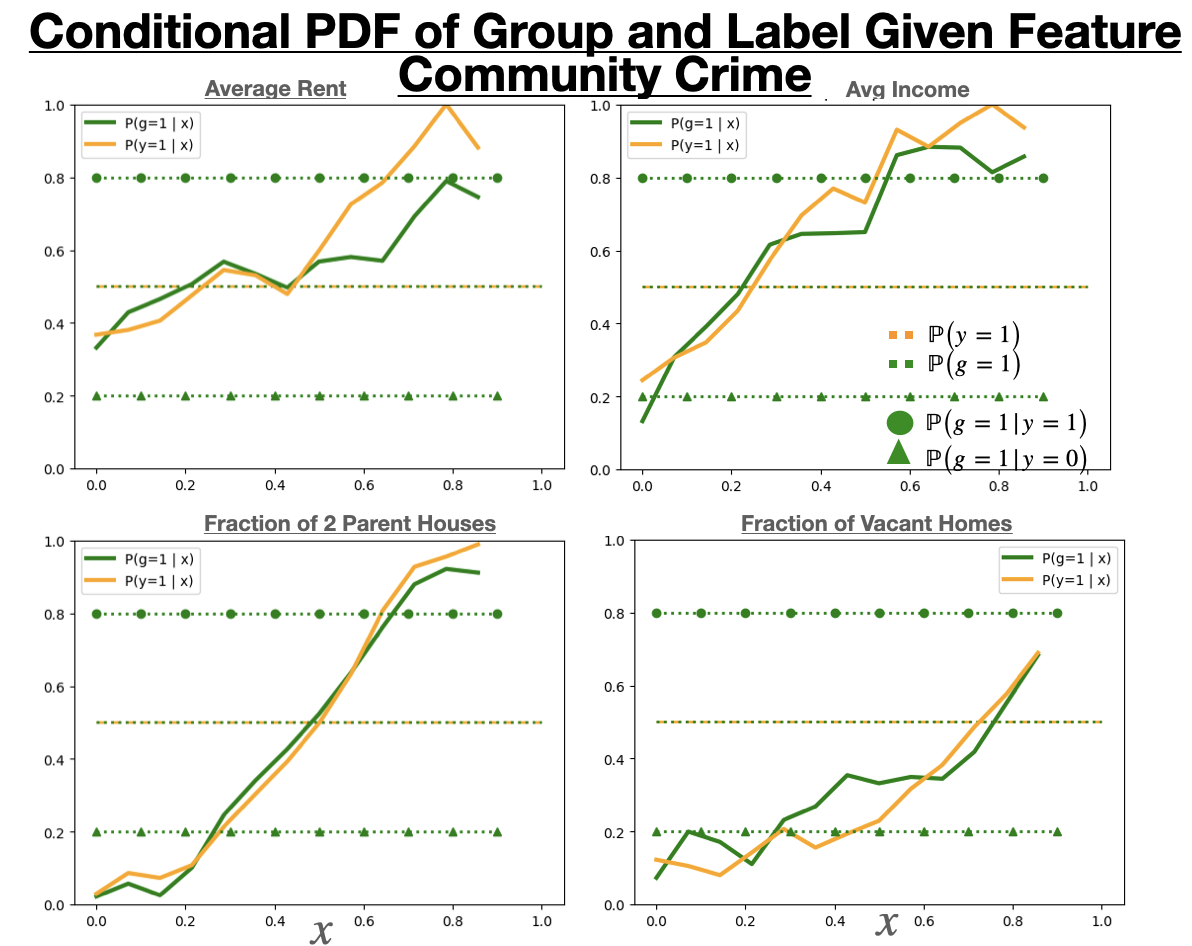}
    \caption{}
    \label{fig:single_cross_3}
\end{figure}

\subsubsection{Fair learning schemes}
We make use of two fair learning algorithms to generate the fair models (denoted as $f_F$), namely GerryFair and Reductions.
Each algorithm takes as input a base-learner (not to be confused with the baseline classifier which we denote as $f_C$). 
This base-learner is used solve the fair learning objective through cost sensitive learning. 
Each algorithm uses their respective base learner in a unique way, and the fair models produced by each learning scheme different considerably in terms of their structure.
Reductions uses the base-learner to perform traditional cost sensitive learning and outputs a model which is of the same type of the base-learner. For example if the base-learner is Logistic Regression, then $f_F$ is also a Logistic Regression model. 
Thus the Projected Gradient Decent attack (PDG) is effective at computing an agents best response to $f_F$ when $f_F$ is learned via reductions and a differentiable base-learner (e.g. Logistic Regression, SVM, and Neural Networks).
In the case of GerryFair the returned fair model $f_F$ has a different structure from the base learner, namely $f_F$ is an ensemble of models produced from the base learner. Thus the resulting model may not be smooth and PGD will not work to compute agents best response. However, of the models we examine (LRG, SVM, NN) GerryFair only supports LRG and SVM with a linear kernel. Meaning that each learner in the ensemble, produced by GerryFair, is linear and thus it is trivial  to compute each agent's best response.

\subsubsection*{Fairness Reversal}
Recall that in the single variable case, strategic manipulation leads to a fairness reversal between the base and fair thresholds $\theta_C$ and $\theta_F$ respectively, if and only if $\theta_C < \theta_F$. Figures \ref{fig:single_cross_1}-\ref{fig:single_var_9} show the relationship between $\theta_C$ and $\theta_F$ for each of the variables, dataset, and fairness metrics we study. In these figures we see that $\theta_C < \theta_F$ is a common. Moreover, we see that the cases where this relationship does not hold are cases in which either $\theta_C < \theta_U$ (meaning the sufficient condition of Theorem \ref{thm:x_g<x_y} does not hold), the fair classifier is trivial (i.e. $\theta_F=0$), or there is negligible unfairness regardless of the value selected for $\theta$.
Moreover, we see that both error and unfairness are unimodal w.r.t. $\theta$, thus Lemma \ref{lem:thresh_same_as_strat_behave} implies that error and unfairness will remain unimodal w.r.t. the manipulation budget $B$ for \emph{any} manipulation cost function $c(x, x')$ which is monotone in $|x' - x|$.

With respect to Figures \ref{fig:single_cross_1}-\ref{fig:single_var_9}, agent manipulation amounts to ``moving" each threshold to the left. We can see that when $\theta_C < \theta_F$, moving $\theta_C$ to the left decreases unfairness, while moving $\theta_F$ to left increases unfairness, until the manipulated $\theta_F$ has been moved all the way to $\theta_U$ (the most unfair threshold). 
Additionally in these figures we see that not only does this leftward shift  increase the unfairness of $\theta_F$, but also increased the accuracy of $\theta_F$: a phenomenon outlined by Theorem \ref{thm:acc_reversal}. That is, in the cases where $\theta_C < \theta_F$, strategic manipulation leads to both a fairness, and an accuracy, reversal between $\theta_C$ and $\theta_F$.

In the multivariate case, Figures \ref{fig:CC_LRG_red}-\ref{fig:multi_2}, show that again the fairness reversal is common. In these figures the error (dotted) and unfairness (solid) are given for $f_C$ (blue) and $f_F$ (orange) as a function of the manipulation budget $B$ when agents have manipulation cost $c(\x, \x') = ||\x-\x'||$.
The shaded orange region indicates both the duration (in terms of $B$) and magnitude of the fairness reversal of $f_F$ and $f_C$.
Similar to the single variable case, these graphs again show a fundamental trade-off between fairness and accuracy in the presence of manipulation.
In all settings where a fairness reversal between $f_F$ and $f_C$ occurs, an accuracy reversal also occurs. 
Namely, if $f_F$ becomes less fair than $f_C$, it also becomes more accurate than $f_C$. 
Moreover, as was the case in the single variable case, we see that in the multivariate case both error and unfairness exhibit unimodality w.r.t. to the budget $B$.

In the single variable case, we would expect that once $f_C$ and $f_F$ respectively hit the point with maximum unfairness (as a function of $B$) their unfairness would decrease at an equal rate from that point onward since both classifiers are effectively sharing the same unfairness curve, but sit at different points. 
In the multivariate case, we make this same observation. After reaching the most unfair $B$, both classifiers decreases at similar rates. However, $f_C$ requires a larger $B$, than $f_C$, to reach this point. Which ultimately leads to $f_F$ becoming less fair, since the unfairness of $f_F$ is still increasing while the unfairness of $f_C$ has already begun to fall.

The fairness reversal is common on the Law School and Community Crime datasets, but non-existent on the recidivism dataset. As mentioned previously there is a weaker correlation between group membership and true label on the Recidivism dataset compared to the other two datasets. We suspect that this weaker link between accuracy and unfairness results in fair classifiers which lose less of their fairness in the presence of strategic behavior, since the agents benefiting from strategic behavior should be more representative of the population as a whole, rather than one particular group.

We train a total of 9 linear models in our experiments (3 fairness definitions across 3 datasets using LRG and Reductions). We observe that 5/9 cases the propriety that  $\w_C \odot \w_F > 0$ approximately holds. Here approximately holding means that if an element pair ${w_{i, C}w_{i, F} < 0}$ the respective magnitude of the product is small relative to those of the larger value weights.

\begin{figure}
    \centering
    \includegraphics[scale=0.52]{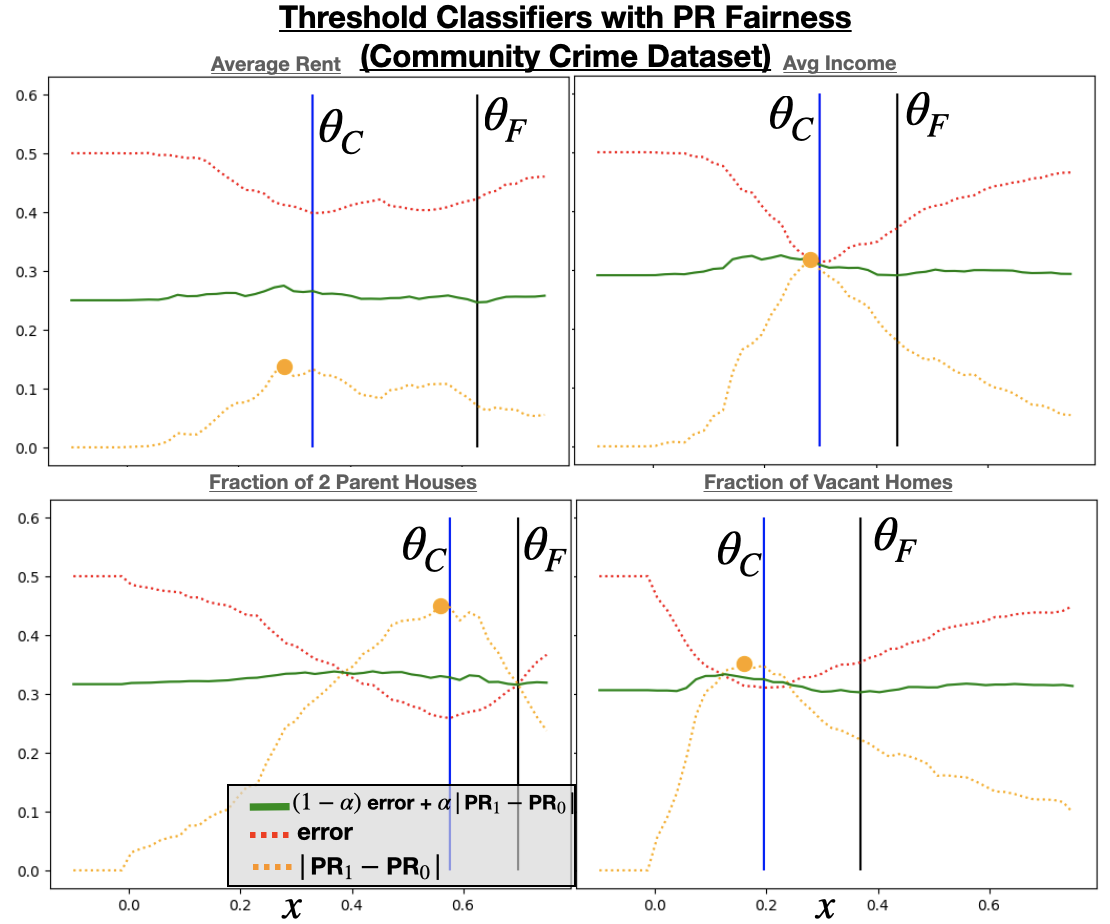}
    \caption{Unfairness and error of threshold classifiers. Both error and unfairness are approximately unimodal w.r.t. threshold $\theta=x$. 
    Thus error and unfairness are also unimodal w.r.t. the manipulation budget $B$ for any manipulation cost function $c(x, x')$ which is monotone in $|x'-x|$. When this unimodality holds $\theta_C < \theta_F$ implies that strategic manipulation will lead to $\theta_C$ becoming more fair than $\theta_F$. This fairness reversal is due to the fact that strategic manipulation amounts to lowering (shifting to the left) the threshold. In this figure, as well as the subsequent figures, we see that $\theta_C < \theta_F$ is a common occurrence (namely 30 our of the 36 combinations of variable, fairness metric, and dataset studied).}
    \label{fig:single_var_1}
\end{figure}
\begin{figure}
    \centering
    \includegraphics[scale=0.52]{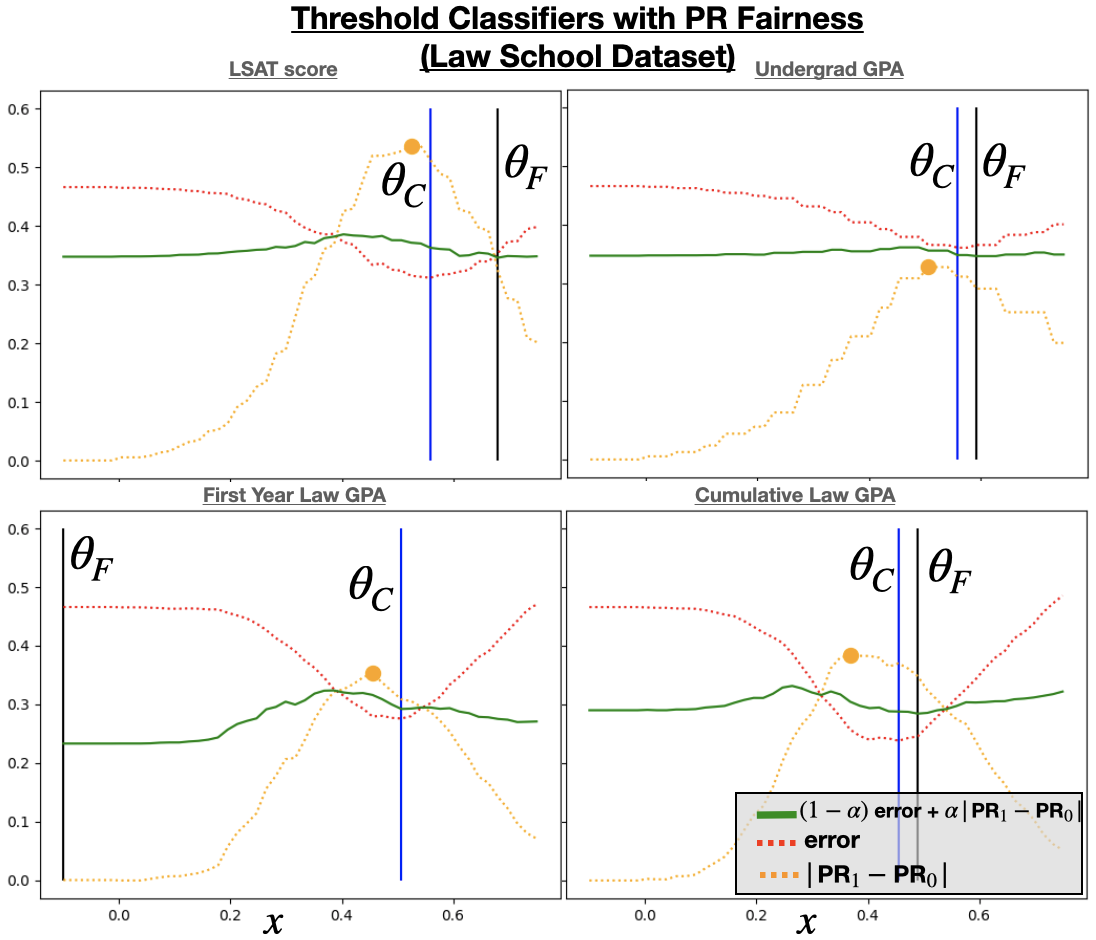}
    \caption{}
    \label{fig:single_var_2}
\end{figure}
\begin{figure}
    \centering
    \includegraphics[scale=0.52]{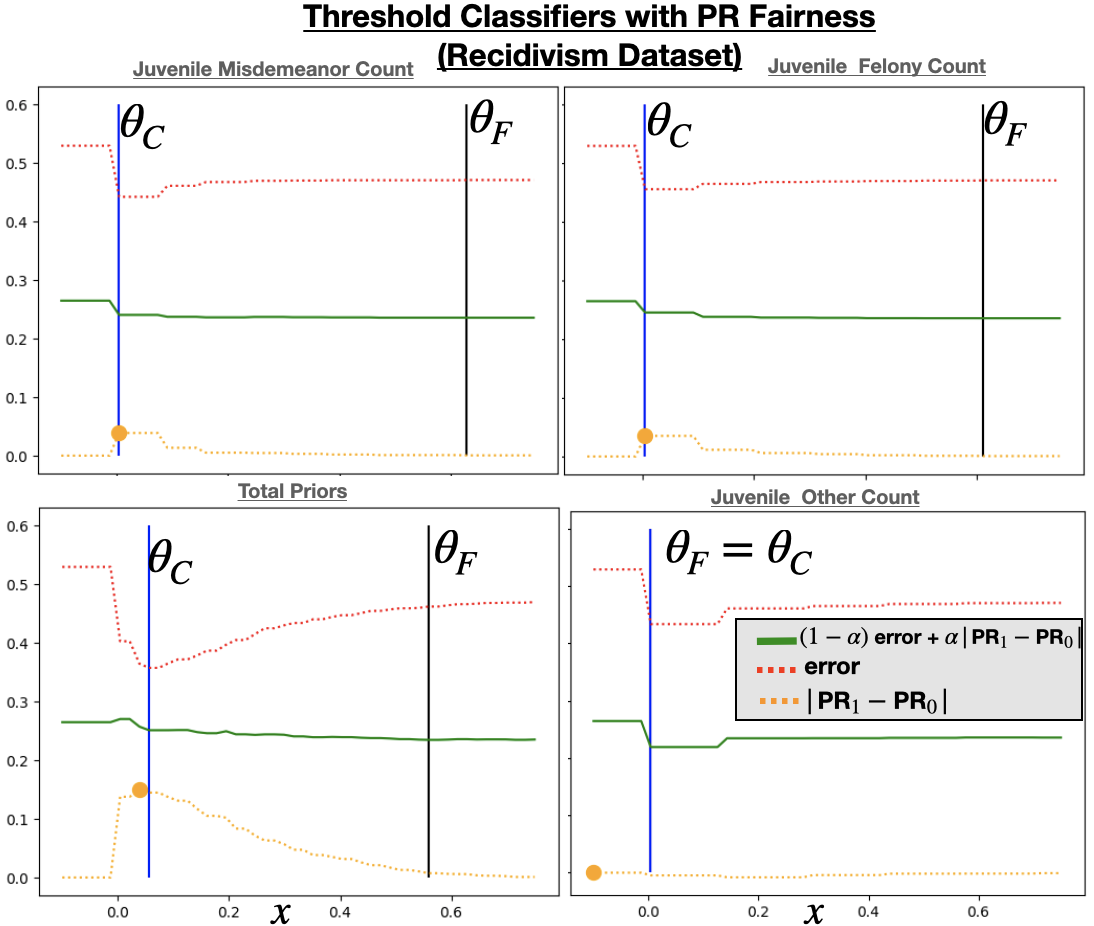}
    \caption{}
    \label{fig:single_var_3}
\end{figure}
\begin{figure}
    \centering
    \includegraphics[scale=0.52]{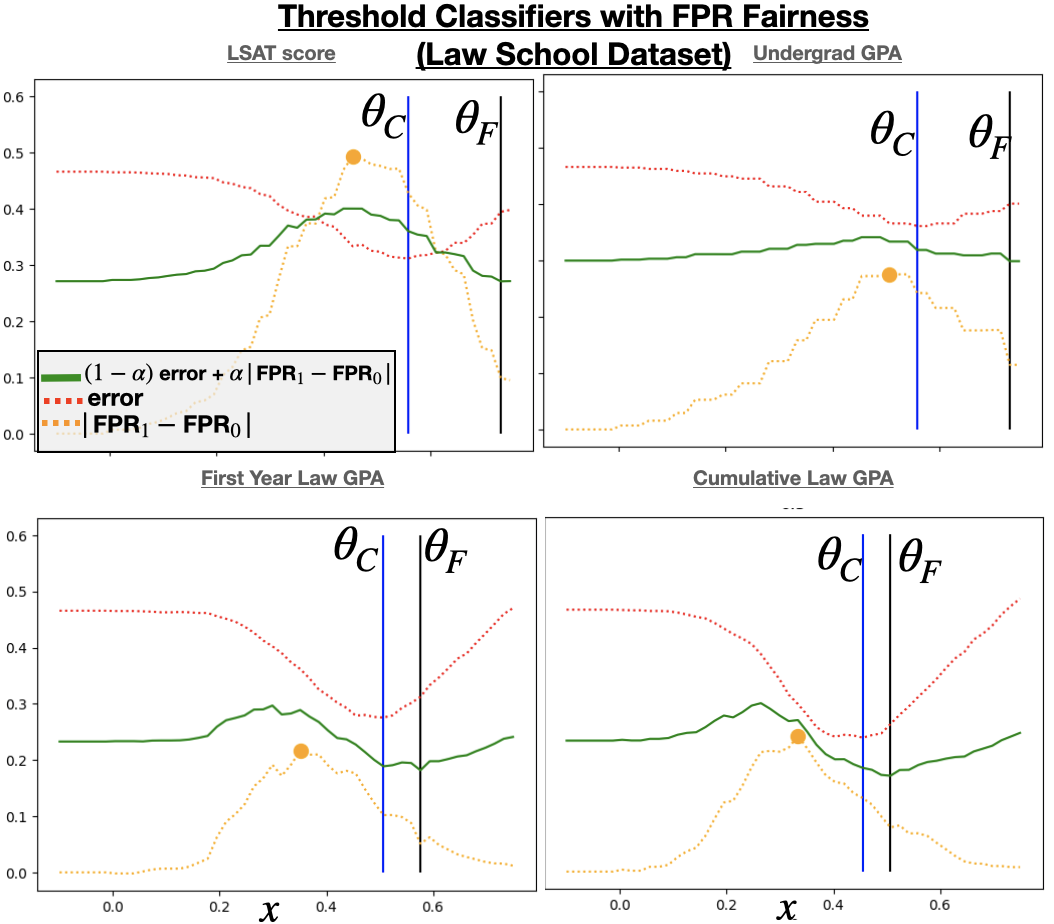}
    \caption{}
    \label{fig:single_var_4}
\end{figure}
\begin{figure}
    \centering
    \includegraphics[scale=0.52]{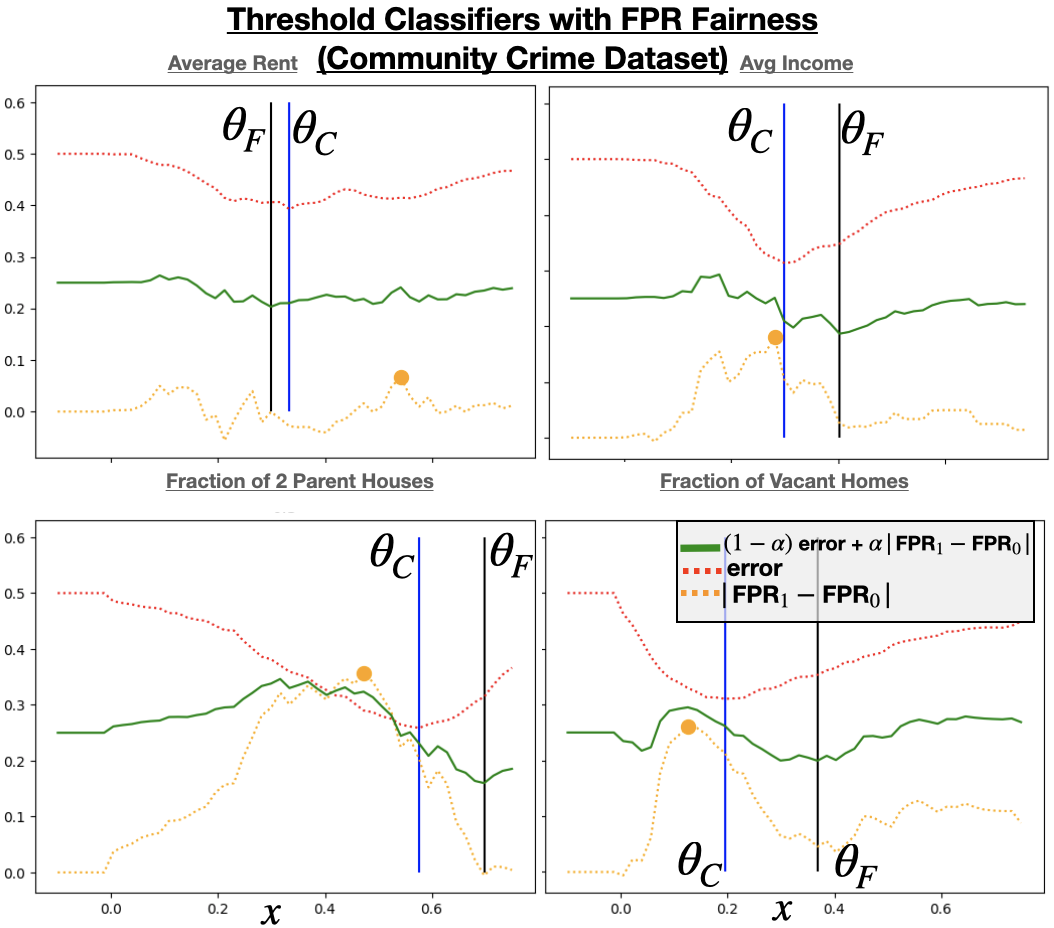}
    \caption{}
    \label{fig:single_var_5}
\end{figure}
\begin{figure}
    \centering
    \includegraphics[scale=0.52]{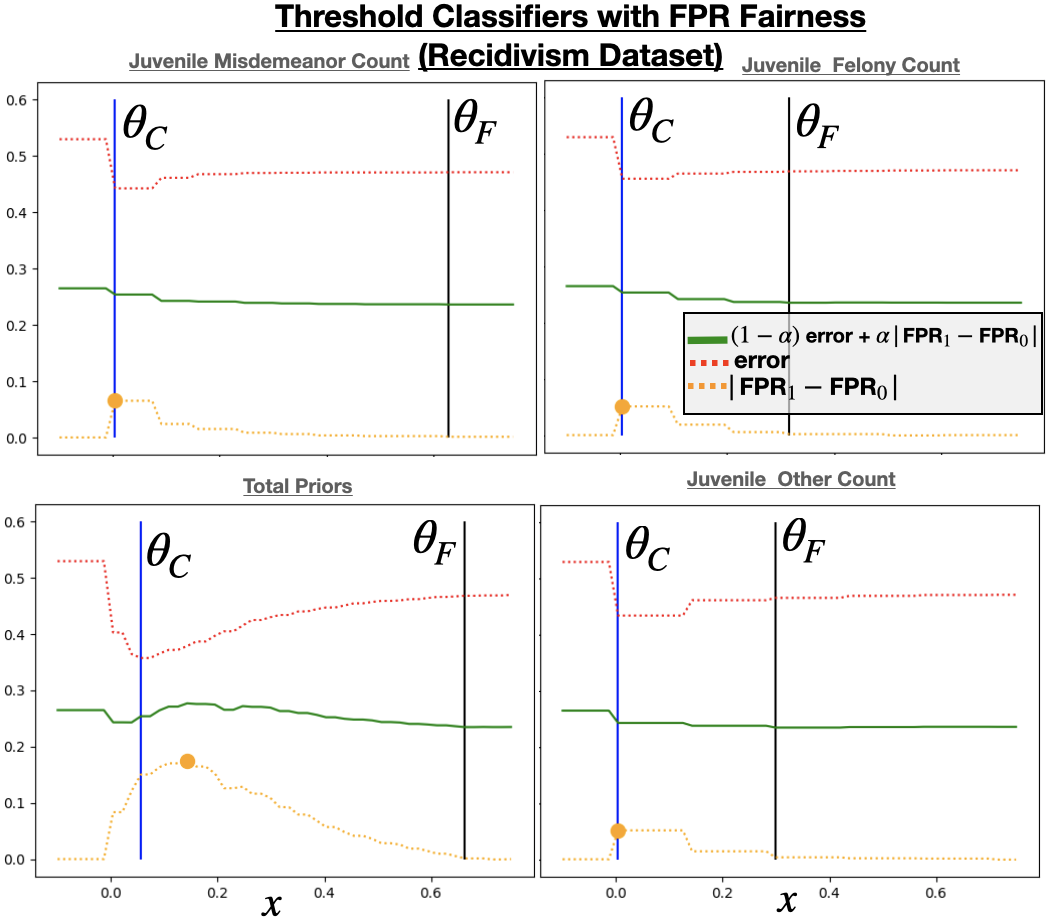}
    \caption{}
    \label{fig:single_var_6}
\end{figure}
\begin{figure}
    \centering
    \includegraphics[scale=0.52]{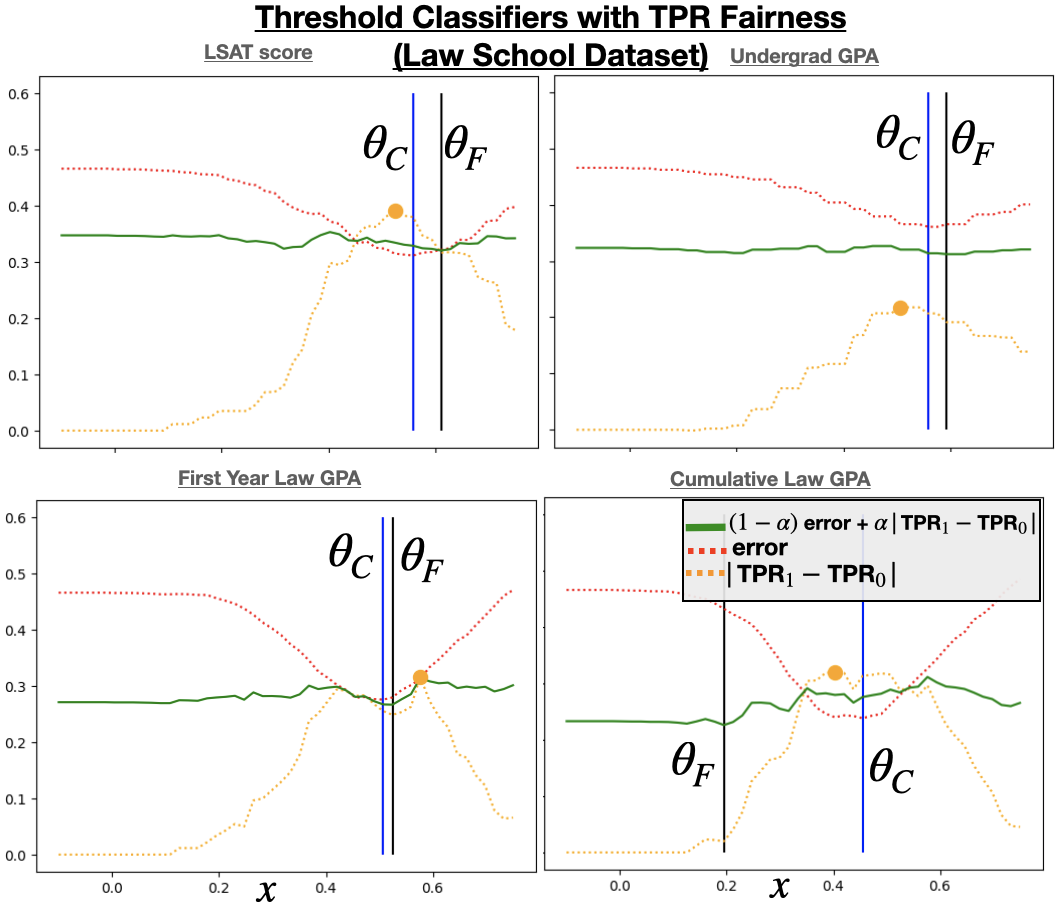}
    \caption{}
    \label{fig:single_var_7}
\end{figure}
\begin{figure}
    \centering
    \includegraphics[scale=0.52]{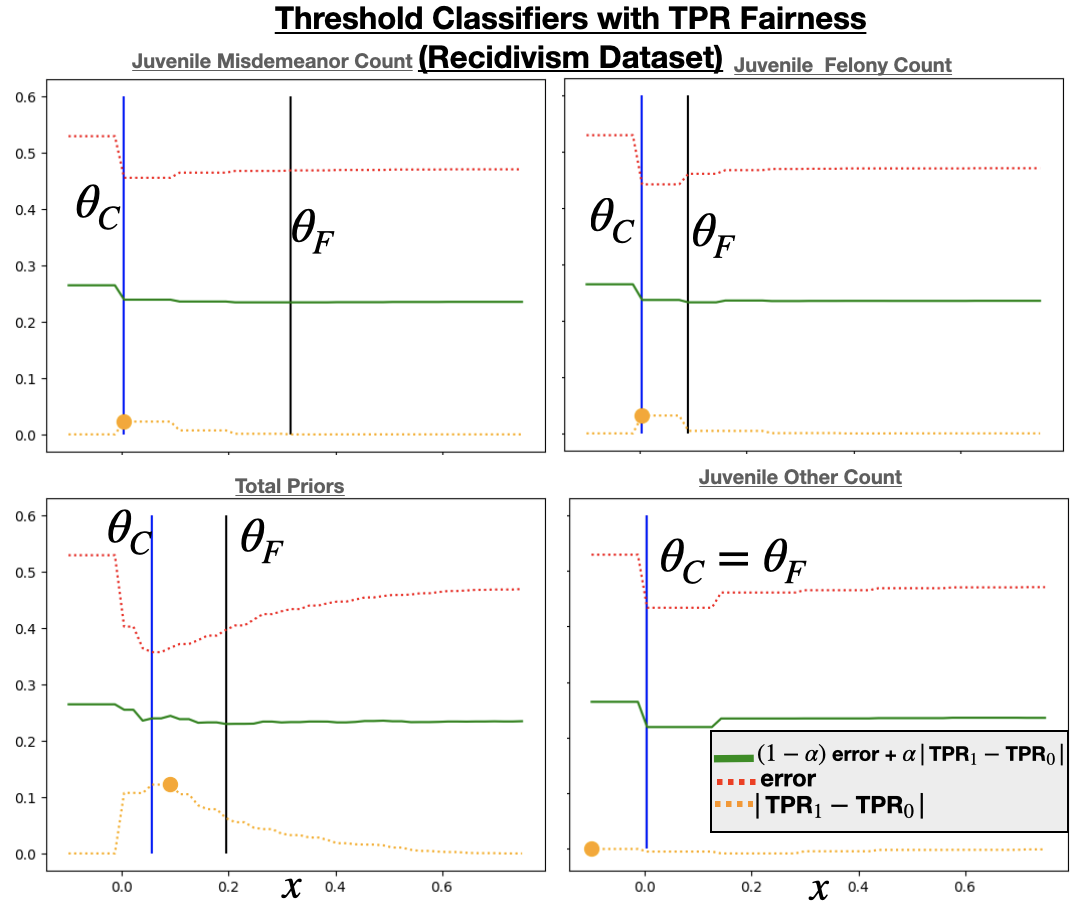}
    \caption{}
    \label{fig:single_var_8}
\end{figure}
\begin{figure}
    \centering
    \includegraphics[scale=0.52]{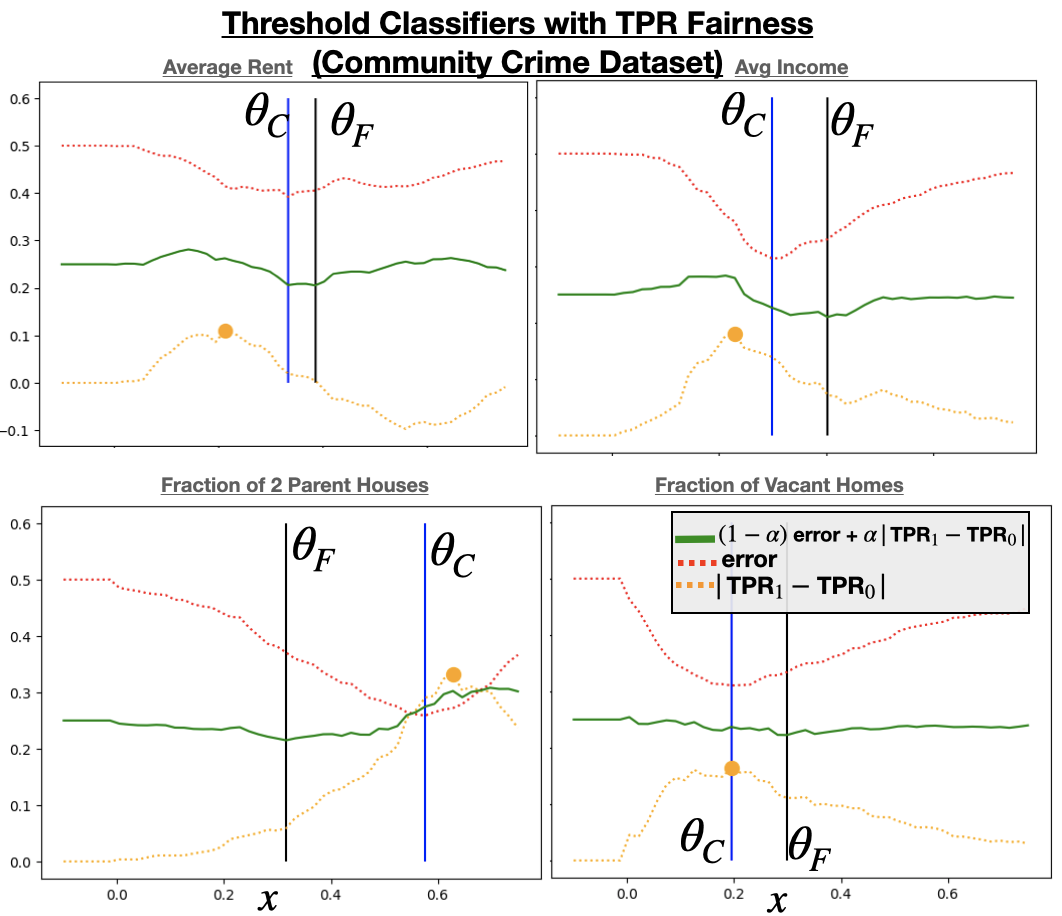}
    \caption{}
    \label{fig:single_var_9}
\end{figure}

\begin{figure}
    \centering
    \includegraphics[scale=0.35]{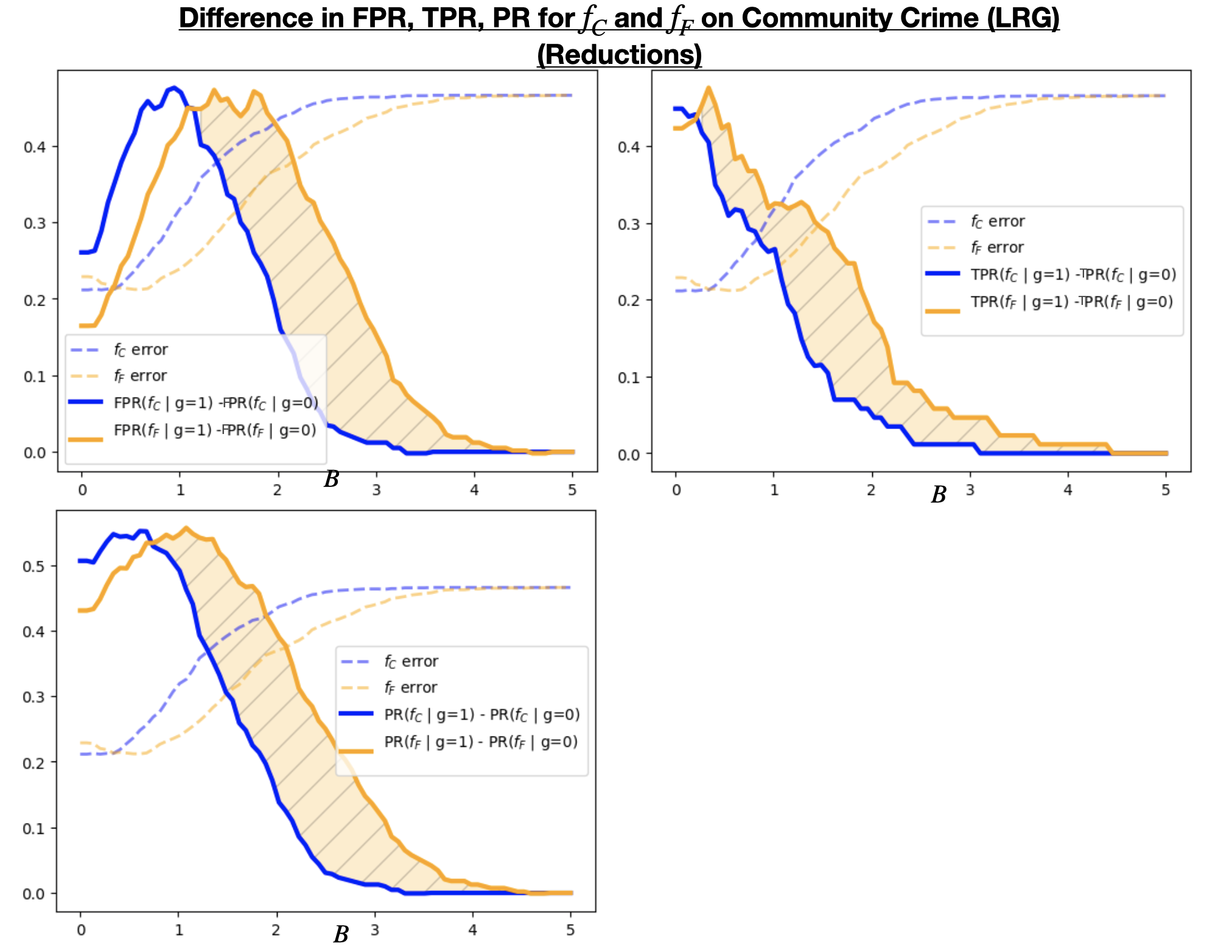}
    \caption{Difference in fairness for Logistic Regression (blue) and Reduction (orange) when agents are strategic under $l_2$ cost with budget $B$. The shaded region indicates instances in which the fair classifier is less fair than its baseline counterpart. Note that when the fairness of $f_F$ and $f_C$ are reversed, there accuracy is reversed as well. Once hitting their maximally unfair point both $f_C$ and $f_F$ tend to decreased in unfairness at equal rates. However, $f_F$ typically hits this maximally unfair point for a $B$ larger than $f_C$. }
    \label{fig:CC_LRG_red}
\end{figure}
\begin{figure}
    \centering
    \includegraphics[scale=0.4]{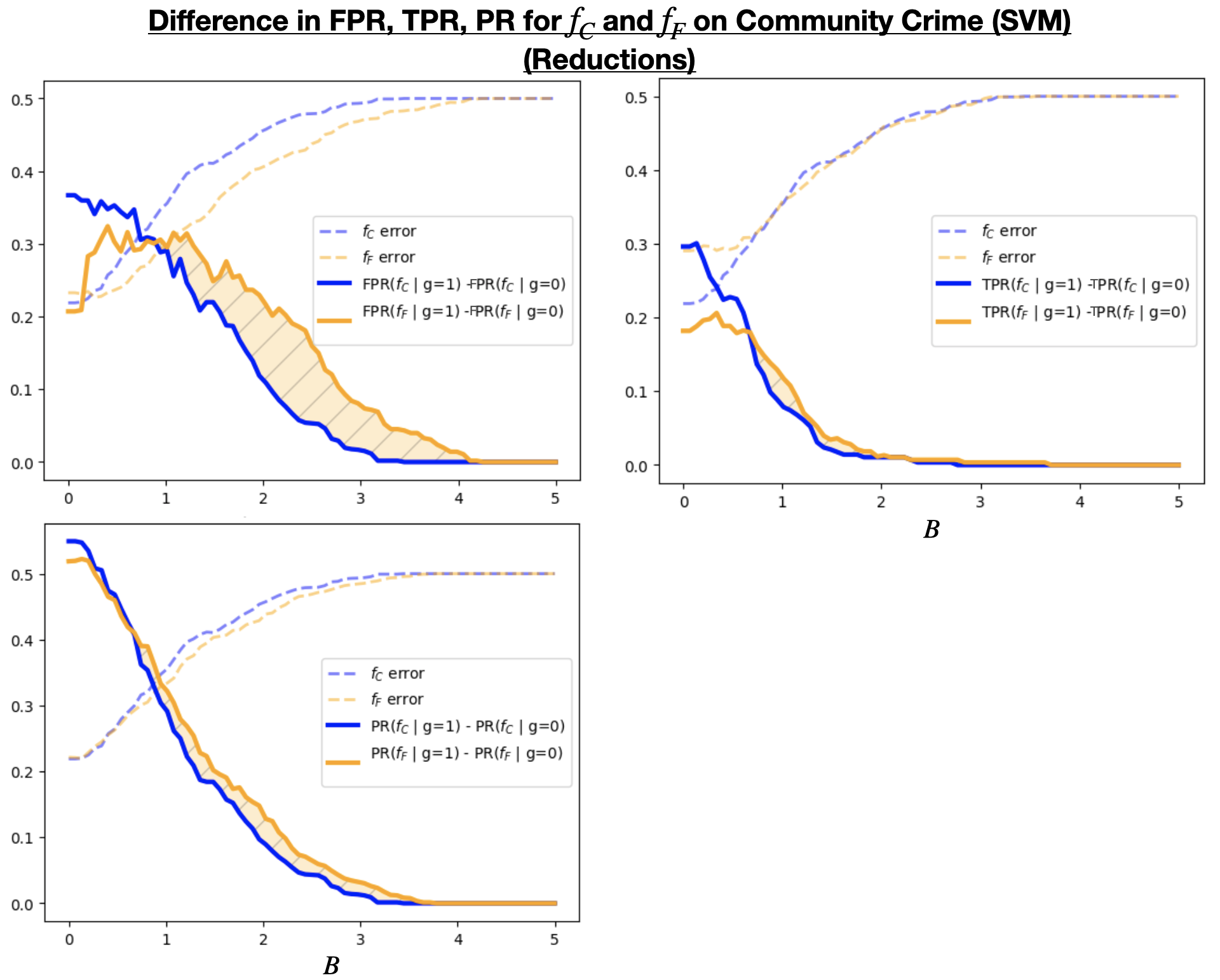}
    \caption{}
    \label{fig:CC_SVM_red}
\end{figure}
\begin{figure}
    \centering
    \includegraphics[scale=0.45]{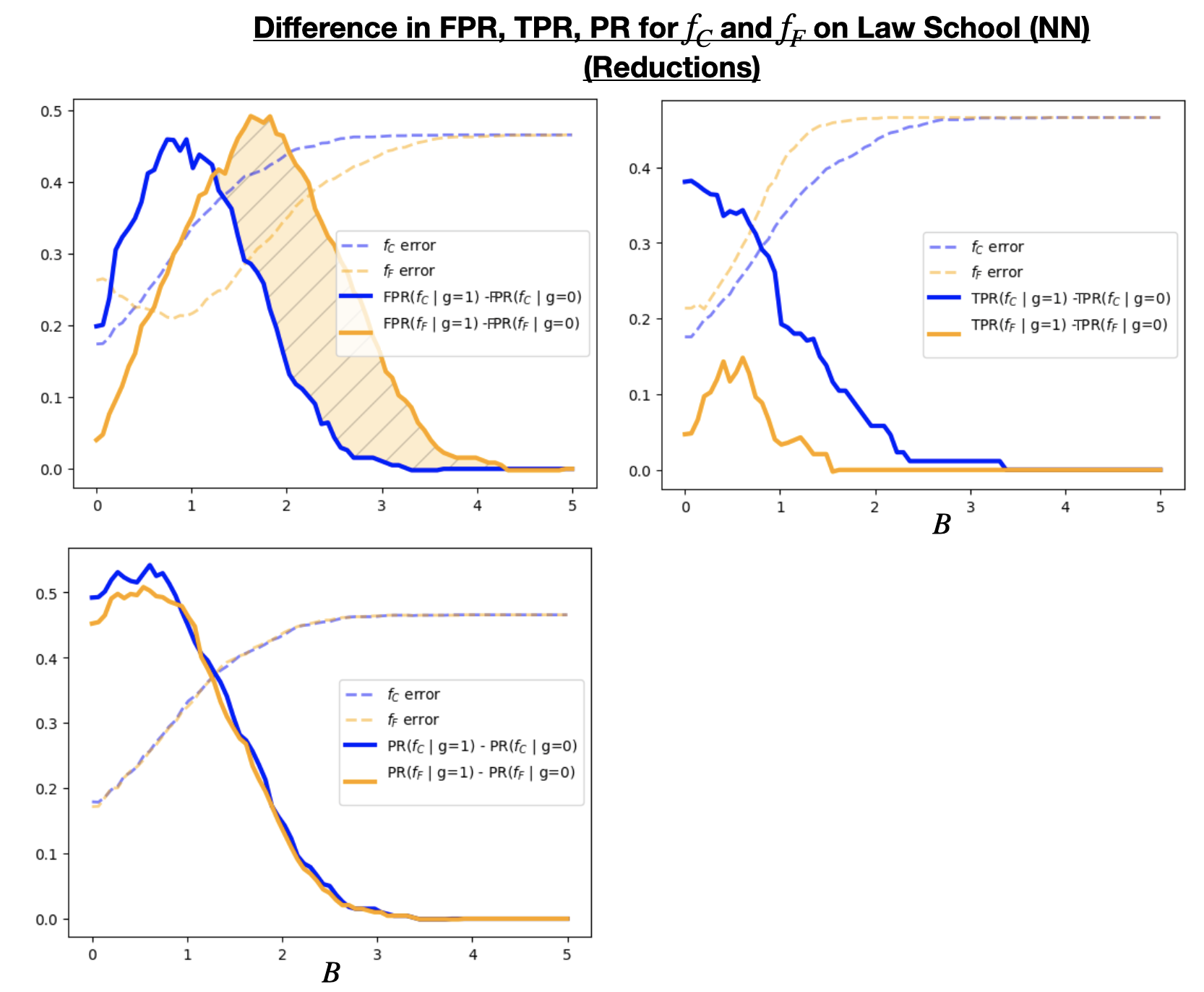}
    \caption{}
    \label{fig:LS_NN_red}
\end{figure}
\begin{figure}
    \centering
    \includegraphics[scale=0.45]{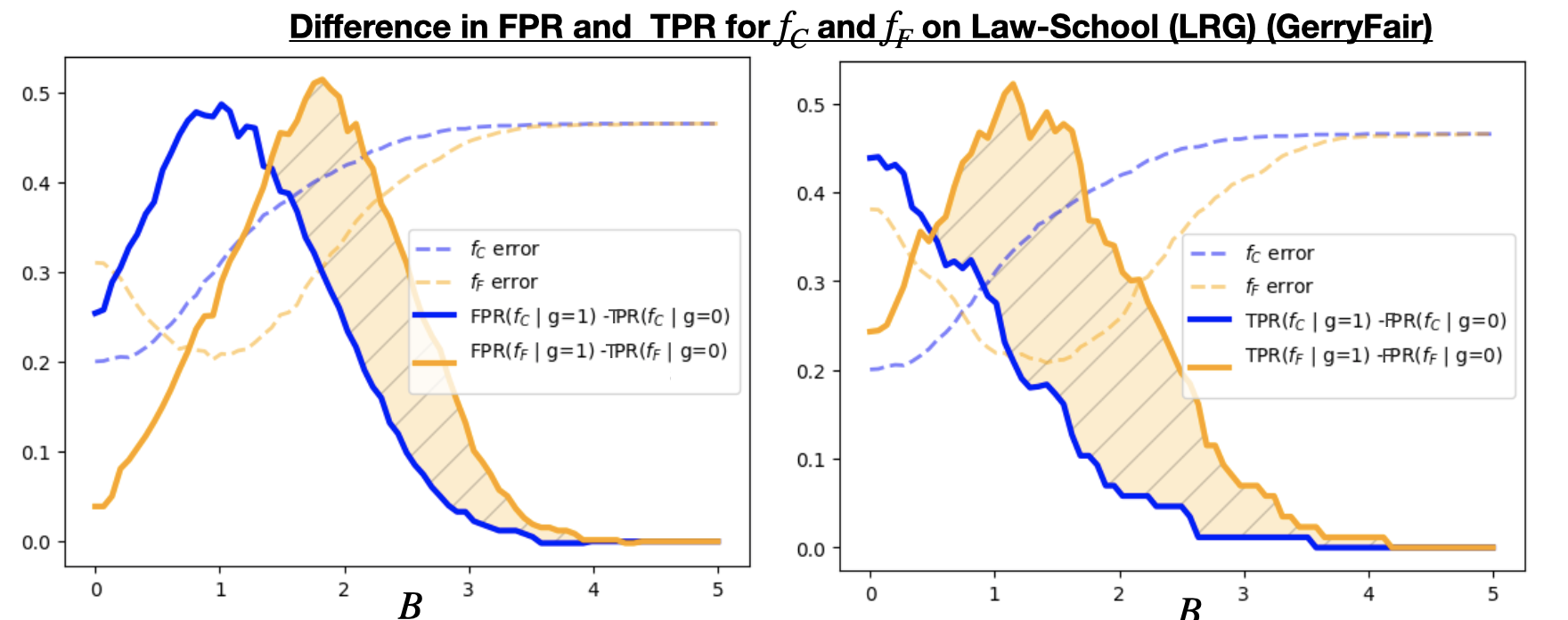}
    \caption{}
    \label{fig:LS_LRG_gerry}
\end{figure}
\begin{figure}
    \centering
    \includegraphics[scale=0.42]{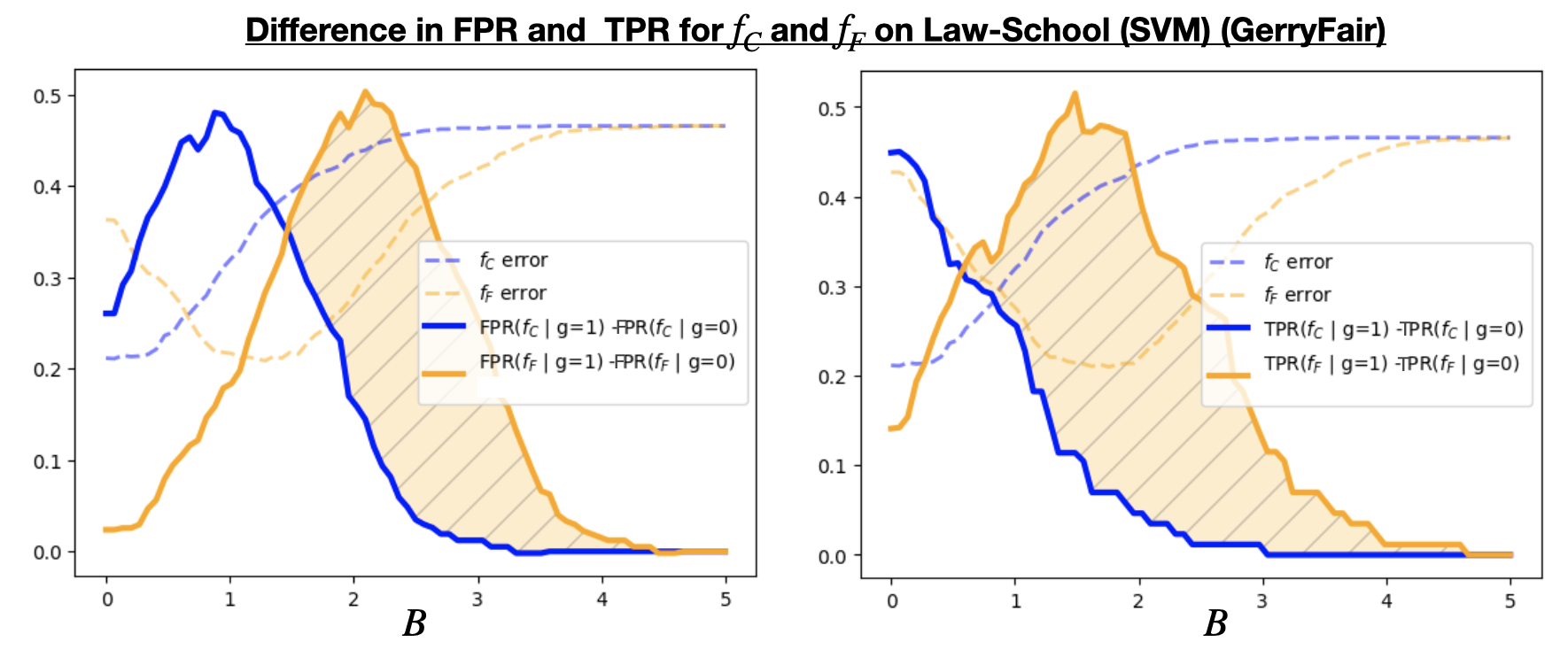}
    \caption{}
    \label{fig:LS_SVM_gerry}
\end{figure}

\begin{figure}
    \centering
    \includegraphics[scale=0.43]{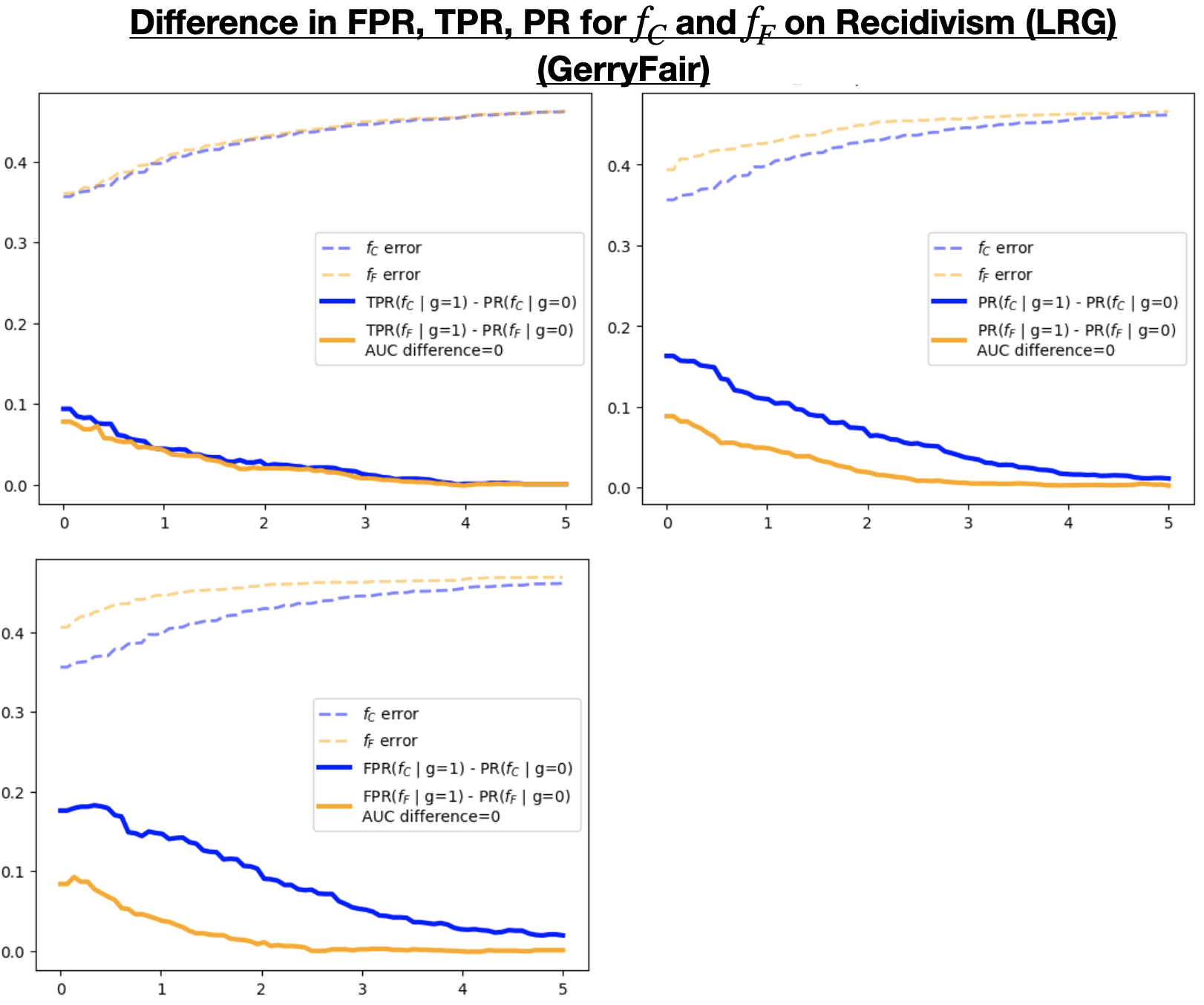}
    \caption{}
    \label{fig:redic_LRG_GF}
\end{figure}
\begin{figure}
    \centering
    \includegraphics[scale=0.45]{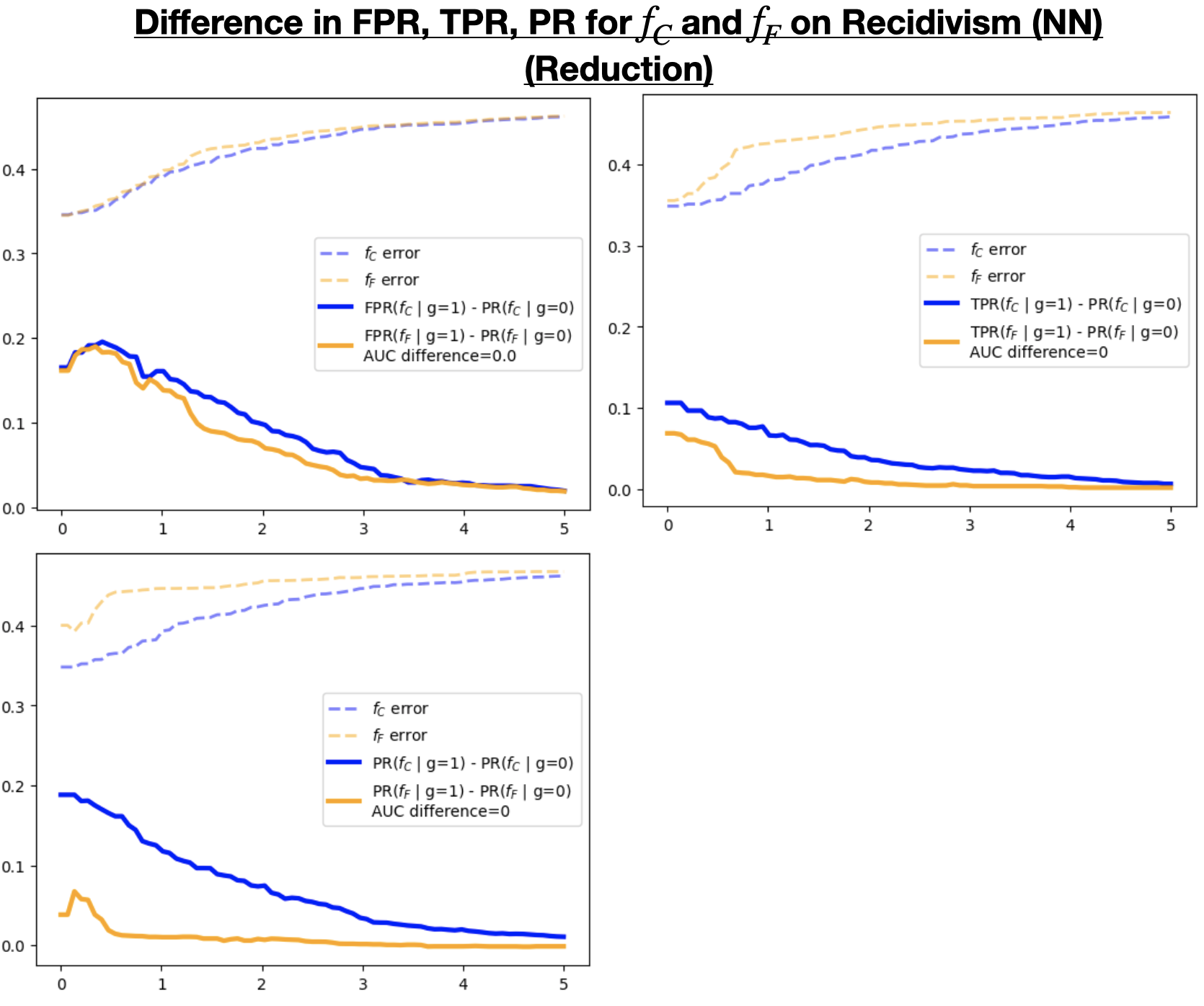}
    \caption{}
    \label{fig:multi_1}
\end{figure}
\begin{figure}
    \centering
    \includegraphics[scale=0.31]{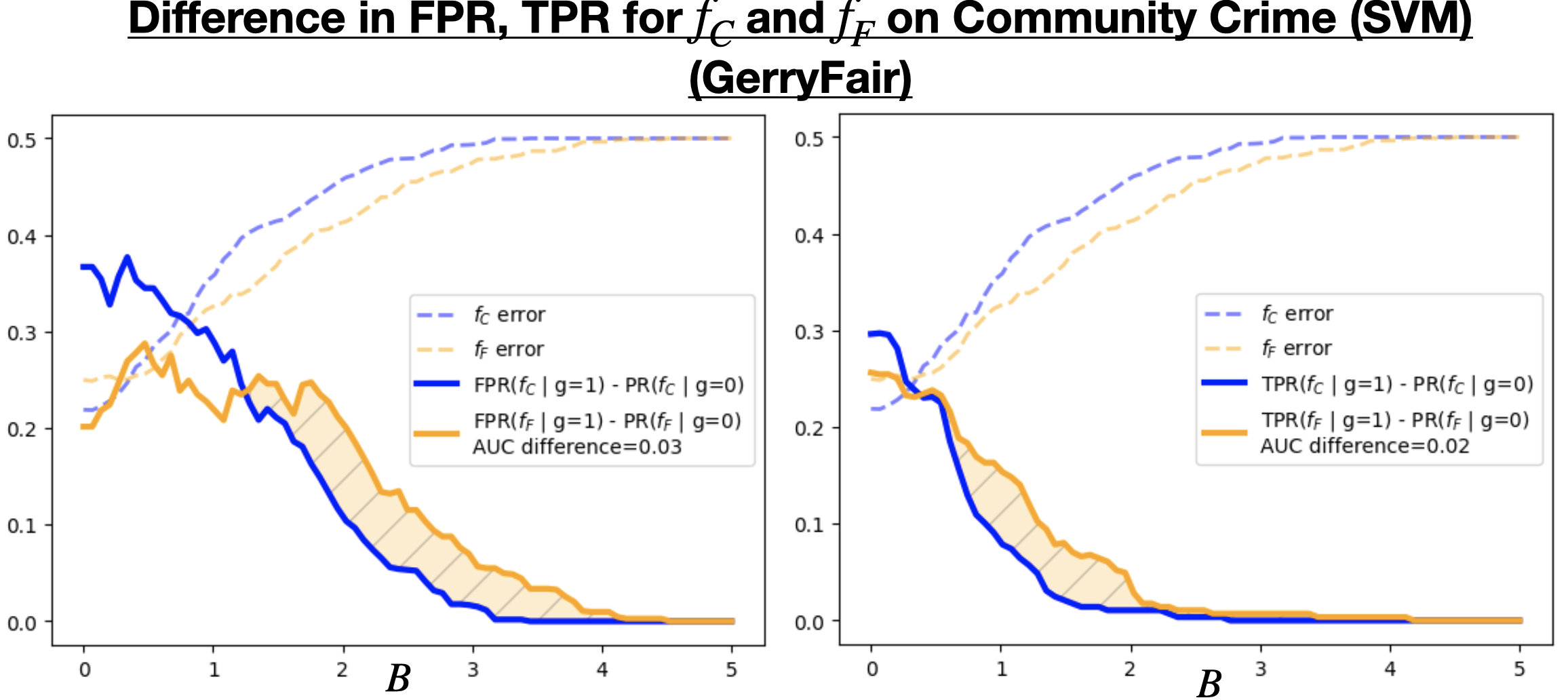}
    \caption{}
    \label{fig:multi_2}
\end{figure}

\end{document}